%
\documentclass[runningheads]{llncs} 
\usepackage[T1]{fontenc}
%
\usepackage{graphicx}
%

%
\usepackage{graphicx}
%


\usepackage{xcolor}

\usepackage{subfig}
\usepackage{wrapfig}

\usepackage{tikz}
\usetikzlibrary{arrows,calc,shapes,shadows,decorations.pathmorphing,decorations.pathreplacing,automata,shapes.multipart,positioning}

\usepackage{mathtools}
\usepackage{amsmath,latexsym,amssymb,amsfonts,amscd,stmaryrd,textcomp} 
\usepackage{pifont}
\usepackage{mathrsfs}
\usepackage{upgreek}

\usepackage{booktabs}

\usepackage{multicol}
\usepackage{makecell}

\usepackage[ruled,vlined,linesnumbered]{algorithm2e}

\usepackage{csquotes}

\usepackage{hyperref}
\hypersetup{%
	colorlinks=true,           
	allcolors=blue!70!black,   
	pdfstartview=Fit,          
	breaklinks=true}

\usepackage{xspace} 
\usepackage{xargs} 

\usepackage[disable]{todonotes}
\usepackage{etoolbox} 

\usepackage{soul}

\makeatletter
\@ifclassloaded{lipics-v2021}{%
}{
	\usepackage{thmtools}
	\usepackage[shortlabels]{enumitem}
	\usepackage[nameinlink,noabbrev]{cleveref} 
	\renewcommand{\cref}{\Cref} 
}
\makeatother 
\spnewtheorem*{openproblem}{Open Problem}{\itshape}{}

\newrobustcmd{\christof}[2][]{{\color{black}\todo[color=green!20,#1]{{\bf Christof:} #2}}\ignorespaces} 
\newrobustcmd{\lina}[2][]{{\color{black}\todo[color=cyan!20,#1]{{\bf Lina:} #2}}\ignorespaces} 

\definecolor{green100}{RGB}{87,171,39} 

\newcommand\R{\mathbb R}

\newcommand\N{\mathbb N}

\renewcommand{\iff}{\Leftrightarrow}

\newcommand{\Max}[1]{\text{max} \{#1\}}

\newcommand{\Dom}{\textit{Dom}}

\makeatletter
\newsavebox{\@brx}
\newcommand{\llangle}[1][]{\savebox{\@brx}{\(\m@th{#1\langle}\)}%
	\mathopen{\copy\@brx\kern-0.5\wd\@brx\usebox{\@brx}}}
\newcommand{\rrangle}[1][]{\savebox{\@brx}{\(\m@th{#1\rangle}\)}%
	\mathclose{\copy\@brx\kern-0.5\wd\@brx\usebox{\@brx}}}
\newcommand{\llcombined}{\mathopen{|\mkern-3mu\langle}}
\newcommand{\rrcombined}{\mathclose{\rangle\mkern-3mu|}}
\makeatother


\newcommand\superalpha[1][]{{#1}^{\strat}}
\newcommand\superbeta[1][]{{#1}^{\strattwo}}
\newcommand\superalphabeta[1][]{{#1}^{\strat {\leftarrow} \strattwo}}

\newcommand\indexi[1][]{{#1}_{i}}

\newcommand\superstar[1][]{{#1}^*}
\newcommand\superstaropt[1][]{{#1}^*_{\opt}}
\newcommand\superstars[1][]{{#1}^*_{\state}}

\tikzset{st/.style={draw,circle}} 
\tikzset{ap/.style={font={\footnotesize},align=center}}
\tikzstyle{dist}=[circle, inner sep=0.8pt, solid, draw=black,fill=black]
\tikzset{trans/.style={font={\footnotesize},thick}}




\newcommand{\states}{S}
\newcommand\Act{\textit{Act}}
\newcommand\Trans{\mathbf{P}}
\newcommand\AP{\mathsf{AP}}
\newcommand{\labelingfct}{L}

\newcommand\Paths[1][\tbg]{\textit{Paths}^{#1}}
\newcommandx{\Pathsfrom}[2][1=\tbg, 2=\state, usedefault]{\textit{Paths}^{#1}_{#2}}
\newcommand\finPaths[1][\tbg]{\textit{Paths}_{\textit{fin}}^{#1}}
\newcommandx{\finPathsfrom}[2][1=\tbg, 2=\state, usedefault]{\textit{Paths}_{\textit{fin}, #2}^{#1}}

\newcommand{\state}{s}
\newcommand{\action}{a}
\newcommand{\actiontwo}{b}
\newcommand{\actionthree}{c}
\newcommand{\ap}{\textsf{p}}
\newcommand{\aptwo}{\textsf{q}}

\newcommand\dtmc{\mathcal{D}}
\newcommand\dtmctup[1] [\refl]{(#1[\states], \allowbreak #1[\Trans], \allowbreak #1[\AP], \allowbreak #1[L] )}


\newcommand{\tbg}{\mathcal{G}}
\newcommand{\tbgtup}[1] [\refl] {(#1[\agents], \allowbreak #1[\states], \allowbreak #1[\Act], \allowbreak #1[\Trans], \allowbreak #1[\AP], \allowbreak #1[\labelingfct])}
\newcommand{\agents}{\textit{Ags}}

\newcommand{\agent}{g}
\newcommand{\setofagents}{A}
\newcommand{\setofagentstwo}{B}

\newcommand\sched{\sigma}
\newcommand{\modes}{Q}
\newcommand{\mode}{q}
\newcommand{\modef}{\textit{mode}}
\newcommand{\start}{\textit{init}}
\newcommand{\act}{\textit{act}}

\newcommand{\schedtup}[1] [\refl]{(#1[\modes], \allowbreak #1[\start], \allowbreak #1[\modef], \allowbreak #1[\act])}

\newcommand{\strat}{\alpha} 
\newcommand{\strattwo}{\beta} 

\newcommand{\strattup}[1] [\refl]{\schedtup[{#1}]}
\newcommandx{\Strats}[2][1=\tbg, 2=\setofagents, usedefault] {\textit{Str}^{#1}_{#2}} 

\newcommand{\partelt}[1][i]{S_{#1}} 

\newcommand{\swspepsne}{SW-SP-$\epsilon$-NE\xspace}
\newcommand{\spepsne}{subgame perfect $\epsilon$-NE\xspace} 
\newcommand{\objone}{X^C}
\newcommand{\objtwo}{X^{\overline{C}}}


\renewcommand{\phi}{\varphi}
\renewcommand\implies{\Rightarrow}
\newcommand{\true}{\top} 

\newcommand{\variable}[1]{\hat{#1}}
\newcommand\varstate{\variable{\state}}
\newcommand{\VarsState}{\variable{\states}} 

\newcommand{\existsStrat}[2]{\llangle #1 \rrangle_{#2} \ }
\newcommand{\forallStrat}[2]{\| #1 \|_{#2} \ }
\newcommand{\setofvariables}{\variable{R}} 

\newcommand{\mapStrat}{\textit{map}_{\Strats[ ][ ]}}

\newcommand{\statetup}{\vec{\state}}
\newcommand{\statetupMap}[1][\state]{\vec{#1}}

\newcommand{\Ctx}{\Gamma}
\newcommandx{\Ctxtup}[3][1=\tbg, 2=\mapStrat, 3=\statetup, usedefault]{(#1, #2, #3)}

\newcommand{\modelsMD}{\models_{\textit MD}}

\newcommand{\modelsHyperSGL}{\models_{\HyperSGL}}
\newcommand{\modelsHyperPCTL}{\models_{\HyperPCTL}}
\newcommand{\modelsSGL}{\models_{\SGL}}
\newcommand{\modelsHyperSGLHD}{\models_{\HyperSGL, \textit HD}}
\newcommand{\modelsHyperSL}{\models_{\HyperSL}}
\newcommand{\modelsrPATL}{\models_{\rPATL}}

\newcommand{\projS}[1]{{#1}_\states}

\newcommand{\emptymap}{\{\}}

\newcommand{\HyperPCTL}{\textsf{\smaller HyperPCTL}\xspace}

\newcommand{\LTL}{\textsf{\smaller LTL}\xspace}
\newcommand{\CTL}{\textsf{\smaller CTL}\xspace}
\newcommand{\CTLstar}{\textsf{\smaller CTL$^*$}\xspace}
\newcommand{\ATL}{\textsf{\smaller ATL}\xspace}
\newcommand{\ATLstar}{\textsf{\smaller ATL$^*$}\xspace}

\newcommand{\PCTL}{\textsf{\smaller PCTL}\xspace}
\newcommand{\PCTLstar}{\textsf{\smaller PCTL$^*$}\xspace}

\newcommand{\HyperLTL}{\textsf{\smaller HyperLTL}\xspace}

\newcommand{\HyperCTLstar}{\textsf{\smaller HyperCTL$^*$}\xspace}

\newcommand{\HyperATLstar}{\textsf{\smaller HyperATL$^*$}\xspace}

\newcommand{\SL}{\textsf{\smaller SL}\xspace}
\newcommand{\HyperSL}{\textsf{\smaller HyperSL}\xspace}
\newcommand{\GL}{\textsf{\smaller GL}\xspace}

\newcommand{\HyperPCTLstar}{\textsf{\smaller HyperPCTL$^*$}\xspace}

\newcommand{\AHyperPCTL}{\textsf{\smaller AHyperPCTL}\xspace}

\newcommand{\PHL}{\textsf{\smaller PHL}\xspace}

\newcommand{\SGL}{\textsf{\smaller SGL}\xspace}
\newcommand{\PATL}{\textsf{\smaller PATL}\xspace}
\newcommand{\PATLstar}{\textsf{\smaller PATL$^*$}\xspace}
\newcommand{\rPATL}{\textsf{\smaller rPATL}\xspace}
\newcommand{\rPATLstar}{\textsf{\smaller rPATL$^*$}\xspace}
\newcommand{\PATLNE}{\textsf{\smaller PATL$_{\text{NE}}$}\xspace}

\newcommand{\HyperSGL}{\textsf{\smaller HyperSt$^{\mathsf{2}}$}\xspace}

\newcommand\Next{\mathbin{\bigcirc}\,}
\newcommand\Until{\mathbin{\mathsf{U}}}
\newcommand\Globally{\mathbin{\square}}
\newcommand\Finally{\mathbin{\lozenge}}


\newcommand{\Prob}{\mathbb{P}}


\newcommand\pschedq{\varphi^{str}}
\newcommand\pstateq{\varphi^{sta}} 
\newcommand\pnonquant{\varphi^{nq}}
\newcommand\pprob{\varphi^{pr}}
\newcommand\ppath{\varphi^{path}}

\newcommand\varsched{\hat{\sched}}

\newcommand{\varstrat}{\variable{x}}
\newcommand{\varstratt}{\variable{y}}

\newcommand{\dra}{\mathcal{A}}

\newcommand{\translation}{T}
\newcommand{\init}{\textit{init}}
\newcommand{\sinit}{\state_{\init}}

\newcommand{\PATLfrag}{\textsf{\smaller PATL$_{\text{1-NE}}$}\xspace}

\newcommand{\nasheq}{\strat^*}
\newcommand{\schedcomp}{\strat}

\newcommand{\varfixed}{\varstate_0}
\newcommand{\varfixedp}{\varstate'_0}
\newcommand{\varopt}{\varstate_1} 
\newcommand{\varoptp}{\varstate'_1} 
\newcommand{\varcomp}{\varstate_2} 
\newcommand{\varcompp}{\varstate'_2} 

\newcommand{\fixed}{\state}
\newcommand{\opt}{\state_1}
\newcommand{\comp}{\state_2}

\newcommand{\phinested}{\phi_{\textit{nested}}}

\newcommand{\ProbSGL}{\mathbb{P}}
\newcommand{\SGLfrag}{$\SGL_{\mathsf{\scriptstyle LTL}}$\xspace}
\newcommand{\tbgpq}{\tbg_{\ap\aptwo}}

\newcommand{\pathmap}{\Uppi}
\newcommand{\HyperSLfrag}{$\HyperSL_{uni}$\xspace} 
\newcommand{\varpath}{\hat{\pi}}
\newcommand{\VarsPath}{\hat{\Pi}}
\newcommand{\VarsStrat}{\variable{X}}

\newcommand{\Vars}{\textit{Vars}}
\newcommand{\Agt}{\textit{Agt}}
\newcommand{\RefVar}{\textit{RefVar}}
\newcommand{\valundefined}{\texttt{undef}}

\newcommand{\phipath}{\phi_{\text{path}}} 

\newcommand{\Deltafixed}{\Delta^{\vec{\strat}}} 
\newcommand{\pathmapfixed}{\pathmap^{\vec{\strat}}} 
\newcommand{\statetupMapfixed}{\statetupMap^{\vec{\strat}}} 
\newcommand{\mapStratfixed}{\mapStrat^{\vec{\strat}}}

\newcommand{\istate}{r}

\newcommand{\phiex}{\phi_{ex}}

\newcommand{\pif}{\pi}




\newcommand{\Type}{\textsf{Type}}

\renewcommand{\P}{\textsf{P}\xspace}
\newcommand{\NP}{\textsf{NP}\xspace}
\newcommand{\PSPACE}{\textsf{PSPACE}\xspace}
\newcommand{\coNP}{\textsf{coNP}\xspace}
\newcommand{\EXPTIME}{\textsf{EXPTIME}\xspace}

\SetKwInOut{Input}{Input}
\SetKwInOut{Output}{Output}
\SetKwProg{Fn}{Function}{}{}
\DontPrintSemicolon

\SetNlSkip{0.3ex} 
\SetAlgoSkip{}
\setlength{\algomargin}{2.2ex}

\SetKwFunction{FCheck}{check}
\SetKwFunction{FCalc}{calcProb}
\SetKwFunction{FSucc}{succ}

\newcommand{\lIfElse}[3]{\lIf{#1}{#2 \textbf{else}~#3}}

\newcommand{\statedict}[1][\state]{\statetupMap[#1]}
\newcommand{\algorithmicnot}{\normalfont{\textbf{not}} }
\newcommand{\algorithmicand}{\normalfont{\textbf{and}} }

\newcommand{\Sat}{\textit{Sat}}

\SetKwFunction{FMain}{Main}
\SetKwFunction{FSem}{Sem}
\SetKwFunction{FSemProb}{SemProb}
\SetKwFunction{FSemProbUntil}{SemProbUntil}
\SetKwFunction{FTruth}{Truth}
\SetKwFunction{FSemUUntil}{SemUnboundedUntil}

\newcommand{\numstatequant}{n}

\newcommand{\phifull}{Q_1 \varstate_1 \ldots Q_\numstatequant \varstate_\numstatequant .\ \pschedq}

\newcommand{\countQuant}{\textit{ctr}}
\newcommand{\currMap}{\textit{map}}

\newcommand{\actiontup}{\vec{\action}}
\newcommand{\modetup}{\vec{\mode}}
\newcommand{\fsucc}{\textit{succ}}
\newcommand{\currInd}{\textit{curr}}

\newcommand{\stratenc}{\strat}
\newcommand{\modefenc}{\modef}
\newcommand{\holds}{\textit{h}}
\newcommand{\holdsToInt}{\textit{hInt}}
\newcommand{\prob}{\textit{pr}}
\newcommandx{\stratencWith}[3][1=j, 2=\state, 3=\action, usedefault]{\stratenc_{#1, #2, #3}}
\newcommandx{\modefencWith}[4][1=j, 2=\state, 3=\mode, 4=\mode', usedefault]{\modefenc_{#1, #2, #3, #4}}
\newcommandx{\holdsWith}[2][1=\statetup, 2=\phi, usedefault]{\holds_{#1, #2}}
\newcommandx\holdsToIntWith[2][1=\statetup, 2=\phi, usedefault]{\holdsToInt_{#1, #2}}
\newcommandx\probWith[2][1=\statetup, 2=\phi, usedefault]{\prob_{#1, #2}}

\newcommand{\orcid}[1]{\href{https://orcid.org/#1}{\includegraphics[width=3mm]{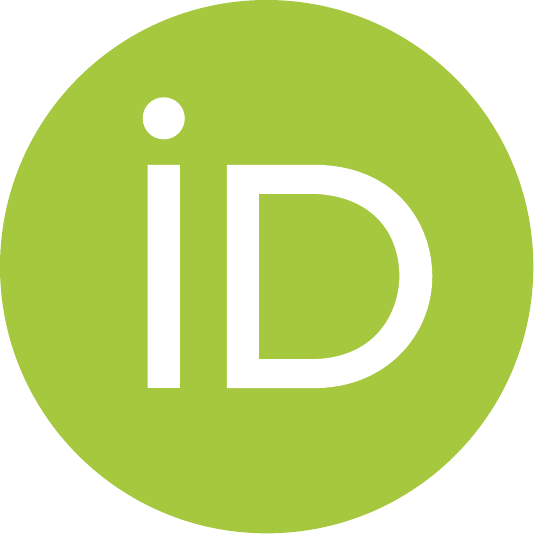}}}

\newtoggle{extended} 
\toggletrue{extended} 

\begin{document}
\title{A Hyperlogic for Strategies in Stochastic Games
	\iftoggle{extended}{(Extended Version)}{}
} 
%
%
\author{Lina Gerlach\orcid{0009-0002-5506-6181} \and
Christof L{\"o}ding\orcid{0000-0002-1529-2806} \and
Erika {\'A}brah{\'a}m\orcid{0000-0002-5647-6134}}
\authorrunning{L. Gerlach et al.}
%
\institute{RWTH Aachen University, Aachen, Germany\\
	\email{\{gerlach,loeding,abraham\}@cs.rwth-aachen.de}}
\maketitle              

\begin{abstract}
	We propose a probabilistic hyperlogic called \HyperSGL that can express hyperproperties of strategies in turn-based stochastic games. 
	To the best of our knowledge, \HyperSGL is the first hyperlogic for stochastic games. 
	\HyperSGL can relate probabilities of several independent executions of strategies in a stochastic game.
	For example, in \HyperSGL it is natural to formalize optimality, i.e., to express that some strategy is better than all other strategies, or to express the existence of Nash equilibria.
	We investigate the expressivity of \HyperSGL by comparing it to existing logics for stochastic games, as well as existing hyperlogics.
	Though the model-checking problem for \HyperSGL is in general undecidable, 
	we show that it becomes decidable for bounded memory and is in \EXPTIME and \PSPACE-hard over memoryless deterministic strategies, and we identify a fragment for which the model-checking problem is \PSPACE-complete.
\end{abstract}

\section{Introduction}
\label{sec:intro}

\emph{Hyperproperties}, formally defined as properties of sets of traces \cite{clarksonHyperproperties2010}, can relate different executions of a system. 
Hyperlogics like \HyperLTL and \HyperCTLstar \cite{clarksonTemporalLogics2014} have been proposed to express hyperproperties of discrete nondeterministic systems. 
In the presence of random components in the system behavior, \emph{probabilistic hyperproperties} can express probabilistic relations between several independent executions of a probabilistic, potentially nondeterministic, system.
Probabilistic hyperlogics for \emph{Markov decision processes} (\emph{MDPs}) or \emph{discrete-time Markov chains} (\emph{DTMCs}) have been proposed in \cite{abrahamHyperPCTLTemporal2018,abrahamProbabilisticHyperproperties2020,dimitrovaProbabilisticHyperproperties2020,wangStatisticalModel2021}. 

Since MDPs can be viewed as single-player stochastic games, it is natural to extend this setting to multiple players (also called \emph{agents}).
%
In \emph{multi-player stochastic games}, several agents may influence the choice of which action is taken, and each action is associated with a probabilistic distribution over successor states. 
In the class of turn-based games, at every step exactly one of the agents can make a choice, 
while in concurrent games, every agent may make a choice at every step. 

On the one hand, temporal logics like \rPATL \cite{kwiatkowskaAutomatedVerification2018,kwiatkowskaEquilibriabasedProbabilistic2019} and \SGL \cite{baierStochasticGame2012} allow to specify properties of stochastic games, but not at the level of hyperproperties.
On the other hand, temporal logics like \SL \cite{chatterjeeStrategyLogic2007} or \ATLstar \cite{alurAlternatingtimeTemporal2002} have been lifted to hyperlogics, \HyperSL \cite{beutnerHyperStrategy2024} and \HyperATLstar \cite{beutnerHyperATLLogic2023}, to express hyperproperties of concurrent non-stochastic games, but without considering probabilistic behavior.
However, to the best of our knowledge, there does not exist any probabilistic hyperlogic for stochastic games yet.

In this paper, we propose the logic \HyperSGL for \textbf{Hyper}properties of \textbf{St}rategies in turn-based \textbf{St}ochastic games.
\HyperSGL can relate strategies and compare induced probabilities in the resulting probabilistic computation trees.
For example, we can 
relate several independent executions of a strategy for some player against different strategies of other players.
It is thus natural to formalize in \HyperSGL a notion of ``optimality'' for strategies, as illustrated by the following example.

\begin{example}
	\label{ex:optimal}
	Consider a two-player turn-based stochastic game with a unique initial state labeled with $\init$, where agent 1 wants to reach some set of states $T$ (labeled $t$). 
	We wish to express that agent 1 has a strategy that is optimal in the sense that for any fixed behavior of agent 2, it achieves the highest winning probability under all of its other strategies.  
	The below \HyperSGL formula compares two plays of the game, starting in states $\varstate$ and $\varstate'$, respectively, (``$\forall \varstate \forall \varstate'$'') that are required to be initial (``$\init_{\varstate} \wedge \init_{\varstate'}$'').
	The formula states that there exists a strategy for agent 1 in the play from $\varstate$ (``$\existsStrat{1}{\varstate}\!$''), such that under any strategy of agent 1 in the play from $\varstate'$ (``$\forallStrat{1}{\varstate'}\!$''), and any strategy that agent 2 follows in both plays (``$\forallStrat{2}{\varstate, \varstate'}\!$''), 
	the probability to reach $T$ in the first play (``$\Prob(\Finally t_{\varstate})$'') is at least as high as in the second play (``$\geq \Prob(\Finally t_{\varstate'})$''). 
	\[
	\forall \varstate \
	\forall \varstate' \
	\existsStrat{1}{\varstate} 
	\forallStrat{1}{\varstate'}
	\forallStrat{2}{\varstate, \varstate'}
	(\init_{\varstate} \wedge \init_{\varstate'})
	\implies 
	\Prob(\Finally t_{\varstate}) \geq \Prob(\Finally t_{\varstate'})
	\]	
\end{example}

As another example, in \HyperSGL we can also express the existence of a Nash equilibrium, i.e., the existence of a strategy for each agent such that all strategies together are optimal in the sense that no agent can improve their outcome by unilaterally deviating from it.
In \cref{sec:logic}, we give more examples of relevant properties that can be specified in \HyperSGL.

\paragraph*{Related Work.}
Alternating-time Temporal Logic (\ATLstar) and its fragment \ATL~\cite{alurAlternatingtimeTemporal2002} are temporal logics for concurrent non-stochastic games, extending \CTLstar and \CTL~\cite{emersonSometimesNot1986} to multiple agents.
Game logic (\GL)~\cite{alurAlternatingtimeTemporal2002} generalizes \ATLstar by separating path and strategy quantification. \GL subsumes \ATLstar.
Strategy Logic (\SL)~\cite{chatterjeeStrategyLogic2007} allows to explicitly quantify over strategies of turn-based non-stochastic games. 
\SL subsumes \ATLstar and \GL on turn-based non-stochastic games.

Stochastic Game Logic (\SGL)~\cite{baierStochasticGame2012} 
is a probabilistic variant of \ATLstar for turn-based stochastic games, 
and employs deterministic Rabin automata to reason about the probability that some $\omega$-regular objective is satisfied.
Another temporal logic for turn-based stochastic games is \rPATL \cite{chenAutomaticVerification2013}. It allows to specify probabilistic or reward-based zero-sum objectives by combining \ATL~\cite{alurAlternatingtimeTemporal2002}, \PCTL~\cite{hanssonLogicReasoning1994} and the reward operator. 
\rPATL has been extended to concurrent stochastic games~\cite{kwiatkowskaAutomatedVerification2018}, and to non-zero-sum objectives \cite{kwiatkowskaEquilibriabasedProbabilistic2019}, namely subgame perfect social welfare optimal Nash equilibria.

\HyperATLstar \cite{beutnerHyperATLLogic2023} is a hyperlogic for strategies in concurrent non-stochastic games combining \ATLstar and \HyperCTLstar \cite{clarksonTemporalLogics2014}.
\HyperATLstar is subsumed by 
\HyperSL \cite{beutnerHyperStrategy2024}, which is an extension of \SL and is also defined for concurrent non-stochastic games.
Like \SL, \HyperSL allows explicit first-order quantification over strategies. 
Strategy variables are not associated with a specific agent when they are quantified, and may be assigned to several different agents in a strategy profile.
The outcomes of strategy profiles can then be bound to named path variables and compared.

\HyperPCTL \cite{dobeModelChecking2022} and \PHL \cite{dimitrovaProbabilisticHyperproperties2020} are probabilistic hyperlogics for MDPs, which can be viewed as single-player stochastic games.
\HyperPCTL over MDPs extends \PCTL with quantification over states and schedulers.
\PHL extends \HyperCTLstar with the probability operator and quantification over schedulers. 
\HyperPCTLstar \cite{wangStatisticalModel2021} is defined only over DTMCs and extends \PCTLstar \cite{hanssonLogicReasoning1994} with quantification over paths.

\paragraph*{Contributions.} 
In this work, we propose the logic \HyperSGL to express hyperproperties for strategies in turn-based stochastic games.
%
\HyperSGL combines elements of \HyperPCTL \cite{dobeModelChecking2022} and \SGL \cite{baierStochasticGame2012}. 
Like \HyperPCTL, \HyperSGL extends \PCTL by quantification over game executions (through quantification over the game's starting states) and quantification over strategies. 
However, \HyperSGL allows a more flexible strategy quantification structure than \HyperPCTL, adapted to the setting of stochastic games.
\HyperSGL restricts quantification over states to the beginning of the formula, but, like in \SGL, quantification over strategies may be nested and thus may also occur inside probability operators. 
The relationship between state and strategy variables mirrors the \HyperPCTL setting: Several state variables may be associated with the same strategy, but at every point in time, each state variable is associated with at most one strategy for every agent.

We investigate the expressiveness of \HyperSGL by comparing it to 
temporal logics over turn-based stochastic games, 
as well as hyperlogics for stochastic systems or non-stochastic games.
There is a close relationship between \HyperSGL and logics for games: 
\HyperSGL embeds the fragment of \rPATL consisting of those formulas that do not contain reward-based objectives or Nash equilibria nested inside strategy quantifiers,
and subsumes the fragment of \HyperSL where each strategy variable is associated with a unique agent, thus also subsuming \HyperATLstar.

While the \HyperSGL model-checking problem over memoryful, probabilistic strategies is undecidable,
we show that the model-checking problem over strategies with bounded memory is decidable, and in particular that it is \PSPACE-hard and in \EXPTIME over memoryless deterministic strategies. 
For a fixed number of state quantifiers, we even prove the model-checking problem over memoryless deterministic strategies to be \PSPACE-complete.

\paragraph*{Organization.}
We recall preliminary concepts in \cref{sec:prelim},
before we present in \cref{sec:logic} the syntax and semantics of \HyperSGL and illustrate its expressiveness with several examples.
In \cref{sec:comp-hyper,sec:comp-stoch-games} we investigate the relationship between \HyperSGL and different logics for stochastic games, as well as hyperlogics.
We present our results on the complexity of the \HyperSGL model-checking problem in \cref{sec:mc}.
We conclude in \cref{sec:conclusion} and give an outlook on future work.
\iftoggle{extended}{}{For details on the syntax and semantics of the compared logics, model-checking algorithms, and proofs we refer to the extended version~\cite{extendedVersion}.}

\section{Preliminaries}
\label{sec:prelim}

Let $\N$ denote the natural numbers (including 0) and $\R$ the reals. 
For a set $S$, and some subset $R \subseteq S$, let $\overline{R}=S\setminus R$ denote the complement of $R$ in $S$.

\begin{definition} 
	A \emph{turn-based (stochastic) game} (TSG) is a tuple $\tbg = \tbgtup$ where 
	(1) $\agents$ is a finite set of \emph{agents},
	(2) $\states$ is a countable set of \emph{states} partitioned into $\states = \bigcup_{\agent \in \agents} \states_\agent$ with $\states_\agent\cap\states_{\agent'}=\emptyset$ for all
        $\agent\not=\agent'$,
	(3) $\Act$ is a finite set of \emph{actions},
	(4) $\Trans \colon \states \times \Act \times \states \to [0,1]$ is a \emph{transition probability function} such that for all $\state \in \states$ the set
		$
		\Act(\state)= \big\{\action\in\Act \mid \sum_{\state' \in \states} \Trans(\state, 
		\action, \state') =1\big\}
		$
		of \emph{enabled actions}
		is non-empty and 
		$\sum_{\state' \in \states} \Trans(\state, \action, \state') = 0$ for all $\action \in \overline{\Act(s)}$,
	(5)~$\AP$ is a set of \emph{atomic propositions},
	and 
	(6) $L \colon S \to 2^{\AP}$ is a \emph{labeling function}.
\end{definition}

In the following, we always assume that $\agents = \{1, \ldots, k\}$ for some $k \in \N$ and use $\states_{\setofagents} = \bigcup_{\agent \in \setofagents} \states_\agent$ for $\setofagents \subseteq \agents$.
We call a game \emph{non-stochastic} if all transition probabilities are 0 or 1.
An \emph{(infinite) path} of a turn-based game $\tbg$ is a sequence of states $\pi = \state_0\state_1\state_2\ldots \in \states^\omega$ such that for all $i \geq 0$ there exists $\action \in \Act$ with $\Trans(\state_i, \action, \state_{i+1}) > 0$.
A \emph{strategy} resolves the choices of one or several of the agents.

\begin{definition} 
	Let $\tbg = \tbgtup$ be a TSG and $\setofagents \subseteq \agents$ a set of agents.
	An \emph{$\setofagents$-strategy} for $\tbg$ is a tuple $\strat = \schedtup$ where
	(1)~$\modes$ is a countable non-empty set of \emph{modes},
	(2)~$\start \colon \states \to \modes$ selects a \emph{starting~mode} for each state,
	(3)~$\modef \colon \modes \times \states \to \modes$ is a \emph{mode transition function},~and
	(4)~$\act \colon \modes \times \partelt[{\setofagents}]  \times \Act \to [0,1]$ is a probabilistic \emph{action selection function} with $\sum_{\action \in \Act(\state)} \act(\mode, \state, \action) = 1$ and $\sum_{\action \in \overline{\Act(\state)}} \act(\mode, \state, \action) = 0$ for all $\mode \in \modes$, $\state \in \partelt[{\setofagents}]$.
\end{definition}

Let $\Strats$ denote the set of all $\setofagents$-strategies for $\tbg$; we might omit the index $\tbg$ if clear from the context.
We call an $\agents$-strategy $\strat \in \Strats[][\agents]$ a \emph{joint} strategy. 
A strategy is  
\emph{finite-memory} if $\modes$ is finite, 
\emph{$k$-memory} if $|\modes| \leq k$ for $k \in \N$,
\emph{memoryless} if $\modes$ is a singleton, and 
\emph{deterministic} if $\act(\mode, \state, \action) \in \{0, 1\}$ for all $(\mode, \state, \action) \in \modes \times \partelt[{\setofagents}] \times \Act$. 
If a strategy is memoryless, we often neglect its mode.

For an $\setofagents$-strategy $\strat$, and some $\setofagentstwo\subset \setofagents$ (or $i \in \setofagents$), we use $\strat|_{\setofagentstwo}$ (or $\strat|_i$) to denote the strategy $\alpha$ in which the action selection is restricted to the agents $\setofagentstwo$ (or agent $i$).
For $\setofagents, \setofagentstwo \subseteq \agents$, and $\strat \in \Strats$, $\strattwo \in \Strats[][\setofagentstwo]$, we let $\strat \leftarrow \strattwo$ be the $(\setofagents \cup \setofagentstwo)$-strategy obtained by taking the product of $\strat$ and $\strattwo$, in which the action selection for the agents in $\setofagentstwo$ is done according to $\strattwo$, and for the remaining agents in $A$ according to $\alpha$, as formalized in the definition below. 
If $\setofagents$ and $\setofagentstwo$ are disjoint, then $\strat \leftarrow \strattwo = \strattwo \leftarrow \strat$, and we also use $\strat \oplus \strattwo$ to denote this strategy.

\begin{definition} 
	Let $\tbg = \tbgtup$ be a TSG, and ${\strat\in\Strats[][\setofagents]}$, $\strattwo~\in~\Strats[][\setofagentstwo]$ strategies for sets of agents $\setofagents, \setofagentstwo \subseteq \agents$.
	The \emph{update of $\strat$ with $\strattwo$} is the $(\setofagents \cup \setofagentstwo)$-strategy 
	$\strat \leftarrow \strattwo = \schedtup[\superalphabeta]$ where
	\begin{itemize}
		\item $\superalphabeta[\modes] = \superalpha[\modes] \times \superbeta[\modes]$,
		
		\item $\superalphabeta[\init](\state) = (\superalpha[\init](\state), \superbeta[\init](\state))$ for $\state \in \states$,
	
		\item $\superalphabeta[\modef]((\superalpha[\mode], \superbeta[\mode]), \state) = (\superalpha[\modef](\superalpha[\mode], \state), \superbeta[\modef](\superbeta[\mode], \state))$ 
		for $\superalpha[\mode] \in \superalpha[\modes]$, $\superbeta[\mode] \in \superbeta[\modes]$, $\state \in \states$, 
		and
		
		\item $\superalphabeta[\act]((\superalpha[\mode], \superbeta[\mode]), \state) = \begin{cases}
			\superbeta[\act](\superbeta[\mode], \state) & \text{if } \state \in \partelt[\setofagentstwo] \\
			\superalpha[\act](\superalpha[\mode], \state) & \text{else } 
		\end{cases}$ 
		\\for $\superalpha[\mode] \in \superalpha[\modes]$, $\superbeta[\mode] \in \superbeta[\modes]$, $\state \in \states_{A \cup B}$.
	\end{itemize}
\end{definition}

Applying a joint strategy to a turn-based game induces a \emph{discrete-time Markov chain}, 
i.e., a stochastic game with a single agent and without nondeterminism.

\begin{definition} 
	A \emph{discrete-time Markov chain (DTMC)} is
	a tuple $\dtmc= \dtmctup$
	where 
	(1) $S$ is a finite  set of \emph{states},
	(2) $\Trans \colon \states \times \states \to [0,1]$ is a \emph{transition probability function} with $\sum_{\state' \in \states} \Trans(\state,\state') =1$ for all $\state \in \states$,
	(3) $\AP$ is a set of \emph{atomic propositions},
	and
	(4) $L \colon \states \to 2^{\AP}$ is a \emph{labeling function}.
\end{definition}

\begin{definition} 
	For a TSG $\tbg= \tbgtup$ and a joint strategy $\strat = \schedtup\in \Strats[][\agents]$  for $\tbg$, we define the \emph{DTMC induced by $\strat$ in $\tbg$} as $\superalpha[\tbg] = \dtmctup[\superalpha]$ with the following components:
	\begin{itemize}
		\item $\superalpha[\states] = \states \times \modes$,
		
		\item 
		$\superalpha[\Trans]((\state, \mode), (\state', \mode')) = \sum_{\action \in \Act(\state)} \Trans(\state, \action, \state') \cdot \act(\state, \mode, \action) \text{ if } \mode' = \modef(\mode, \state)$ and 0 otherwise,
		for $\state, \state' \in \states$, and $\mode, \mode' \in \modes$,
		
		\item $\superalpha[\AP] = \AP$, and
		
		\item $\superalpha[\labelingfct](\state, \mode) = \labelingfct(\state)$ for $\state \in \states$ and $\mode \in \modes$.
	\end{itemize}
\end{definition}

\begin{definition}
	For $n\in\N$ and DTMCs $\dtmc_1, \ldots, \dtmc_n$ with $\indexi[\dtmc]=\dtmctup[\indexi]$ for $i=1,\ldots,n$, we define their \emph{composition} $\dtmc_1\times\ldots\times\dtmc_n$ as the DTMC $\dtmc=\dtmctup$ with
	(1) $\states=\states_1\times\ldots\times\states_n$, 
	(2) $\Trans((\state_1, \ldots, \state_n),(\state_1',\ldots, \state_n'))=\prod_{i=1}^n \indexi[\Trans](\indexi[\state], \indexi[\state]')$,
	(3) $\AP=\bigcup_{i=1}^{n}\{\indexi[\ap] \mid \ap \in\indexi[\AP] \}$, and
	(4) $\labelingfct(\state_1,\ldots,\state_n)=\bigcup_{i=1}^n\{\indexi[\ap] \mid \ap \in \indexi[\labelingfct](\indexi[\state])\}$
	for $\indexi[\state],\indexi[\state]' \in \indexi[\states]$ for $i=1,\ldots,n$.
\end{definition}

An \emph{(infinite) path} of a DTMC $\dtmc=\dtmctup$ is a 
sequence of states $\pi = \state_0\state_1\state_2\ldots \in \states^\omega$ with $\Trans(\state_i, \state_{i+1}) > 0$ for all $i \geq 0$. 
Let $\Paths[\dtmc]$ denote the set of all paths of $\dtmc$, and $\Pathsfrom[\dtmc]$ those starting in $\state \in \states$.
For a path $\pi = \state_0 \state_1 \state_2 \ldots$ and $j \in \N$ we use $\pi[j]$ to denote $\state_j$ and $\pi[0, j]$ to denote the finite prefix $\state_0\ldots s_j$. 
A finite path is a finite prefix of an infinite path $\pi$.

For $\state \in \states$, let ${\Pr}^{\dtmc}_{\state}$ denote the probability measure associated with $\dtmc$ and $\state$.
An \emph{objective} for a turn-based game $\tbg$ is a measurable function $X \colon \Paths \to \{0, 1\}$.
Let $\Pr^{\tbg, \alpha}_{\state}(X)$ denote the probability of satisfying an objective $X$ over all paths in $\tbg$ starting in state $\state$ under the joint strategy $\strat$, i.e., $\Pr^{\tbg^\alpha}_{(\state, \start(\state))}(\{ \pi \in \Pathsfrom[\tbg^\strat][(\state, \start(\state))] \mid X(\pi) = 1 \})$.
We refer to \cite{baierPrinciplesModel2008} for a detailed discussion of probability measures.
An \emph{objective profile} for two opposing coalitions $C, \overline{C} \subseteq \agents$ 
is a pair of objectives $(\objone, \objtwo)$.
A \emph{Nash equilibrium} for an objective profile is a joint strategy such that neither coalition can improve their expected outcome by unilaterally deviating from this strategy.
\begin{definition} 
	\label{def:NE}
	Let $\tbg= \tbgtup$ be a TSG, 
	$\state \in \states$,
	$C \subseteq \agents$, 
	and $\objone, \objtwo\colon \Paths \to \{0, 1\}$.
	A joint strategy $\nasheq\in \Strats[][\agents]$ is a \emph{Nash equilibrium (NE)} for $(\objone, \objtwo)$ from $\state$ if
	\begin{itemize}
		\item for all $\gamma \in \Strats[][C]$ we have 
		$\Pr^{\tbg, \nasheq}_{\state}(\objone) \geq \Pr^{\tbg, \alpha}_{\state}(\objone)$ where $\alpha := \nasheq {\leftarrow} \gamma$, and
		
		\item for all $\gamma \in \Strats[][\overline{C}]$ we have 
		  $\Pr^{\tbg, \nasheq}_{\state}(\objtwo) \geq \Pr^{\tbg, \alpha}_{\state}(\objtwo)$ where $\alpha := \nasheq {\leftarrow} \gamma$.
	\end{itemize}
\end{definition}

\section{The Hyperlogic \HyperSGL} 
\label{sec:logic}

We propose the hyperlogic \HyperSGL for strategies in turn-based stochastic games that combines elements of \HyperPCTL and \SGL.
The general structure of the logic is very similar to \HyperPCTL.
However, being designed for one-player games only, \HyperPCTL requires to first quantify over schedulers and then over states, and quantification is restricted to the beginning of a formula.
For multi-player games, a more flexible quantification structure seems more appropriate.
Concretely, \HyperSGL first quantifies over states and then over strategies, and does not restrict strategy quantification to the beginning of the formula but instead allows nested strategy quantification, like \SGL.

\subsection{Syntax}
\label{sec:syntax} 
Our goal is to reason about several independent plays of a game, which can be viewed as probabilistic computation trees from some state under some joint strategy.
\HyperSGL enables this by first quantifying over state variables in order to fix the root states of the computation trees, and then fixing a joint strategy for each play by quantifying over strategies and associating them with the state variables.
Thus, each state variable is associated with a play of the game.

Let $\VarsState$ be an infinite set of state variables, $\agents$ some set of agents, and $\AP$ a set of atomic propositions. 
\HyperSGL formulas are defined by the following grammar:
\[ \arraycolsep=1pt
\begin{array}{lll}
	\text{state-quant.:} 
	&\pstateq &::= 
	\exists \varstate\ \pstateq \mid 
	\forall \varstate\ \pstateq \mid 
	\pschedq
	\\
	\text{strategy-quant.: } 
	& \pschedq &::= 
	\existsStrat{A}{\setofvariables} \pschedq \mid 
	\forallStrat{\setofagents}{\setofvariables} \pschedq \mid 
	\pnonquant  
	\\
	\text{non-quant.:} 
	&\pnonquant &::= 
	\true \mid
	\ap_{\varstate} \mid 
	\pnonquant \wedge \pnonquant \mid 
	\neg \pnonquant \mid 
	\pprob \sim \pprob \mid
	\varstate = \varstate' \mid
	\pschedq
	\\
	\text{probability expr.: }  
	&\pprob &::= 
	\Prob(\ppath) \mid 
	f(\pprob, \ldots, \pprob) 
	\\
	\text{path formula:}  
	&\ppath &::= 
	\Next \pnonquant \mid 
	\pnonquant \Until \pnonquant 
\end{array}
\]
where $\varstate, \varstate' \in \VarsState$, 
$\setofagents \subseteq \agents$, 
$\setofvariables \subseteq \VarsState$, 
$\ap \in \AP$, 
$\sim \in \{\leq, <, =, >, \geq\}$, 
and
$f \colon [0,1]^j \to \R$ is a $j$-ary standard arithmetic operation like addition, subtraction, multiplication, and constants.
We sometimes simplify notation by writing, e.g., $\existsStrat{1,2}{\varstate} \phi$ instead of $\existsStrat{1,2}{\{\varstate\}} \phi$. 
We use standard syntactic sugar like 
$\phi_1 \vee \phi_2 := \neg(\neg\phi_1 \wedge \neg \phi_2)$, 
$\phi_1 \implies \phi_2 := \neg\phi_1 \vee \phi_2$, 
$\Prob(\Finally \phi) := \Prob(\true \Until \phi)$, 
$\Prob(\Globally \phi) := 1 - \Prob(\Finally \neg\phi)$. 
Extending \HyperSGL by adding direct nesting of temporal operators would be straightforward, but would increase the model-checking complexity, analogously to the increased model-checking complexity of \PCTLstar compared to \PCTL.

A state-quantified \HyperSGL formula $\pstateq$ is \emph{well-formed} if 
(C1)
	each occurrence of $\ap_{\varstate}$ in $\pstateq$ is in the scope of a quantifier for $\varstate$,
(C2)
	any quantifier $\existsStrat{\setofagents}{\{\varstate_{1}, \ldots, \varstate_{m}\}}$ is in the scope of quantifiers for $\varstate_{1}$, \ldots, $\varstate_{m}$, and
(C3)
	each probability expression $\Prob(\ppath)$ is in the scope of strategy quantifiers for all agents for all state variables\footnote{It would be sufficient to require this for all agents for all state variables \emph{occurring in $\ppath$}. We make this stronger requirement to simplify notation.}.

The following examples illustrate different kinds of properties expressible in \HyperSGL. 
Whenever we refer to some set of states $T$,
we assume that the states contained in that set are labeled with a corresponding lower-case atomic proposition $t$. 

\begin{example}
	\label{ex:almost-sure}
	The \HyperSGL formula 
	$
	\forall \varstate \
	\existsStrat{1}{\varstate} 
	\forallStrat{2}{\varstate}
	\Prob(\Finally t_{\varstate}) = 1
	$
	expresses that agent 1 has an almost-sure winning strategy to reach a set of target states $T$ against agent 2.
	This example does not utilize the full power of \HyperSGL as a hyperlogic, since it only employs a single state quantifier.
\end{example}

\begin{example}
	\label{ex:nested}	
	Assume agent 1 wants to reach one of two target sets $R$ and $T$, and prefers $R$ over $T$. Reaching $R$ is a possible way for agent 2 to fulfill its objective but there might be other options for agent 2. The following formula expresses that 
	the two agents can collaborate for reaching $R$ almost surely 
	in such a way that 
    (1) the strategy of agent 1 maximizes the probability of reaching $R$ 
    among all strategies for agent 1 (``$\forallStrat{1}{\varstate'}\!$'') and 
    against every strategy of agent 2 (``$\forallStrat{2}{\varstate,\varstate'}\!$''),
	and
	(2) 
	agent 1 can play safe in the sense that it can ensure to almost surely reach a state that is either in $R$, or from which agent 1 has a strategy to almost surely reach $T$, no matter how agent 2 behaves.
	\begin{align*}
		&\forall \varstate \
		\forall \varstate' \
		(\init_{\varstate} \wedge \init_{\varstate'}) 
		\implies
                \existsStrat{1, 2}{\varstate, \varstate'} 
		\Big[ \Prob( \Finally r_{\varstate}) = 1 
		\land {}
		\\ & \quad
		\underbrace{
			\forallStrat{1}{\varstate'}
			\forallStrat{2}{\varstate,\varstate'} \Prob(\Finally r_{\varstate}) \ge \Prob(\Finally r_{\varstate'})
		}_{\footnotesize (1)}
		\land 
		\underbrace{
			\forallStrat{2}{\varstate} \Prob\big[\Finally\big( r_{\varstate} \lor \existsStrat{1}{\varstate}\forallStrat{2}{\varstate} \Prob(\Finally t_{\varstate})= 1\big)\big] {=} 1 
		}_{\footnotesize (2)} \Big]
	\end{align*}
\end{example}

\begin{example}
	\label{ex:eq}
	Consider a turn-based \emph{non-stochastic} game with two agents where the agents have the same objectives as in \cref{ex:nested}.
	The following formula expresses that 
	the two agents can collaborate for reaching $R$ almost surely,
	and at the same time agent 1 can almost-surely ensure to almost-surely reach $T$ in case agent 2 deviates from their joint strategy: 
	\begin{align*}
		&\forall \varstate \ \forall \varstate' .\
		(\init_{\varstate} \wedge \init_{\varstate'}) \implies
		\existsStrat{1,2}{\varstate, \varstate'} 
		\\ & \qquad\qquad 
		\Big( \;
		\Prob(\Finally r_{\varstate}) = 1 \land 
		\forallStrat{2}{\varstate'} 
		\Prob\big[ \Globally \big(\neg (\varstate = \varstate') \implies \existsStrat{1}{\varstate'} \Prob(\Finally t_{\varstate'}) = 1 \big)\big]=1 \Big) .
	\end{align*}
\end{example}

\begin{example}
	\label{ex:nash}
	Assume agents 1 and 2 want to reach goals $T^1$ and $T^2$, respectively. 
	The following \HyperSGL formula expresses that there exists a Nash equilibrium from some fixed initial state labeled with $\init$,
	i.e., a joint strategy such that, if only one of the agents changes its strategy, then the probability of reaching its respective goal will not increase.
	\begin{align*}
		\forall \varstate \
		\forall \varstate' .\
		(\init_{\varstate} \wedge \init_{\varstate'}) 
		\implies \existsStrat{1,2}{\varstate, \varstate'} 
		&\big(\forallStrat{1}{\varstate'} \Prob(\Finally t^1_{\varstate}) \geq \Prob(\Finally t^1_{\varstate'}) \big)
		\\
		& 
		\land {} 
		\big(\forallStrat{2}{\varstate'} \Prob(\Finally t^2_{\varstate}) \geq \Prob(\Finally t^2_{\varstate'})\big)
	\end{align*}
\end{example}

\subsection{Semantics}
\label{sec:semantics}

Without loss of generality, let us now assume that all well-formed \HyperSGL formulas are of the form $Q \varstate_1 \ldots Q \varstate_n .\ \psi$ for some $n \in \N$ and some strategy-quantified \HyperSGL formula $\psi$.
\HyperSGL formulas are evaluated over a context $\Ctx = \Ctxtup$ consisting of
\begin{itemize}
	\item a turn-based game $\tbg = \tbgtup$,

	\item a partial mapping $\mapStrat \colon \VarsState \times \agents \rightharpoonup \bigcup_{\agent \in \agents} \Strats[][\{\agent\}]$ assigning strategies to combinations of state variables and agents 
	s.t.\ $\mapStrat(\varstate, \agent) \in \Strats[][\{\agent\}]$, and

	\item a partial mapping
	$\statetupMap \colon \VarsState \rightharpoonup \states$ assigning states to state variables. 
\end{itemize}

Let $\tbg$ be a turn-based game, $\mapStrat \colon \VarsState \times \agents \rightharpoonup \bigcup_{\agent \in \agents} \Strats[][\{g\}]$ a strategy mapping, $\statetupMap \colon \VarsState \rightharpoonup \states$ a state mapping, and $\Ctx = \Ctxtup$. 
$\tbg$ satisfies a well-formed \HyperSGL formula $\phi$, written $\tbg \models \phi$, iff $(\tbg, \emptymap, \emptymap) \models \phi$.
We define the semantics of well-formed \HyperSGL formulas by structural induction.
State variable quantification is evaluated by quantifying over all states of the game structure. 
A strategy quantifier $\existsStrat{\setofagents}{\setofvariables}\!$ or $\forallStrat{\setofagents}{\setofvariables}\!$ is evaluated by quantifying over all $\setofagents$-strategies and assigning the quantified strategy to all state variables $\varstate \in \setofvariables$.
For each state variable $\varstate$, the current assignments of the mappings thus encode a play of the game as the computation tree from $\statetupMap(\varstate)$ under $\bigoplus_{\agent \in \agents} \mapStrat(\varstate, \agent)$.
Formally, the semantics of state- and strategy-quantified formulas are as follows.
\[ 
\begin{array}{lll}
	\Ctx \models \mathbb{Q} \varstate_i \ \pstateq & \iff 
	& \mathbb{Q} \state_i \in \states .\ 
	\Ctxtup[][][\statetupMap{[{\varstate_i \mapsto \state_i}]}] \models \pstateq  \\
	%
	\Ctx \models \existsStrat{\setofagents}{\setofvariables} \pschedq & \iff 
	& \exists \strat \in \Strats[][\setofagents] .\ 
	\Ctxtup[][{\mapStrat[(\varstate, g) \mapsto \strat|_g \text{ for } g {\in} \setofagents, \varstate {\in} \setofvariables]}][]
	\models \pschedq \\
	\Ctx \models \forallStrat{\setofagents}{\setofvariables} \pschedq & \iff 
	& \forall \strat \in \Strats[][\setofagents] .\ 
	\Ctxtup[][{\mapStrat[(\varstate, g) \mapsto \strat|_g \text{ for } g {\in} \setofagents, \varstate {\in} \setofvariables]}][]
	\models \pschedq \\
\end{array}
\]
where 
$\mathbb{Q} \in \{\exists, \forall\}$, 
$i\in\{1, \ldots, n\}$,
$\setofagents \subseteq \agents$, 
$\setofvariables \subseteq \{\varstate_1, \ldots, \varstate_n\}$.

The semantics of non-quantified formulas is defined as follows.
\[ 
\begin{array}{lll}
	\Ctx \models \true & &  \\
	\Ctx \models \ap_{\varstate_i} & \iff 
	& \ap \in \labelingfct(\statetupMap(\varstate_i)) \\
	\Ctx \models \pnonquant_1 \wedge \pnonquant_2 & \iff & \Ctx \models \pnonquant_1 \text{ and } \Ctx \models \pnonquant_2 \\
	\Ctx \models \neg \pnonquant & \iff & \Ctx \not\models \pnonquant \\
	\Ctx \models \pprob_1 \sim \pprob_2 & \iff & 
	\llbracket \pprob_1 \rrbracket_{\Ctx} \sim \llbracket \pprob_2 \rrbracket_{\Ctx} \\
	\Ctx \models \varstate_i = \varstate_j & \iff & 
	\statetupMap(\varstate_i) = \statetupMap(\varstate_j)
\end{array}
\]
where 
$i,j\in\{1, \ldots, n\}$, 
and
$\ap \in \AP$.

In order to evaluate a probability expression, we build the composition of the DTMCs induced by the strategies fixed in $\mapStrat$, and evaluate path probabilities in the composed DTMC as usual in \PCTL.
Recall that conditions (C1)--(C3) assure that 
whenever we evaluate well-formed \HyperSGL formulas, 
probability expressions are evaluated in a context satisfying $\Dom(\statetupMap)=\{\varstate_1,\ldots,\varstate_n\}$ and
$\Dom(\mapStrat)=\{\varstate_1,\ldots, \varstate_n\}\times \agents$.
Formally, we let $\indexi[\strat] = \strattup[\indexi] := \bigoplus_{\agent \in \agents} \mapStrat(\indexi[\varstate], g)$ for $i \in \{1,\ldots,n\}$, and $\dtmc := \tbg^{\strat_1}\times\ldots\times\tbg^{\strat_n}$ with state set $\states^{\dtmc}$, and define the semantics of probability expressions as follows.
\[
\begin{array}{lcl}
	\llbracket \Prob(\ppath) \rrbracket_{\Ctx} & = &
	\Pr^{\dtmc}_{r}(\pi \in \Pathsfrom[\dtmc][r] \mid \Ctx, \pi \models \ppath ) 
	\\
	\llbracket f(\pprob_1, \ldots, \pprob_\ell) \rrbracket_{\Ctx} & = & f(\llbracket \pprob_1 \rrbracket_{\Ctx}, \ldots, \llbracket \pprob_\ell \rrbracket_{\Ctx}) 
\end{array} 
\]
where 
$\ell \in \N$
and
$r := \left((\statetupMap(\varstate_1), \start_1(\statetupMap(\varstate_1))), \ldots, (\statetupMap(\varstate_n), \start_n(\statetupMap(\varstate_n)))\right) \in \states^{\dtmc}$ is the state corresponding to the current state variable assignment $\statetupMap$ in $\dtmc$.

Temporal operators intuitively allow to make steps in the quantified plays, moving to some sub-tree in each computation tree. 
Formally, the temporal operators are evaluated over paths in the composed induced DTMC $\dtmc$.
In order to evaluate a subformula at some position on a composed path, we
update the state mapping 
and 
shift the strategy mapping such that it remembers the history of the game that was played until that position:
For $\pi \in \Paths[\dtmc]$ and $j \in \N$, we use $\mapStrat^{\pi[0, j]}$ for the update of $\mapStrat$ by $\pi[0, j]$ defined as follows. 
For $i \in \{1,\ldots,n\}$ and $\agent \in \agents$, we let
$(s_{i}, q_{i}) := (\pi[j])_i \in \states^{\tbg^{\indexi[\alpha]}}$
and 
$\mapStrat^{\pi[0, j]}(\indexi[\varstate], g) := \indexi[\strat]^{\pi[0, j]} := (\indexi[\modes], \indexi[\start]^{\indexi[q]}, \indexi[\modef], \indexi[\act])$ where 
	$\indexi[\strat] = (\indexi[\modes], \indexi[\start], \indexi[\modef], \indexi[\act])$
as above and $\indexi[\start]^{\indexi[q]}(\state) = \indexi[q]$ for all $\state \in \states$. 
The semantics of path formulas is then defined as follows, where we use $\projS{\pi[j]}$ 
to denote the updated state variable mapping $\statetupMap'$ with $\statetupMap'(\varstate_i) = \projS{((\pi[j])_i)}$, the projection of $(\pi[j])_i$ to the $\states$-component, for $i = 1, \ldots, n$.
\[
\begin{array}{lcl}
	\Ctx, \pi \models \Next \pnonquant & \iff & \Ctxtup[][\mapStrat^{\pi[0, 1]}][\projS{\pi[1]}] \models \pnonquant \\
	\Ctx, \pi \models \pnonquant_1 \Until \pnonquant_2 & \iff & 
	\exists j_2 \geq 0 .\ \big(\Ctxtup[][\mapStrat^{\pi[0, j_2]}][\projS{\pi[j_2]}] \models \pnonquant_2 \wedge \\
	& & \phantom{E j_2 \geq 0 . }\;\;\; \forall 0 \leq j_1 < j_2 .\ \Ctxtup[][\mapStrat^{\pi[0, j_1]}][\projS{\pi[j_1]}] \models \pnonquant_1 \big)
	\\
\end{array}
\]
\noindent 
The semantics of the probability operator is well-defined as the set $\{\pi \in \Pathsfrom[\dtmc][r] \mid \Ctx, \pi \models \ppath \}$ is measurable with analogous reasoning to \PCTL.  

\section{Relationship to Logics for Stochastic Games}
\label{sec:comp-stoch-games}

Stochastic game logic (\SGL) and probabilistic alternating-time temporal logic (\PATL) are temporal logics for stochastic games. 
Both are evaluated over a fixed initial state $\sinit$ of a game structure, while \HyperSGL is evaluated over structures without dedicated initial states.
Therefore, we can only mimic \SGL and \PATL in \HyperSGL if we can assume that the given state $\sinit$ has a unique label, e.g., $\init$. 
We can then specify that we are only interested in executions where some state variable is initialized with a state labeled with $\init$. 
In fact, if we access the label $\init$ for this purpose only, we do not increase the observability of the initial state compared to \SGL and \PATL.
\iftoggle{extended}{%
	\cref{app:rPATL,app:sgl-proof} give%
}{%
	The extended version contains%
} 
the full syntax and semantics of \SGL and \rPATL.
An overview of the relationships between \HyperSGL, \SGL and \rPATL is presented in \cref{fig:overview-stoch}.

\subsection{PATL and rPATL}
\label{sec:rpatl}

\PATL and \PATLstar \cite{chenProbabilisticAlternatingtime2007} are \PCTL-style (resp.\ \PCTLstar-style) extensions of \ATL \cite{alurAlternatingtimeTemporal2002}.
They allow to specify zero-sum properties of turn-based and concurrent stochastic games. 
\PATL can be directly embedded into \HyperSGL.\footnote{Technically, \PATL allows bounded Until but \HyperSGL does not. Extending \HyperSGL with bounded Until would be straightforward and not increase model-checking complexity.} 
Since \HyperSGL does not allow arbitrary nesting of temporal operators, simulating \PATLstar in \HyperSGL is not possible. 
\rPATL and \rPATLstar \cite{chenAutomaticVerification2013,kwiatkowskaAutomatedVerification2018} extend \PATL and \PATLstar with the ability to reason about expected reward. 
In \cite{kwiatkowskaEquilibriabasedProbabilistic2019}, \rPATL is further extended with the possibility to specify \emph{social welfare subgame perfect $\epsilon$-Nash equilibria (\swspepsne)}.
An \rPATL formula of the form $\existsStrat{C {:} \overline{C}}{max \sim x}(\texttt{P}[\psi^1] + \texttt{P}[\psi^2])$
expresses that the agents have a joint strategy $\nasheq$ such that (1) the sum of the probabilities of satisfying $\psi^1$ and $\psi^2$ relates to $x$ as specified, and (2), $\nasheq$ is a \swspepsne, where $\psi^1$ is the objective for $C$ and $\psi^2$ is the objective for $\overline{C}$.
A \emph{social welfare} Nash equilibrium maximizes the sum of the probability of satisfying the objectives.
A Nash equilibrium is considered to be \emph{subgame perfect} if it is a Nash equilibrium in all subgames, i.e., at each state of the game.
Let \PATLNE be the fragment of \rPATL with \swspepsne but without reward objectives or bounded Until.

In \cref{ex:nash} we have seen that \HyperSGL can express Nash equilibria. 
In order to express \swspepsne, we need to compare different strategies from the same state.
In \HyperSGL, a state variable cannot be associated with several different joint strategies at the same time.
However, we can model the desired behavior by quantifying over several state variables and specifying that we are only interested in cases where the state variables are assigned the same state, using \enquote{$\varstate = \varstate'$}.
Further, the definition of \swspepsne first quantifies over strategies and then over subgames, i.e., over states, while in \HyperSGL we quantify first over states and then over strategies. 
However, if there is no nested strategy quantification, moving the state quantification to the front results in an equivalent formulation. 

\begin{restatable}{theorem}{translateSWSPepsNE}
	\label{th:translateSWSPepsNE}
	For \PATLNE formulas $\psi^1, \psi^2$ without strategy quantifiers (including Nash equilibria), it holds that there exists a \HyperSGL formula $\phi'$ such that
	\begin{align*} 
		\tbg, \state \modelsrPATL \existsStrat{C {:} \overline{C}}{max \sim x}(\mathtt{P}[\psi^1] + \mathtt{P}[\psi^2]) 
		\ \iff \
		\tbg' \modelsHyperSGL \phi'
	\end{align*}
	for all turn-based stochastic games $\tbg$ and states $\state \in \states$, 
	where
	$\tbg'$ is the same as $\tbg$ but $\state$ is additionally labeled with a fresh atomic proposition $\init$.
\end{restatable}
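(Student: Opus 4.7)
The plan is to explicitly construct a \HyperSGL formula $\phi'$ that mirrors the \rPATL semantics of the given formula. At the outermost level, $\phi'$ universally quantifies over several state variables, using the fresh proposition $\init$ to pin the ones corresponding to the designated initial state to $\state$. Within this scope, $\phi'$ existentially quantifies over a joint strategy $\nasheq \in \Strats[][\agents]$ (the candidate SW-SP-$\epsilon$-NE), and asserts three conjuncts mirroring the three components of the \rPATL semantics.

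The first conjunct directly encodes $(\texttt{P}[\psi^1] + \texttt{P}[\psi^2]) \sim x$ evaluated from the initial state, reusing \HyperSGL's arithmetic over probability expressions. Because $\psi^1$ and $\psi^2$ are strategy-quantifier-free \PATLNE formulas, they evaluate under $\nasheq$ in the same \PCTL-style fashion as probability expressions in \HyperSGL, which I would confirm by a straightforward structural induction on $\psi^1,\psi^2$. The second conjunct enforces subgame perfectness of the NE property: following \cref{ex:nash} and the paragraph preceding the theorem, I add two universally quantified state variables $\varstate,\varstate'$ ranging over all of $\states$ and guard the conjunct by $\varstate = \varstate'$; the NE conditions $\forallStrat{C}{\varstate'} \Prob(\psi^1_{\varstate}) \geq \Prob(\psi^1_{\varstate'}) - \epsilon$ and its symmetric counterpart for $\overline{C}$ are then required to hold. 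Since $\varstate$ ranges over all states, this enforces the $\epsilon$-NE condition at every subgame, i.e., subgame perfectness.

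The third conjunct, expressing social welfare optimality, is the technical crux. I add a further pair of state variables $\varstate_0,\varstate_0'$ both labeled $\init$ with $\varstate_0 = \varstate_0'$, and use a nested $\forallStrat{\agents}{\varstate_0'}$ to range over alternative joint strategies; the conjunct states that whenever such an alternative is itself an SP-$\epsilon$-NE (expressed by an inner copy of the encoding from the second conjunct over yet more state variables), its sum of probabilities from $\varstate_0'$ does not strictly exceed the sum from $\varstate_0$ under $\nasheq$. The main obstacle is bookkeeping: the nested strategy quantifiers and many state variables must be organized so that well-formedness conditions (C1)--(C3) hold and every $\Prob$-subformula lies in the scope of complete strategy quantifiers for all involved state variables. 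Once $\phi'$ is written, the equivalence $\tbg, \state \modelsrPATL \psi \iff \tbg' \modelsHyperSGL \phi'$ follows by unfolding the semantics on both sides; the equality guards $\varstate = \varstate'$ ensure that \HyperSGL's independent plays collapse to strategy comparisons from a single state, as required by the \rPATL semantics.
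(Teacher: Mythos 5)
Your construction follows essentially the same route as the paper: universally quantified state variables pinned to $\state$ via $\init$, an outer existential joint-strategy quantifier for the candidate equilibrium, and three conjuncts for the threshold condition, subgame perfectness, and social-welfare optimality, with equality guards $\varstate = \varstate'$ collapsing independent plays onto a single state. However, there are two points where your sketch glosses over the actual mathematical content. First, the polarity of the extra state variables in the third conjunct: the SP-$\epsilon$-NE property of the \emph{competing} strategy $\schedcomp$ sits in the premise of an implication, so its universally quantified subgame state becomes \emph{existentially} quantified when hoisted to the quantifier prefix (the paper uses $\exists \varcomp \exists \varcompp$ for exactly this reason). You say only ``yet more state variables''; if these were made universal like the others, the formula would demand the welfare inequality for any strategy that is an $\epsilon$-NE at some single state, which is not the \swspepsne{} condition.

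Second, and more importantly, the equivalence does \emph{not} follow by merely ``unfolding the semantics on both sides.'' The \rPATL semantics quantifies strategies first and subgame states second, while \HyperSGL forces all state quantifiers to the front, so one must justify the exchanges $\exists \nasheq\, \forall \opt \Leftrightarrow \forall \opt\, \exists \nasheq$ and $\forall \schedcomp\, \exists \comp \Leftrightarrow \exists \comp\, \forall \schedcomp$. Neither is a logical triviality in the nontrivial direction: the paper proves both by explicitly constructing composite strategies whose mode set is indexed by the starting state (a strategy $\nasheq$ that behaves as $\nasheq_{\state'}$ when started in $\state'$, and dually a counterexample strategy $\schedcomp$ assembled from the per-state counterexamples $\schedcomp_{\opt,\nasheq,s}$). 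Your appeal to the informal remark preceding the theorem assumes precisely this lemma rather than proving it; supplying these two constructions is the crux that your proposal is missing.
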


\begin{proof}[Sketch]
	We construct
	\begin{align*}
		&\phi' := \forall \varfixed \forall \varfixedp \
		\forall \varopt \forall \varoptp \
		\exists \varcomp \exists \varcompp .\ 
		(\init_{\varfixed} \wedge \init_{\varfixedp} \wedge \varopt = \varoptp) 
		\implies
		(\varcomp = \varcompp \wedge \phi_{\textit{SWSP}\epsilon} )
		\\ \allowbreak
		&\phi_{\textit{SWSP}\epsilon}
		:= 
		\existsStrat{\agents}{\varfixed, \varopt, \varoptp} 
		\Big[ \Prob(\psi^1_{\varfixed}) + \Prob(\psi^2_{\varfixed}) \sim x
		\wedge 
		\phi_{\textit{SP}\epsilon}(\varopt, \varoptp)  
		\\ & \qquad \qquad \qquad \qquad \qquad \quad \;
		\wedge \forallStrat{\agents}{\varfixedp, \varcomp, \varcompp} 
		\left( \phi_{\textit{SP}\epsilon}(\varcomp, \varcompp) \implies \phi_{\textit{SW}}(\varfixed, \varfixedp) \right) \Big]
		\\ 
		&\phi_{\textit{SP}\epsilon}(\varstate_i, \varstate'_i) := 
		\big(\forallStrat{C}{\varstate'_i} \Prob(\psi^1_{\varstate_i}) \geq \Prob(\psi^1_{\varstate'_i}) - \epsilon \big)
		\wedge 
		\big(\forallStrat{\overline{C}}{\varstate'_i} \Prob(\psi^2_{\varstate_i}) \geq \Prob(\psi^2_{\varstate'_i}) - \epsilon \big)
		\\ \allowbreak
		&\phi_{\textit{SW}}(\varfixed, \varfixedp) := 
		\textstyle\sum_{i=1}^2  \Prob(\psi^i_{\varfixed})
		\geq
		\sum_{i=1}^2  \Prob(\psi^i_{\varfixedp}) 
	\end{align*}
	where $\psi^i_{\varstate}$ corresponds to $\psi^i$ with all atomic propositions indexed by state variable $\varstate$.
	Intuitively, $\phi_{\textit{SWSP}\epsilon}$ asks for a joint strategy $\nasheq$ such that 
	(1) the overall profit under $\nasheq$ from $\fixed$ relates to $x$ as specified,
	(2) $\nasheq$ is a \spepsne, and 
	(3) $\nasheq$ results in a higher overall profit than any other \spepsne $\schedcomp$ from $\fixed$ (social welfare).
	We associate state variables $\varfixed, \varopt, \varoptp$ with $\nasheq$ and state variables $\varfixedp, \varcomp, \varcompp$ with $\schedcomp$.
	We use 
	$\varfixed, \varfixedp$ to compare the overall profit from $\fixed$ under $\nasheq$ and $\schedcomp$.
	We require $\varstate_i$ and $\varstate_i'$ to be instantiated with the same state for $i=1,2$, and use 
	$\varopt, \varoptp$ to check whether $\nasheq$ is a \spepsne, and 
	$\varcomp, \varcompp$ to do the same for $\schedcomp$.
	The size of $\phi'$ is linear in the size of the \PATLNE formula $\existsStrat{C {:} \overline{C}}{max \sim x}(\texttt{P}[\psi^1] + \texttt{P}[\psi^2])$. 
	For the full proof, see \iftoggle{extended}{\cref{app:rPATL}}{the extended version}.
	\qed
\end{proof}

This idea can be extended to translate the fragment \PATLfrag of \PATLNE to \HyperSGL, consisting of those \PATLNE formulas where Nash equilibria are not nested inside strategy quantifiers (including Nash equilibria). 
Translating Boolean operators, atomic propositions, strategy quantifiers and probability operators is straightforward, and Nash equilibria can be translated as shown in \cref{th:translateSWSPepsNE}, using fresh state variables for every Nash equilibrium.
However, it is not obvious how to express formulas with Nash equilibria nested inside strategy quantifiers or Nash equilibria in \HyperSGL, since then state quantifiers are nested inside probability operators and cannot be pulled to the front, see \iftoggle{extended}{\cref{app:general-PATLNE}}{the extended version} for details.
Hence, it is unclear whether \PATLNE can be fully simulated in \HyperSGL.

\begin{openproblem}
	Does \HyperSGL subsume \PATLNE over turn-based stochastic games?
\end{openproblem}

\subsection{Stochastic Game Logic}
\label{sec:sgl}

\SGL \cite{baierStochasticGame2012} is a probabilistic variant of \ATL for turn-based stochastic games that allows to specify $\omega$-regular conditions on paths via \emph{deterministic Rabin automata (DRA)}.
\HyperSGL cannot fully subsume \SGL because \HyperSGL employs \LTL-style path conditions, which are strictly less expressive than DRA.

When evaluating a probability operator, \SGL implicitly quantifies over all memoryful probabilistic strategies for all agents for whom a strategy has not been fixed yet. 
In \HyperSGL, we enforce explicit strategy quantification for all agents 
in order to have a more transparent logic and simpler semantics.
In \SGL, if we restrict the strategies for the explicitly quantified agents to be, e.g., memoryless deterministic, the remaining agents are still allowed memory and randomization, 
while in \HyperSGL, we can only restrict the strategy class for all agents.

Let \SGLfrag be the fragment of \SGL where we restrict path conditions to Boolean and temporal operators and assume explicit strategy quantification. There still is a crucial difference between \SGLfrag and \HyperSGL in the evaluation of temporal operators:
In \HyperSGL, when we evaluate a temporal operator on a path under some strategy mapping, we evaluate the subformula(s) from a position on the path under the \emph{shifted} strategy mapping that remembers the history of the play until that position.
In \SGL, however, all subformula are evaluated under the \emph{same} strategy again, i.e., we do not remember the play so far. 
\iftoggle{extended}{See \cref{app:sgl-proof} for details and an example.}{We refer to the extended version for details and an example.}

\begin{conjecture}
	\SGLfrag cannot be embedded in \HyperSGL over TSGs.
\end{conjecture}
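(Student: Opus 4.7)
The plan is to separate \SGLfrag from \HyperSGL by exhibiting a distinguishing \SGLfrag formula together with two turn-based stochastic games that agree on every \HyperSGL formula but disagree on that \SGLfrag formula. To isolate such a formula, I would exploit the semantic difference already highlighted: within a nested probability operator, \SGLfrag re-evaluates a surrounding strategy as if restarted from the currently reached state (discarding the accumulated history), whereas \HyperSGL shifts the strategy's mode along the history. This distinction is only visible for genuinely memoryful strategies, so the witness $\psi$ must be chosen accordingly. A natural candidate is the \SGLfrag formula expressing ``there exists an agent-1 strategy $\strat$ such that almost surely under $\strat$ from the initial state, at every reachable state the restarted behaviour of $\strat$ reaches $p$ with probability exactly $\tfrac12$'', whose most direct \HyperSGL counterpart
\[
\phi \;:=\; \existsStrat{1}{\varstate} \Prob\bigl( \Globally\, \Prob(\Finally p_{\varstate}) = \tfrac12 \bigr) = 1
\]
instead evaluates the inner probability under $\strat$ with its mode advanced by the outer history.

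Next, I would construct two TSGs $\tbg_1, \tbg_2$ with $\tbg_1 \modelsSGL \psi$ and $\tbg_2 \not\modelsSGL \psi$. The games are crafted so that the only witness of $\psi$ under \SGLfrag is memoryful and relies on its action at some state being different depending on whether history is remembered or reset; $\tbg_1$ admits such a witness while $\tbg_2$ does not. Simultaneously, both games are designed symmetrically enough that any strategy evaluated under the \HyperSGL shift semantics induces identical distributions over finite observations on the two structures, ensuring that the distinction accessible to \SGLfrag is precisely the feature \HyperSGL cannot probe.

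The technically demanding step is to prove that no \HyperSGL formula distinguishes $\tbg_1$ from $\tbg_2$. I would proceed by a bisimulation argument: define a relation $\approx$ between the two games that is preserved by (i) the composition of induced DTMCs across several state variables, (ii) quantification over strategies and the subsequent shift along paths in the semantics of temporal operators, and (iii) state quantification. Structural induction on \HyperSGL formulas -- dispatching state quantifiers, strategy quantifiers, probability expressions and path subformulas in turn -- then establishes that $\approx$ preserves satisfaction, so no \HyperSGL formula separates $\tbg_1$ from $\tbg_2$.

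The main obstacle is precisely this bisimulation argument. \HyperSGL combines nested strategy quantification, multiple state variables composed into one DTMC, and shifted continuations of memoryful strategies; the relation $\approx$ must be coarse enough to equate the two games at every level of nesting, yet fine enough to preserve atomic labellings. A promising intermediate reduction is to establish a normal form for \HyperSGL formulas that makes the dependence on the shifted (rather than restarted) strategy mode explicit, thereby localising the comparison between $\tbg_1$ and $\tbg_2$ to a tractable structural check. The delicate point is to verify that no cleverly nested combination of strategy and state quantifiers in \HyperSGL can encode the restart behaviour by witnessing appropriate pairs of states on different state variables; ruling this out is the heart of the conjecture.
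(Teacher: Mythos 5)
The statement you are addressing is stated in the paper only as a \emph{conjecture}: the authors give no proof, and the only supporting material (in the appendix on \SGL) is a single worked example showing that one fixed formula evaluates differently under the \SGL semantics (which restarts the quantified strategy at each re-evaluation) and under the \HyperSGL semantics (which shifts the strategy's mode along the history). Your proposal correctly identifies this same shift-versus-restart discrepancy as the source of the separation, and you correctly recognize that a genuine non-embeddability proof must do more than show that the ``direct'' translation fails --- it must rule out \emph{every} \HyperSGL formula, which is why you reach for a pair of games $\tbg_1,\tbg_2$ that are \HyperSGL-indistinguishable yet separated by some \SGLfrag formula.

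However, there is a genuine gap: every load-bearing step of your argument is announced but not carried out. You do not construct $\tbg_1$ and $\tbg_2$, you do not verify that your candidate $\psi$ holds in one and fails in the other under the \SGL restart semantics, and --- as you yourself concede --- the indistinguishability argument is ``the heart of the conjecture'' and is left entirely open. A bisimulation-style relation that is simultaneously preserved under $n$-fold DTMC composition, nested strategy quantification over memoryful probabilistic strategies, and the history-shifting of strategy modes is not a routine object; without even a candidate definition of $\approx$ or a proof that such games exist, the proposal is a restatement of the difficulty rather than a proof. Two smaller issues: your candidate \HyperSGL formula $\existsStrat{1}{\varstate}\Prob(\Globally\,\Prob(\Finally p_{\varstate})=\tfrac12)=1$ violates well-formedness condition (C3) unless the game is single-agent (all agents must be strategy-quantified before a probability operator), and the witness strategy must be genuinely memoryful, so any decidability-flavored shortcut via bounded-memory normal forms is unavailable. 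In short, the conjecture remains open both in the paper and after your attempt.
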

\section{Relationship to Hyperlogics}
\label{sec:comp-hyper}

In this section, we investigate the relationship between \HyperSGL and different hyperlogics for stochastic systems or non-stochastic games, namely \HyperPCTL, \PHL, and \HyperPCTLstar as well as \HyperSL and \HyperATLstar. 
An overview of the relationships between the logics is given in \cref{fig:overview-hyper}.

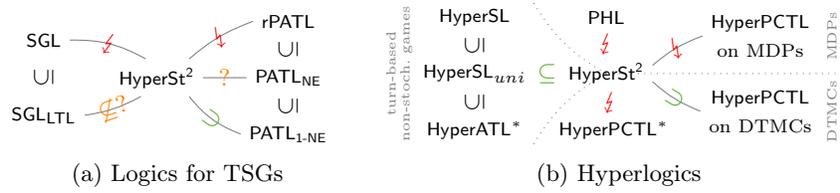
\begin{figure}[t]
	\subfloat[Logics for TSGs\label{fig:overview-stoch}]{%
		\begin{tikzpicture}[every node/.style={sloped}]	
	\node (HyperSt2) at (0,0) {\HyperSGL};
	
	\node (SGL) at (-1.5,0.5) {\SGL};
	\node (SGLLTL) at (-1.5,-0.5) {\SGLfrag};
	
	\path (SGLLTL) to node[midway] {$\subseteq$} (SGL);
	\draw[-, color=gray] (SGL.east) edge[bend left=15] node[text=red]{$\lightning$} (HyperSt2.120);
	\draw[-, color=gray] (SGLLTL.east) edge[bend right=10] node[text=orange]{$\not\subseteq$?} (HyperSt2.230);
	
	\node (rPATL) at (1.75,0.75) {\rPATL};
	\node (PATLNE) at (1.75,0) {\PATLNE};
	\node (PATL1NE) at (1.75,-0.75) {\PATLfrag};
	
	\path (PATL1NE) to node[midway] {$\subseteq$} (PATLNE) to node[midway] {$\subseteq$} (rPATL);
	
	\draw[-, color=gray] (rPATL.west) edge[bend right=10] node[text=red]{$\lightning$} (HyperSt2.30);
	\draw[-, color=gray] (PATLNE.west) edge node[text=orange]{?} (HyperSt2.east);
	\draw[-, color=gray] (PATL1NE.west) edge[bend left=15] node[text=green100,xscale=-1]{$\subset$} (HyperSt2.330);
\end{tikzpicture}
	}\centering
	\hfil
	\subfloat[Hyperlogics \label{fig:overview-hyper}]{%
		\begin{tikzpicture}[every node/.style={sloped}]
	\draw[-, dotted, color=gray] (0,0) -- (3.25,0);
	\node[color=gray, rotate=90] at (3,0.5) {\tiny MDPs};
	\node[color=gray, rotate=90] at (3,-0.5) {\tiny DTMCs};
	
	\draw[-, dotted, color=gray] (0,0) edge[bend left=15] (-1,1);
	\draw[-, dotted, color=gray] (0,0) edge[bend right=15] (-1,-1);
	\node[color=gray, rotate=90, align=center] at (-2.75,0) {\tiny turn-based \\[-1.5ex] \tiny non-stoch.\ games};
	
	\node (HyperSt2) at (0,0) {\HyperSGL};
	
	\node (PHL) at (0,0.75) {\PHL};
	\node (HyperPCTLstar) at (0.1,-0.75) {\HyperPCTLstar};

	\path (PHL) to node[midway,sloped=false,color=red] {$\lightning$} (HyperSt2);
	\path (HyperPCTLstar) to node[midway,sloped=false,color=red] {$\lightning$} (HyperSt2);
	
	\node[align=center] (HyperPCTLMDP) at (2.05,0.5) {\HyperPCTL \\[0ex] \smaller on MDPs};
	\node[align=center] (HyperPCTLDTMC) at (2.05,-0.5) {\HyperPCTL \\[0ex] \smaller on DTMCs};
	
	\draw[-, color=gray] (HyperPCTLMDP.west) edge[bend right=10] node[text=red]{$\lightning$} (HyperSt2.10);
	\draw[-, color=gray] (HyperPCTLDTMC.west) edge[bend left=10] node[text=green100,xscale=-1]{$\subset$} (HyperSt2.350);

	\node (HyperSL) at (-1.75,0.75) {\HyperSL};
	\node (HyperSLuni) at (-1.75,0) {\HyperSLfrag};
	\node (HyperATLstar) at (-1.75,-0.75) {\HyperATLstar};
	
	\path (HyperATLstar) to node[midway] {$\subseteq$} (HyperSLuni) to node[midway] {$\subseteq$} (HyperSL);
	
	\path (HyperSLuni) to node[midway, color=green100] {$\subseteq$} (HyperSt2);
\end{tikzpicture}
	}
	\caption{Overview of the relationship between \HyperSGL and other logics. \textcolor{red}{$\lightning$}: Logics syntactically incompatible. \textcolor{orange}{?}: Open problem / depicted relation is a conjecture.}
	\label{fig:overview-comparison}
\end{figure}

\subsection{Hyperlogics for Stochastic Systems}

The probabilistic hyperlogics \HyperPCTL, \PHL and \HyperPCTLstar take different approaches for extending temporal logics to hyperlogics on stochastic systems.
Apart from syntactical differences, they also make different assumptions about initial states:
\PHL is defined for systems with initial state distributions and \HyperPCTLstar for systems with a unique initial state, and both quantify only over paths from initial states.%
\footnote{Note that an MDP with a fixed initial state is a special case of one with an initial distribution, and that an MDP with an initial distribution can be transformed into one with a fresh initial state whose outgoing transitions match the initial distribution.}
In contrast, \HyperPCTL\xspace --- like \HyperSGL\space --- is defined for systems without specified initial states and allows to quantify over all states.
A technical difference between the logics is thus that \HyperPCTL and \HyperSGL can reason about unreachable states while \HyperPCTLstar and \PHL cannot.

We note that we decided to base \HyperSGL on \HyperPCTL, and the relationship between \PHL, \HyperPCTL, and \HyperPCTLstar is not fully understood yet. 

\paragraph{\HyperPCTL}
\label{sec:hyperpctl}
for DTMCs \cite{abrahamHyperPCTLTemporal2018} extends \PCTL by explicit quantification over states
and allows to express probabilistic relations between several executions of a DTMC.
\HyperPCTL includes the bounded Until operator, which we decided to leave out in \HyperSGL for the sake of simplicity. It would be straightforward to extend \HyperSGL with the bounded Until operator, and this would not increase the model-checking complexity.
\HyperSGL subsumes \HyperPCTL without bounded Until on DTMCs.

\begin{proposition} \label{prop:hyperpctl}
	For a \HyperPCTL formula $\phi := Q_1 \varstate_1 \ldots Q_n \varstate_n .\ \pnonquant$ with $n$ state quantifiers that does not contain bounded Until, it holds that $\dtmc \modelsHyperPCTL \phi$ iff $\tbg_\dtmc \modelsHyperSGL Q_1 \varstate_1 \ldots Q_n \varstate_n .\ \forallStrat{1}{\varstate_1, \ldots, \varstate_n} .\ \pnonquant$ for all DTMCs $\dtmc$ where $\tbg_\dtmc$ corresponds to $\dtmc$ interpreted as a turn-based game $\tbg_\dtmc$ with $\agents = \{1\}$.
\end{proposition}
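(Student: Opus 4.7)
The plan is to exploit the fact that a DTMC, viewed as a single-agent TSG, admits no real nondeterminism: when $\tbg_\dtmc$ is constructed, every state has only the enabled actions needed to realize the DTMC's transition distribution, so every strategy $\strat \in \Strats[\tbg_\dtmc][\{1\}]$ must select this action (with probability one) at every state. Consequently, $\tbg_\dtmc^\strat$ is isomorphic to $\dtmc$ via the projection $(\state,\mode) \mapsto \state$ that forgets the strategy's internal mode, and the universally quantified strategy prefix $\forallStrat{1}{\varstate_1,\ldots,\varstate_n}$ has no effect on the truth value.

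First I would pin down the translation $\tbg_\dtmc$: take $\agents = \{1\}$, $\states = \states_1 = \states^\dtmc$, a singleton action set whose distribution at each state coincides with $\Trans^\dtmc$, and preserve $\AP$ and $\labelingfct$. I would then prove a projection lemma stating that, for any joint strategy $\strat$ on $\tbg_\dtmc$ and any choice of strategies $\strat_1,\ldots,\strat_n$ assigned by $\mapStrat$, the composition $\tbg_\dtmc^{\strat_1} \times \ldots \times \tbg_\dtmc^{\strat_n}$ is isomorphic, via the componentwise mode-forgetting projection, to the $n$-fold product of $\dtmc$ against which HyperPCTL implicitly evaluates tuples of paths, with matching labels and transition probabilities. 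This equates cylinder measures over $n$-tuples of paths in the HyperSGL semantics with those in the HyperPCTL semantics.

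The main body of the argument is a structural induction on the non-quantified subformula $\pnonquant$, showing that $\Ctxtup[\tbg_\dtmc][\mapStrat][\statetupMap] \modelsHyperSGL \pnonquant$ iff $\dtmc, \statetupMap \modelsHyperPCTL \pnonquant$ whenever the state mapping $\statetupMap$ coincides on both sides. Boolean connectives, atomic propositions, and state equalities are immediate from the definitions. For probability expressions $\Prob(\ppath)$ with $\ppath \in \{\Next \pnonquant', \pnonquant_1 \Until \pnonquant_2\}$, invoke the projection lemma to equate the two cylinder measures and then apply the inductive hypothesis on the subformulas encountered along the path. The outer state-quantifier prefix $Q_1 \varstate_1 \ldots Q_n \varstate_n$ is handled by an outer induction that unrolls the quantifiers simultaneously in both semantics over the shared domain $\states$.

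The main obstacle I expect is the technical bookkeeping around HyperSGL's history shift $\mapStrat^{\pi[0,j]}$ applied when evaluating temporal operators. One must verify that this shift does not break the correspondence to HyperPCTL, which uses no such shift. Since every strategy on $\tbg_\dtmc$ selects the unique enabled action regardless of mode, the shifted strategy remains behaviorally identical to the original modulo the mode projection, so the inductive steps for $\Next$ and $\Until$ go through unchanged and the equivalence extends to arbitrary $\pnonquant$.
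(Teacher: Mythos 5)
Your proposal is correct. The paper states this proposition without giving an explicit proof (it is treated as immediate from the construction), and your argument supplies exactly the details one would expect: a singleton action set forces every strategy to be behaviorally trivial, the mode-forgetting projection identifies $\tbg_\dtmc^{\strat_1}\times\ldots\times\tbg_\dtmc^{\strat_n}$ with the $n$-fold self-composition of $\dtmc$ used by \HyperPCTL, and a structural induction transfers satisfaction. You also correctly isolate and dispatch the one genuinely subtle point, namely that the history shift $\mapStrat^{\pi[0,j]}$ in the \HyperSGL semantics of temporal operators is vacuous here because action selection is independent of the mode.
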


\HyperPCTL was extended to MDPs \cite{dobeModelChecking2022} by adding quantification over schedulers.
\HyperSGL cannot subsume \HyperPCTL on MDPs, since \HyperPCTL quantifies first over strategies (also called schedulers in the MDP-setting) and then over states, while \HyperSGL quantifies first over states and then over strategies.\footnote{As mentioned in \cref{sec:logic}, while \HyperSGL is based on \HyperPCTL, a more flexible strategy quantification structure, including in particular also nested strategy quantification, seem more appropriate for games, and therefore the quantification order is switched compared to \HyperPCTL.}
Intuitively, in a \HyperSGL formula of the form $\forall \varstate_1 \forall \varstate_2 .\ \existsStrat{1}{\varstate_1} \existsStrat{1}{\varstate_2} \pnonquant$ we can pick the schedulers depending on the instantiations of \emph{both} state variables, 
but in a \HyperPCTL formula of the form $\exists \varsched_1 \exists \varsched_2 .\ \forall \varstate_1 (\varsched_1) \forall \varstate_2 (\varsched_2) .\ \pnonquant$ we have to find schedulers that work for all possible instantiations of the state variables. 
Specifically, the behavior of the scheduler assigned to $\varsched_1$ cannot depend on the initial state assignment of $\varstate_2$. 

\begin{conjecture}
	\HyperSGL does not subsume \HyperPCTL on MDPs and vice versa. 
\end{conjecture}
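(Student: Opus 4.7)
The plan is to prove each direction separately, each by exhibiting a formula in one logic together with a pair of single-player TSGs (i.e., MDPs) that the other logic cannot tell apart. Both directions exploit the fundamental asymmetry highlighted in the excerpt: \HyperPCTL fixes schedulers \emph{before} state variables, whereas \HyperSGL fixes states \emph{before} strategies and additionally allows nested strategy quantification inside probability operators.

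For the direction \emph{\HyperSGL does not subsume \HyperPCTL on MDPs}, I would take as witness a \HyperPCTL formula of the form $\exists \varsched.\ \forall \varstate_1(\varsched)\ \forall \varstate_2(\varsched).\ \Prob(\Finally t_{\varstate_1}) = \Prob(\Finally t_{\varstate_2})$, which asserts the existence of one \emph{global} scheduler that equalises the reachability probabilities from every pair of starting states. The natural \HyperSGL counterpart $\forall \varstate_1 \forall \varstate_2\, \existsStrat{1}{\varstate_1, \varstate_2} \Prob(\Finally t_{\varstate_1}) = \Prob(\Finally t_{\varstate_2})$ is strictly weaker, because the strategy may be chosen as a function of the pair $(\varstate_1,\varstate_2)$. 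I would then construct two MDPs $\mdp_1,\mdp_2$ such that (i) for every pair of starting states a local equalising strategy exists in both, but (ii) only $\mdp_1$ admits a single uniform scheduler achieving this. The witness formula separates $\mdp_1$ from $\mdp_2$, while any \HyperSGL formula must be shown unable to. To establish (ii) I would develop a structural equivalence relation between $\mdp_1$ and $\mdp_2$, refined from probabilistic bisimulation to track pointwise strategic choices, and prove by induction on formula structure that it is preserved by every \HyperSGL construct.

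For the direction \emph{\HyperPCTL on MDPs does not subsume \HyperSGL}, I would exploit nested strategy quantification inside temporal operators, as appearing in \cref{ex:nested}. A witness is the formula $\forall \varstate.\ \forallStrat{1}{\varstate} \Prob\bigl(\Finally\, (\existsStrat{1}{\varstate} \Prob(\Finally t_{\varstate}) = 1)\bigr) = 1$, which says that against every adversarial scheduler one almost-surely reaches a position from which the protagonist can reset to an almost-sure winning scheduler. Because \HyperPCTL confines all quantifiers to an outermost prefix, the second strategy choice cannot be encoded, as it must react to the random trajectory observed so far. The plan is to design two MDPs differing precisely in whether such ``second-chance'' strategies are available on almost-all trajectories, and then to show \HyperPCTL-equivalence via a quantifier-prefix game that flattens all scheduler choices and hence cannot distinguish the two models.

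The main obstacle in both directions is the non-expressibility side: no suitable game-theoretic or bisimulation-based characterisation of \HyperSGL- or \HyperPCTL-equivalence is currently available in the literature, so such a notion must be developed from scratch, generalising probabilistic bisimulation to handle hyperproperties and (for \HyperSGL) nested strategy quantifiers. I would first attempt to isolate the phenomenon in small symmetric gadget MDPs where quantifier alternations can be enumerated explicitly, before lifting the argument to a general structural equivalence. An additional subtlety is that strategies in \HyperSGL may be memoryful and randomised, which complicates any inductive invariant; restricting attention initially to memoryless deterministic strategies (in line with the decidable fragment of \cref{sec:mc}) is likely to make the separation tractable before tackling the general case.
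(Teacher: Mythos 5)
This statement is labeled as a \emph{conjecture} in the paper: the authors do not prove it, and offer only the informal observation that \HyperPCTL quantifies over schedulers before states while \HyperSGL quantifies over states before strategies, so that in $\forall \varstate_1 \forall \varstate_2 .\ \existsStrat{1}{\varstate_1}\existsStrat{1}{\varstate_2} \pnonquant$ the strategies may depend on the instantiation of both state variables, whereas in $\exists \varsched_1 \exists \varsched_2 .\ \forall \varstate_1(\varsched_1) \forall \varstate_2(\varsched_2) .\ \pnonquant$ they may not. Your plan correctly reproduces this intuition for one direction and adds a sensible second witness (nested strategy quantification inside a temporal operator) for the other, which is consistent with the paper's broader discussion of why \HyperSGL's quantification structure differs from \HyperPCTL's.

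However, what you have written is a research programme, not a proof, and the gap is exactly where you say it is: both directions hinge on non-expressibility arguments, and you explicitly defer the construction of the separating models and of the equivalence notion (a hyperproperty-aware, strategy-sensitive refinement of probabilistic bisimulation) that would show no formula of the other logic can distinguish them. Until those invariants are defined and the induction over formula structure is actually carried out, neither direction is established; in particular, ruling out that some \emph{other} \HyperSGL formula (possibly using extra state variables, the $\varstate = \varstate'$ predicate, or nested quantifiers) simulates the global-scheduler quantification is precisely the difficulty that leaves this as an open conjecture in the paper. One further caution on your first witness: the paper notes that for a single state and scheduler variable the quantifier order is immaterial over general (memoryful) schedulers, because schedulers can branch on the starting state; your two-variable witness must be chosen so that this combination trick provably fails, which your sketch asserts but does not yet demonstrate.
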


If \HyperSGL first quantified over strategies and then over states, then \HyperSGL would strictly subsume \HyperPCTL (without bounded Until) over MDPs.

For only a \emph{single} state and scheduler variable, the quantification order does not matter over general schedulers: Intuitively, if for all states $\state$ there exists a scheduler $\sched_\state$ such that $\pnonquant$ holds, we can construct a scheduler satisfying $\pnonquant$ from every state by choosing the appropriate $\sched_\state$ depending on the initial assignment of $\varstate$. 
Over memoryless deterministic schedulers, however, the quantification order does matter since we cannot change behavior depending on where we start.

\paragraph{\PHL}
\label{sec:phl}
\cite{dimitrovaProbabilisticHyperproperties2020} allows to define probabilistic hyperproperties over MDPs with an initial state distribution.
\PHL formulas start with scheduler quantification\footnote{To be precise, every formula can be brought into PNF, i.e., transformed such that scheduler quantification only occurs in a prefix.}, followed by a Boolean combination of 
(1) comparison of probabilistic expressions over \LTL marked with scheduler variables, and 
(2) \HyperCTLstar formulas.
These two types of subformula cannot be mixed.
In the first case, paths start at any of the initial states. 
In the second case, they start at the current state of the path variable that was quantified last (or initially at an initial state).
\HyperSGL allows to restrict state quantification to initial states but does not (natively) allow to start a new experiment at the current state of an existing experiment.

Neither case is syntactically compatible with \HyperSGL:
(1) \HyperSGL allows nesting probability operators but not direct nesting of temporal operators, while \PHL allows the latter but not the former, and
(2) \HyperCTLstar subsumes \CTLstar while \HyperSGL subsumes \PCTL, and \CTLstar and \PCTL are incomparable over MDPs. 

\begin{conjecture}
	\HyperSGL does not subsume \PHL on MDPs and vice versa.
\end{conjecture}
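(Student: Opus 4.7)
The plan is to prove each direction of the conjecture separately, in each case by exhibiting a candidate witness formula in one logic and arguing that no equivalent formula exists in the other. Both arguments rest on syntactic features of the logics noted in the preceding paragraph, but turning these observations into actual non-expressibility results is where the real work lies.

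For the direction \PHL $\not\subseteq$ \HyperSGL, I would exploit the fact that \PHL inherits the power of \HyperCTLstar and hence allows direct nesting of temporal operators within a single path formula. A natural witness is
\[
\phi_{\PHL} \;:=\; \forall \varsched. \forall \varpath(\varsched).\ \Globally \Finally p_{\varpath},
\]
expressing that under every scheduler, every path satisfies $p$ infinitely often. In \HyperSGL, the grammar of path formulas is restricted to $\Next \pnonquant$ and $\pnonquant \Until \pnonquant$, so nested temporal operators may only occur with an intervening probability operator. My plan is to adapt the classical separation between \CTLstar and \PCTL to the hyper-setting: I would build a family of MDPs $\{\mdp_n\}$ (or a pair of MDPs related by a refinement that preserves single-step probabilistic behavior while altering the infinite-trace behavior) such that $\phi_{\PHL}$ distinguishes them, but every \HyperSGL formula up to some bound yields the same verdict, reducing to a bisimulation-like invariant on the finite-depth \PCTL-style fragment.

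For the direction \HyperSGL $\not\subseteq$ \PHL, I would use a \HyperSGL formula relying on nested probability operators and/or quantification over unreachable states. A concrete candidate, adapted from Example~\ref{ex:nested}, is
\[
\phi_{\HyperSGL} \;:=\; \forall \varstate.\ \existsStrat{1,2}{\varstate}\ \Prob\bigl(\Finally\ \existsStrat{1}{\varstate} \Prob(\Finally t_{\varstate}) = 1 \bigr) = 1,
\]
whose satisfaction requires evaluating a probability expression inside the scope of a temporal operator inside another probability expression. In \PHL, probability operators occur only at the top level of comparisons, so once inside an \LTL path formula one cannot re-enter a probability expression at all. The approach is to build two MDPs that induce identical path distributions under every fixed scheduler profile (so all \PHL formulas of the form $\Prob(\psi) \sim c$ agree) yet differ in the existence of the inner optimal strategy required at intermediate states by $\phi_{\HyperSGL}$. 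A secondary witness invoking an unreachable state could rule out even \PHL formulas that mix in a \HyperCTLstar conjunct.

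The main obstacle in both directions is the absence of a general non-expressibility toolkit for these logics. Classical separations such as \CTLstar vs. \PCTL rely on model-theoretic games or bisimulation arguments that do not directly transfer to the hyper and probabilistic setting, where quantifier alternation over states and strategies complicates any invariance notion. I therefore expect the bulk of the proof to consist of (i) defining a suitable equivalence on MDPs that provably preserves \HyperSGL (respectively \PHL) satisfaction, and (ii) verifying that the chosen witness formula falls outside this equivalence. This is precisely why the authors leave the statement as a conjecture rather than a theorem, and why a complete proof is likely to require nontrivial model-theoretic machinery beyond the scope of the current paper.
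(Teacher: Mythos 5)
The paper does not prove this statement at all: it is explicitly presented as a conjecture, supported only by the informal observation that the two logics are syntactically incompatible (\HyperSGL nests probability operators but not temporal operators, \PHL the reverse, and \CTLstar vs.\ \PCTL are incomparable). Your proposal is an honest elaboration of that same informal reasoning into a proof \emph{plan}, and you correctly identify that the real work --- establishing an invariance notion under which one logic cannot distinguish models that the other can --- is missing. So there is no completed argument here to certify, on either side.

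Beyond the admitted incompleteness, two of your concrete choices are shaky. For the direction \PHL $\not\subseteq$ \HyperSGL, your witness $\forall\varsched.\forall\varpath(\varsched).\;\Globally\Finally p_{\varpath}$ is a poor starting point for ``adapting the classical \CTLstar/\CTL separation'': the universal formula $\mathsf{A}\,\Globally\Finally p$ is equivalent to $\mathsf{A}\Globally\,\mathsf{A}\Finally p$ and hence \emph{is} \CTL-expressible, which suggests an analogous nesting of probability operators in \HyperSGL (e.g.\ $\Prob(\Globally(\Prob(\Finally p)=1))=1$-style formulas) may well capture it over finite MDPs; the classical inexpressibility examples are the existential/positive-probability variants such as $\Prob_{>0}(\Globally\Finally p)$, and even for those the known \PCTL-inexpressibility proofs are delicate. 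For the other direction, your plan to build two MDPs ``inducing identical path distributions under every fixed scheduler profile'' so that all \PHL formulas agree does not account for the \HyperCTLstar fragment of \PHL, which performs branching-time (not purely distributional) reasoning; moreover, if the two MDPs really agree on all scheduler-induced distributions it is unclear how they could disagree on the \HyperSGL witness, whose semantics is also defined via induced DTMCs. In short, the gap is not merely that the invariance lemmas are unproved --- the specific witnesses and model constructions as stated would likely not survive them.
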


\paragraph{\HyperPCTLstar}
\label{sec:hyperpctlstar}
\cite{wangStatisticalModel2021} is defined for DTMCs with a specified initial state.
\HyperPCTLstar extends \PCTLstar and thus allows direct nesting of temporal operators, which is not allowed in \HyperSGL.
\HyperPCTLstar can be seen as a probabilistic extension of \HyperLTL~\cite{clarksonTemporalLogics2014}, replacing the explicit path quantifiers from \HyperLTL by probability operators annotated with a number of path variables that indicate how to draw the random paths over which the probability operator is evaluated.
Paths are drawn from the initial state of the DTMC or the initial state of the path currently assigned to some path variable.
While a probability operator is evaluated only over the specified paths, one may still refer to paths drawn outside of the current probability operator but the evolution of such previously drawn paths is `set in stone'. 
In contrast, \HyperSGL does not support starting new experiments in some subformula since state quantification is restricted to the beginning of the formula.
A probability operator is evaluated over all paths starting from the current state assignments, i.e., all paths are drawn anew and no path is `set in stone' like in \HyperPCTLstar. 
These differences illustrate that \HyperPCTLstar and \HyperSGL use very different approaches and thus are hard to compare.

\begin{conjecture}
	\HyperPCTLstar does not subsume \HyperSGL on MDPs and 
	vice versa.
\end{conjecture}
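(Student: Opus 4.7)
The conjecture has two directions, each requiring a separate separation witness. I will sketch both.

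\textit{Direction 1: \HyperSGL $\not\subseteq$ \HyperPCTLstar.} The plan is to exploit that \HyperPCTLstar only draws paths from the designated initial state (or from previously drawn paths, whose evolution is in turn rooted at the initial state), while \HyperSGL state quantifiers range over \emph{all} states of the model, including unreachable ones. Concretely, I would construct two DTMCs $\dtmc_1$ and $\dtmc_2$ that share the same initial state $\sinit$ with a self-loop (so the initial-state-reachable fragment is identical in both), but $\dtmc_1$ contains an additional isolated state labeled with some proposition $\ap$ that does not appear anywhere in $\dtmc_2$. The \HyperSGL formula $\exists \varstate.\ \ap_{\varstate}$ holds on $\dtmc_1$ and fails on $\dtmc_2$. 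The task is then to prove, by structural induction on \HyperPCTLstar formulas, that every subformula evaluates identically on $\dtmc_1$ and $\dtmc_2$: the base case is trivial since $\ap$ does not label any initial-reachable state, and the inductive step follows because every path draw remains inside the reachable fragment, which is bisimilar in the two models.

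\textit{Direction 2: \HyperPCTLstar $\not\subseteq$ \HyperSGL.} The plan is to exploit the fact that \HyperPCTLstar subsumes \PCTLstar, and thus allows direct nesting of temporal operators (e.g.\ $\Prob(\Globally \Finally \ap) > 0$), whereas \HyperSGL syntactically restricts path formulas to a single $\Next$ or $\Until$. I would pick a classical witness separating \PCTLstar from \PCTL over DTMCs, and argue that even \HyperSGL's additional hyper-features (multiple state variables, strategy quantification) cannot recover this formula. The argument should proceed by lifting the classical \PCTL/\PCTLstar separation to the hyper setting: given two DTMCs $\dtmc_1, \dtmc_2$ that are \PCTL-indistinguishable but \PCTLstar-distinguishable (related by an appropriate probabilistic bisimulation on the non-hyper level), I would establish a hyper-bisimulation lemma stating that tuples of bisimilar states satisfy exactly the same \HyperSGL formulas, so that $\dtmc_1 \modelsHyperSGL \phi \iff \dtmc_2 \modelsHyperSGL \phi$ for every \HyperSGL $\phi$. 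Since the chosen \HyperPCTLstar witness distinguishes $\dtmc_1$ and $\dtmc_2$, it is not \HyperSGL-expressible.

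\textit{Main obstacle.} The first direction is essentially a semantic-locality observation and should be routine. The real difficulty lies in the second direction: formulating and proving the hyper-bisimulation preservation lemma for \HyperSGL. The subtle point is strategy quantification, which in the DTMC/single-agent special case collapses but which must nevertheless be handled uniformly; one needs to show that bisimilar states admit matching strategies in a way that preserves the induced DTMC compositions used to evaluate $\Prob(\cdot)$. Once this preservation lemma is in place, both incomparability results follow immediately from the chosen witnesses.
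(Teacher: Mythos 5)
The paper gives no proof of this statement --- it is explicitly left as a conjecture, supported only by the informal observation that the two logics take very different approaches (nesting of temporal operators and path-drawing from initial states in \HyperPCTLstar versus quantifier prefixes and probability-operator nesting in \HyperSGL). So you are not deviating from a proof in the paper; you are attempting to settle an open problem. Your Direction~1 is plausible: since \HyperPCTLstar only ever draws paths from the initial state or from states lying on previously drawn paths, its semantics depends only on the reachable fragment, so an induction showing invariance under adding an unreachable $\ap$-labeled state should go through, and $\exists \varstate.\ \ap_{\varstate}$ is a legitimate well-formed \HyperSGL witness. The one caveat is definitional: the two logics are evaluated over different presentations of a model (with versus without a designated initial state), so you must fix the notion of subsumption before the unreachable-state trick counts as a separation rather than a modeling artifact.

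Direction~2, however, has a concrete gap that breaks the plan at its first step: the witness pair you want does not exist. For finite DTMCs, probabilistic bisimilarity, \PCTL-equivalence, and \PCTLstar-equivalence all coincide (Baier--Katoen, \emph{Principles of Model Checking}, Thm.~10.67), so there are no two finite DTMCs that are \PCTL-indistinguishable yet \PCTLstar-distinguishable. The classical separation of \PCTLstar (or \CTLstar) from \PCTL (or \CTL) is not obtained from a single distinguishing pair of bisimulation-inequivalent models; it requires a \emph{family} of models $\dtmc_n$ together with an argument that no formula of bounded size can distinguish $\dtmc_n$ from $\dtmc_n'$ for large $n$. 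Consequently your ``hyper-bisimulation preservation lemma,'' even if proved, would be vacuous for the intended application, since the two models it is applied to would already satisfy the same \PCTLstar formulas. A correct argument would have to (i) rebuild the size-parametrized family argument, and (ii) show that \HyperSGL's extra power --- several state variables over the same model, i.e., self-composition, which is known in the non-probabilistic hyperlogic literature to simulate some nesting of temporal operators --- still cannot express the chosen \HyperPCTLstar formula. That second point is exactly the hard open part, and it is why the paper records this only as a conjecture.
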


\subsection{Hyperlogics for Non-Stochastic Games}

\paragraph{\HyperSL}
\label{sec:hypersl}
\cite{beutnerHyperStrategy2024} is an extension of strategy logic \cite{chatterjeeStrategyLogic2007} that allows to specify strategic hyperproperties for concurrent non-stochastic games over history-dependent deterministic strategies.
\HyperSL state formulas allow to explicitly quantify over named strategies, combine the quantified strategy variables into strategy profiles, bind the plays resulting from these strategy profiles to path variables, and then evaluate a path formula with respect to these path variables.
Path formulas consist of Boolean and temporal operators, atomic propositions indexed by a path variable, and nested state formulas indexed by a path variable.
A nested state formula is evaluated from the current state of the indexing path variable.
The detailed syntax and semantics are given in \iftoggle{extended}{\cref{app:hypersl-proof}}{the extended version}.

\HyperSL and \HyperSGL can combine strategies in similar ways: Several experiments (corresponding to path variables in \HyperSL and state variables in \HyperSGL) can share a strategy and strategy quantification can be nested without overwriting previous quantification. 
However, in \HyperSL, all currently fixed strategies and paths are `forgotten' when nesting strategy quantification in path formulas (by nesting a state formula inside a path formula). 
In \HyperSGL, on the other hand, nested strategy quantification only overwrites the current strategy for the specified agents and state variables. 

\HyperSL is defined for concurrent games that have the same actions available for all agents at each state and allows to assign the same strategy variable to different agents.
For example, a \HyperSL formula of the form $\exists \varstratt .\ \psi[\varpath : (\varstratt, \varstratt)]$ for a concurrent two-player game asks for a strategy $\varstratt$ such that $\psi$ holds if both agents follow $\varstratt$.
However, in turn-based games, 
only one agent can make a move at every state
and thus sharing strategies between agents does not seem meaningful here.
In order to compare the expressivity of \HyperSL and \HyperSGL over turn-based non-stochastic games, it is hence reasonable to only consider the fragment of \HyperSL consisting of all formulas where every strategy variable is only assigned to a unique agent. 
We call this fragment \HyperSLfrag.
We will show that each \HyperSLfrag formula can be transformed into a \HyperSGL formula.
The translation of path formulas is straightforward apart from the translation of a state formula $\phi$ nested inside a path formula $\psi = (\phi)_{\varpath}$.
This nesting starts new experiments (i.e., introduces new path variables) at the current state of another experiment (namely at the first state of the path currently assigned to $\varpath$).
The \HyperSGL syntax does not allow to start a new experiment inside a non-quantified formula (state quantification is restricted to the prefix of a formula). 
However, we can simulate the behavior of \HyperSL by starting all experiments initially and enforcing that experiments occurring in a nested state formula follow the same strategies as the current indexing experiment until the nested formula is reached.

For example, consider the following \HyperSL formula, which serves purely illustrative purposes and does not necessarily express a meaningful property,
\[\arraycolsep=0pt
\begin{array}{rl}
	\phiex &:= 
	\exists \varstratt_1, \varstratt_2, \varstratt_3 
	\Big( \ap_{\varpath_1} \Until \Big[ 
	{\color{cyan}
		\exists \varstratt'_1, \varstratt'_2 \
		\psi'%
		[\varpath_3 : (\varstratt'_1, \varstratt'_2)]
	}
	\Big]_{\hspace{-0.05cm}\varpath_2}
	\Big)%
	[\varpath_1 {:} (\varstratt_1, \varstratt_2), \varpath_2 {:} (\varstratt_1, \varstratt_3)] \\
	{\color{cyan}
	\psi'} &{\color{cyan}:= \Finally \Big( \aptwo_{\varpath_3} \land \Big[ 
	{\color{blue}
		\exists \varstratt''_1, \varstratt''_2 \ ( \overbrace{\ap_{\varpath_4}}^{\psi''})[\varpath_4 : (\varstratt''_1, \varstratt''_2)]}
	\Big]_{\hspace{-0.05cm}\varpath_3} \Big)
	}
\end{array}\]
\begin{wrapfigure}[6]{r}{2.2cm}
	\centering
	\vspace{-2\baselineskip}
	\begin{tikzpicture}[every node/.style={shape=circle, inner sep=0}]
		\node (one0) at (0,0) {};
		\node (one1) at (0,-2) {$\varpath_2$};
		\node (two0) at (0.05,0) {};
		\node (two1) at (0.05,-0.7) {};
		\node[color=cyan] (two3) at (0.9,-2) {$\varpath_3$}; 
		\node (three0) at (0.1,0) {};
		\node (three1) at (0.1,-0.7) {}; 
		\node (three2) at (0.58,-1.4) {};
		\node[color=blue] (three3) at (1.7,-2) {$\varpath_4$};
		
		\tikzset{decoration={snake,amplitude=.2mm,segment length=3mm,
				post length=0mm,pre length=0mm}}
		\draw[decorate] (one0) -- (one1);
		\draw[decorate, color=cyan] (two0) -- (two1) -- (two3);
		\draw[decorate, color=blue] (three0) -- (three1) to[out=-50, in=120] (three2) to[out=-60] (three3);
		
		\node[xshift=0.3cm, color=cyan] (psip) at (three1) {$\psi'$};
		\node[xshift=-0.2cm, color=cyan] (psipl) at (three1) {};
		\node[xshift=0.15cm, color=cyan] (psipr) at (three1) {};
		
		\draw[densely dotted, thick, color=cyan] (psipl) -- (psipr);
		
		\node[xshift=0.4cm, yshift=0.05cm, color=blue] (psipp) at (three2) {$\psi''$};
		\node[xshift=-0.25cm, color=blue] (psippl) at (three2) {};
		\node[xshift=0.175cm, color=blue] (psippr) at (three2) {};
		
		\draw[densely dotted, thick, color=blue] (psippl) -- (psippr);
	\end{tikzpicture}
\end{wrapfigure}
It first starts experiments (paths) $\varpath_1$ and $\varpath_2$ at the initial state using strategy profiles $(\varstratt_1, \varstratt_2)$ and $(\varstratt_1, \varstratt_3)$, respectively.
On the right-hand side of the Until, it then starts \textcolor{cyan}{$\varpath_3$} at the current state of $\varpath_2$ and then \textcolor{blue}{$\varpath_4$} at the current state of \textbf{$\varpath_3$}, as illustrated on the right.
We can mimic this in \HyperSGL by starting four experiments $\varstate_1, \ldots, \varstate_4$ from the initial state and specifying that (1) $\varstate_3$ and $\varstate_4$ initially follow the same strategies as $\varstate_2$, (2) $\varstate_4$ mimics $\varstate_3$ once we step inside $\psi'$, and (3) $\varstate_4$ follows its `own' strategy profile inside $\psi''$:
\[\arraycolsep=0pt
\begin{array}{l}
	\forall \varstate_1, \varstate_2, \varstate_3, \varstate_4 .\ (\bigwedge_{i=1}^{4} \init_{\varstate_i}) \implies {}
	\translation(\phiex) 
	\qquad \text{where}
	\\
	\translation(\phiex) 
	:= 
	\existsStrat{1}{\varstate_1, \varstate_2, \varstate_3, \varstate_4}
	\existsStrat{2}{\varstate_1}
	\existsStrat{2}{\varstate_2, \varstate_3, \varstate_4}
	\Prob \Big( \ap_{\varstate_1} \Until 
	{\color{cyan}
		\existsStrat{1}{\varstate_3, \varstate_4}
		\existsStrat{2}{\varstate_3, \varstate_4}
		\translation(\psi')%
	}
	\Big) = 1
	\\
	{\color{cyan}
		\translation(\psi')} 
		\;\;\;
		{\color{cyan}:=
		\Prob\Big( \Finally ( \aptwo_{\varstate_3} \land 
		{\color{blue}
			\existsStrat{1}{\varstate_4} 
			\existsStrat{2}{\varstate_4} 
				\ap_{\varstate_4}
		}
		) \Big) = 1
	}
\end{array}\]
\indent Since \HyperSL is defined for deterministic memoryful strategies, we also restrict to this strategy class for \HyperSGL.
We interpret a non-stochastic game as a stochastic game with transition probabilities 0 or 1. 
Since \HyperSL is defined for games with an initial state $\sinit$, but \HyperSGL is evaluated over structures without initial states, we assume $\sinit$ to be labeled with $\init$ (see also \cref{sec:comp-stoch-games}).

\begin{restatable}{theorem}{hypersl}
	For every \HyperSLfrag state formula $\phi$, there exists a \HyperSGL formula $\phi'$, such that
	\[\tbg, \sinit \modelsHyperSL \phi \quad \iff \quad \tbg' \modelsHyperSGL \phi' \]
	over \emph{memoryful deterministic strategies},
	for all turn-based non-stochastic games $\tbg$ and states $\sinit$ of $\tbg$,  
	where $\tbg'$ is $\tbg$ interpreted as a stochastic game with transition probabilities 0 and 1, where additionally $\sinit$ is labeled with a fresh atomic proposition $\init$. 
\end{restatable}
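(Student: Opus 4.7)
My plan is to define a translation $\translation$ from \HyperSLfrag state formulas to \HyperSGL formulas by structural recursion and then establish equivalence by induction on the formula. The first step is a preprocessing pass over the input formula $\phi$: I collect every path variable $\varpath$ occurring at any nesting depth, introduce a unique \HyperSGL state variable $\varstate_\varpath$, and take the overall translation to be
\[
\phi' := \forall \varstate_{\varpath_1} \ldots \forall \varstate_{\varpath_n}.\ \Big(\textstyle\bigwedge_{i=1}^n \init_{\varstate_{\varpath_i}}\Big) \implies \translation(\phi),
\]
so that all experiments are started at $\sinit$ from the outset. The guiding principle, already visible in the example preceding the theorem, is that state variables for path variables yet to be introduced inside nested state formulas will initially \emph{shadow} the trajectories of the path variables they will be nested under, and branch off precisely when the nesting point is reached.

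For a state formula $\exists \varstratt_1 \ldots \varstratt_k.\ \psi\,[\varpath_1{:}\vec{\varstratt}^1, \ldots, \varpath_m{:}\vec{\varstratt}^m]$, I would translate each $\exists \varstratt_i$ into a \HyperSGL strategy quantifier $\existsStrat{g_i}{R_i}$, where $g_i$ is the unique agent associated with $\varstratt_i$ in \HyperSLfrag, and $R_i$ is the set of state variables $\varstate_\varpath$ for which either (a) $\varpath$ is some $\varpath_j$ with $\varstratt_i$ appearing in $\vec{\varstratt}^j$, or (b) $\varpath$ is introduced by a state formula nested (directly or transitively) inside $\psi$ under an indexing $\varpath_j$ covered by (a). The path formula $\psi$ is then wrapped in a $\Prob(\cdot) = 1$ operator; this is faithful because $\tbg'$ is non-stochastic and strategies are deterministic, so each strategy profile induces a unique path on which the embedded \LTL-like formula either holds or does not. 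Atomic propositions $\ap_\varpath$ translate to $\ap_{\varstate_\varpath}$, Boolean and temporal operators carry over verbatim, and a nested state formula $[\phi']_\varpath$ is translated to $\translation(\phi')$ and placed directly inside the ambient probability expression; this is permitted by the \HyperSGL grammar, which allows $\pschedq$ to occur inside $\pnonquant$.

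Correctness would be proved by induction on the \HyperSLfrag formula structure, using the semantic invariant that at every point of the evaluation, the active \HyperSL configuration corresponds to the current \HyperSGL context in the sense that, for every currently active path variable $\varpath$, the state $\statetupMap(\varstate_\varpath)$ equals the current state of $\pathmap(\varpath)$, and the composition of the \HyperSGL strategies $\mapStrat(\varstate_\varpath, g)$ over $g \in \agents$ realizes the same action choices on this history as the \HyperSL strategy profile bound to $\varpath$. The main obstacle I expect is verifying this invariant across the boundary of a nested state formula: when the \HyperSGL evaluation enters $[\phi']_\varpath$ after evaluating a temporal operator, the strategy mapping is shifted by the history to $\mapStrat^{\pi[0,j]}$, and I have to show that (i) the shadowing state variables $\varstate_{\varpath'}$ for path variables $\varpath'$ introduced inside $\phi'$ are, at this moment, at the same game state as $\varstate_\varpath$, and (ii) overwriting their strategies via the newly introduced quantifiers yields a strategy profile whose induced path agrees with \HyperSL's ``start-here'' semantics for $\varpath'$. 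Both points rely crucially on the assumed determinism of strategies and absence of stochasticity, which together guarantee that two experiments following the same strategy profile from the same state produce identical trajectories --- the very reason the reduction fails for stochastic games or randomized strategies.
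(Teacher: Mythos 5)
Your construction is essentially the paper's: the same universally quantified, $\init$-guarded prefix with one state variable per path variable, the same ``shadowing'' discipline in which a state variable follows the trajectory of the path variable it will later branch off from (the paper formalizes your clause (b) via a \emph{reference variable} function $\RefVar$ and collects your set $R_i$ as $\Vars(\varstratt_i)$), the same direct inlining of nested state formulas, and the same induction invariant relating the \HyperSL path/strategy assignments to the \HyperSGL context, with the boundary of a nested state formula correctly identified as the delicate step.

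The one concrete gap is your treatment of temporal operators. You propose to let ``temporal operators carry over verbatim'' and to wrap the path formula $\psi$ in a single $\Prob(\cdot)=1$. That is not a \HyperSGL formula: the grammar only admits $\Next \pnonquant$ and $\pnonquant \Until \pnonquant$ under a probability operator, with no direct nesting of temporal operators, so an \HyperSL path formula such as $\Next(\ap_{\varpath} \Until \aptwo_{\varpath})$ cannot be placed under one $\Prob$. The translation must instead push the probability operator inside at every temporal connective, i.e.\ $\Next\psi_1 \mapsto \Prob(\Next \translation(\psi_1))=1$ and $\psi_1 \Until \psi_2 \mapsto \Prob(\translation(\psi_1) \Until \translation(\psi_2))=1$, as the paper does. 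Your justification (determinism plus absence of stochasticity yield a unique play, so the probability is $0$ or $1$) still applies, but it must now be invoked at every level together with the observation that \HyperSGL's strategy-shifting semantics $\mapStrat^{\pi[0,j]}$ keeps the shifted strategies consistent with the suffix of the unique \HyperSL play --- which is exactly the compatibility condition your invariant tracks.
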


\begin{proof}[Sketch]	
	We translate a \HyperSLfrag formula $\phi$ with path variables $\varpath_1$, \ldots, $\varpath_l$ to a \HyperSGL formula of the form $\forall \varstate_1 \ldots \forall \varstate_l .\ (\bigwedge_{\istate=1}^{l} \textit{init}_{\varstate_\istate}) \implies T(\phi)$, where state variable $\varstate_r$ captures $\varpath_r$ for $r=1, \ldots, l$. The size of our translation $\translation(\phi)$ is polynomial in the size of $\phi$.	
	Since each strategy variable is associated with exactly one agent by assumption, each strategy variable quantifier in \HyperSLfrag can be translated to a strategy quantifier in \HyperSGL. The challenge is determining which state variables should be associated with the strategy quantifier since
	new path variables start at the current state of a specified reference variable, yielding a tree of dependencies between the path variables.	
	In order to mimic this behavior in \HyperSGL, 
	we track these dependencies by determining the current \emph{reference variable} of each path variable in the context of each subformula, i.e., its `ancestor' in the dependency tree on the level corresponding to the subformula.
	For example, in the context of $\phiex$, the reference variables of $\varpath_4$ with respect to $\psi'$ and $\phiex$ are
	$\varpath_3$ and $\varpath_2$, respectively.
	%
	In order to translate a strategy quantifier $\exists \varstratt .\ \psi$, we thus collect all path variables $\varpath$ binding $\varstratt$ and additionally all path variables that have such a variable $\varpath$ as a reference variable for this subformula. 
	For example, $\varstratt_3$ in $\phiex$ is bound by $\varpath_2$, and $\varpath_2$ is the reference variable of $\varpath_3$ and $\varpath_4$ with respect to $\phiex$.
	We refer to \iftoggle{extended}{\cref{app:hypersl-proof}}{the extended version} for the full proof.
	\hfill\qed
\end{proof}

\paragraph*{\HyperATLstar}
\cite{beutnerHyperATLLogic2023} is a strategic hyperlogic for concurrent non-stochastic games that combines \ATLstar and \HyperCTLstar.
\HyperATLstar is subsumed by \HyperSL. In particular, existential strategy quantification $\existsStrat{\setofagents}{\!\!} \pi .\ \phi$ in \HyperATLstar can be translated to \HyperSL by existentially quantifying over fresh strategy variables for agents in $\setofagents$, and universally quantifying over fresh strategy variables for agents in $\overline{\setofagents}$. 
This translation does not require agents to share strategies. 
Thus, the resulting \HyperSL formula can be translated to \HyperSGL, and \HyperSGL subsumes \HyperATLstar.
\section{Model-checking}
\label{sec:mc}

Since \HyperSGL subsumes \PCTL and can reason about the existence of strategies, we can reduce the strategy-synthesis problem for single-player stochastic games with \PCTL objectives~\cite{brazdilStochasticGames2006} to \HyperSGL model-checking, yielding the following result.

\begin{theorem}
	\label{th:undecidable}
	The model-checking problem for \HyperSGL over memoryful probabilistic strategies is undecidable. 
\end{theorem}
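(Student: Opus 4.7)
The plan is to reduce the strategy-synthesis problem for MDPs with \PCTL objectives---shown undecidable in~\cite{brazdilStochasticGames2006} over memoryful probabilistic strategies---to \HyperSGL model-checking. Given an MDP $\mdp$ with initial state $\sinit$ and a \PCTL state formula $\phi$, the source problem asks whether there exists a memoryful probabilistic scheduler $\sched$ such that $\mdp^\sched, \sinit \modelsMD \phi$. Since an MDP is exactly a single-agent TSG, I would view $\mdp$ as a turn-based stochastic game $\tbg_\mdp$ with $\agents = \{1\}$, labeling $\sinit$ additionally with a fresh atomic proposition $\init$ so that the \HyperSGL formula can refer to it (analogously to the treatment in \cref{sec:comp-stoch-games}).

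The reduction constructs the \HyperSGL formula
\[
\psi \ := \ \forall \varstate .\ \init_{\varstate} \implies \existsStrat{1}{\varstate} \translation(\phi),
\]
where $\translation(\phi)$ is the straightforward translation of $\phi$ into a non-quantified \HyperSGL formula obtained by indexing every atomic proposition by $\varstate$, and mapping the \PCTL probability, Next, and Until operators to their \HyperSGL counterparts. This is essentially the embedding justified by \cref{prop:hyperpctl}, now applied inside a single existential strategy quantifier for the lone agent. The claim is that $\tbg_\mdp \modelsHyperSGL \psi$ iff a witnessing scheduler $\sched$ exists for $\mdp, \sinit \models \phi$.

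The two directions of the equivalence are direct from the semantics: the $\existsStrat{1}{\varstate}\!$ quantifier ranges over all strategies of agent $1$ and evaluates $\translation(\phi)$ in the DTMC induced by that strategy, which---since there are no other agents---coincides with $\mdp^\sched$. The universal state quantifier together with the $\init_{\varstate}$ guard pins the evaluation to $\sinit$. Hence a satisfying \HyperSGL strategy witness is precisely a scheduler solving the synthesis problem, and vice versa.

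The only technical point to verify---which I expect to be routine rather than an obstacle---is that, restricted to a single state variable and a single agent, the \HyperSGL semantics of the probability operator and of the temporal operators $\Next$ and $\Until$ matches the standard \PCTL semantics on the induced DTMC, including correct handling of nested probability operators (whose evaluation contexts are well-defined because only one state variable is in scope). This correspondence is exactly the content underlying \cref{prop:hyperpctl}. Since the reduction is polynomial and the source problem is undecidable over memoryful probabilistic strategies, the theorem follows.
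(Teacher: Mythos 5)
Your proposal is correct and follows essentially the same route as the paper, which justifies \cref{th:undecidable} by exactly this reduction from the (undecidable) strategy-synthesis problem for single-player stochastic games with \PCTL objectives~\cite{brazdilStochasticGames2006}, using the fact that \HyperSGL subsumes \PCTL and can existentially quantify over strategies. Your elaboration of the formula $\forall \varstate .\ \init_{\varstate} \implies \existsStrat{1}{\varstate} \translation(\phi)$ and the check that the single-variable, single-agent semantics collapses to standard \PCTL on the induced DTMC is exactly the content the paper leaves implicit.
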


Over bounded-memory strategies we can decide the model-checking problem by encoding it in \emph{non-linear real arithmetic} \cite{collinsQuantifier1975}\iftoggle{extended}{; see \cref{app:bounded-mem} for an encoding.}{ as shown in the extended version.}  

\begin{restatable}{theorem}{boundedMemDecidable}
	\label{th:bounded-mem-decidable}
	For $k \in \N$, the model-checking problem for \HyperSGL over $k$-memory probabilistic strategies is decidable. 
\end{restatable}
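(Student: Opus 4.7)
My plan is to reduce \HyperSGL model-checking over $k$-memory probabilistic strategies to the decidable first-order theory of the reals $(\R, +, \cdot, \leq)$. Given a finite TSG $\tbg$ and a well-formed formula $\phi$, I would construct inductively a closed real-arithmetic sentence $\Phi$ such that $\tbg \models \phi$ iff $\R \models \Phi$. The outer state quantifiers $Q_1 \varstate_1 \ldots Q_n \varstate_n$ become finite conjunctions or disjunctions over $\state \in \states$. A strategy quantifier $\existsStrat{\setofagents}{\setofvariables} \psi$ introduces a $k$-memory strategy for each agent $\agent \in \setofagents$: its discrete components $(\modes, \start, \modef)$ range over a finite set whose size is bounded in $k$ and $|\states|$, so I enumerate them by a disjunction, while the probabilistic action selection $\act$ is encoded by fresh real variables satisfying $\act(\mode, \state, \action) \geq 0$ and $\sum_{\action \in \Act(\state)} \act(\mode, \state, \action) = 1$. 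Universal strategy quantifiers translate dually, and Boolean connectives, atomic propositions and state-equalities translate directly.

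The central object is the composed DTMC $\dtmc = \tbg^{\strat_1} \times \cdots \times \tbg^{\strat_n}$, which has at most $(|\states| \cdot k^{|\agents|})^n$ states and whose transition probabilities are polynomials in the action variables. For $\Prob(\Next \pnonquant)$ I introduce a real variable and equate it with the one-step sum over $\dtmc$-transitions into states satisfying $\pnonquant$. For $\Prob(\pnonquant_1 \Until \pnonquant_2)$ I introduce a real variable $p_{(\statetup, \modetup)}$ for every $\dtmc$-state and assert the Bellman equations ($p = 1$ on states where $\pnonquant_2$ holds, $p = 0$ on states where $\neg \pnonquant_1 \wedge \neg \pnonquant_2$ holds, and $p_{(\statetup, \modetup)} = \sum_{(\statetup', \modetup')} \Trans((\statetup, \modetup),(\statetup', \modetup')) \cdot p_{(\statetup', \modetup')}$ otherwise). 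These equations alone do not uniquely determine $p$, so I pin down the desired least non-negative solution by a further universal quantifier asserting $p \leq p'$ for every other solution $p'$; this characterization is still expressible in real arithmetic.

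The subtle step, which I expect to be the main obstacle, is handling nested strategy quantifiers inside temporal operators. The semantics evaluates $\pnonquant_i$ at position $j$ under a shifted strategy mapping $\mapStrat^{\pi[0,j]}$ and updated state mapping $\projS{\pi[j]}$. For $k$-memory strategies this shift is exactly a change of starting mode, and the current mode is already recorded in the $\modetup$-component of the $\dtmc$-state. I would therefore recursively invoke the translation once per $\dtmc$-state, with the starting-mode parameter updated accordingly, producing Boolean-valued predicates that classify each $\dtmc$-state as satisfying $\pnonquant_i$. The recursion is well-founded because nested sub-formulas are strictly smaller, so the construction halts in a single sentence $\Phi$ whose truth is decidable. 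The hard part is the careful bookkeeping of the evaluation context $\Ctx = (\tbg, \mapStrat, \statetupMap)$ across the recursion, so that the shifted strategy mappings and updated state mappings of the semantics are faithfully mirrored in the generated real-arithmetic formula.
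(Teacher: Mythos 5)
Your proposal is correct and takes essentially the same route as the paper: the paper likewise reduces the problem to decidable non-linear real arithmetic, turning state quantifiers into finite Boolean combinations over $\states$, strategy quantifiers into quantification over real-valued action-selection variables (with the finite $k$-bounded mode structure handled discretely), and probability operators into Bellman-style equation systems over the composed DTMC whose states carry both state and mode tuples. The only cosmetic differences are that the paper pins down the least solution of the Until system via a reachability/ranking (``loop'') condition with auxiliary variables rather than your universal quantification over alternative solutions, and encodes the mode-transition function with $0/1$-valued variables rather than enumerating it by disjunction --- both variants serve equally well for decidability.
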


\begin{restatable}{theorem}{exptime}
	\label{th:exptime}
	\label{th:pspace-hard}
	The model-checking problem for \HyperSGL over memoryless deterministic strategies is in \EXPTIME and \PSPACE-hard.
\end{restatable}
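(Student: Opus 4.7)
My plan is to handle the upper and lower bounds independently, reducing both to known results where possible.

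For the \EXPTIME upper bound, I would give a recursive model-checking algorithm that mirrors the syntactic structure of \HyperSGL. A state quantifier $\mathbb{Q}\varstate_i$ branches over at most $|\states|$ instantiations, and a strategy quantifier $\existsStrat{\setofagents}{\setofvariables}\!$ or $\forallStrat{\setofagents}{\setofvariables}\!$ branches over the set of MD $\setofagents$-strategies, of cardinality at most $|\Act|^{|\states_\setofagents|}$ --- singly exponential in $|\tbg|$. Since the nesting depth is bounded by $|\phi|$, the resulting recursion tree has at most $2^{\mathrm{poly}(|\tbg|,|\phi|)}$ leaves. At each leaf, every state variable is associated with a fixed MD joint strategy, so a probability expression $\Prob(\ppath)$ reduces to a standard \PCTL query on the composed induced DTMC $\tbg^{\strat_1}\times\cdots\times\tbg^{\strat_n}$, whose state space has size at most $|\states|^n$; this DTMC can be constructed and analysed by the usual linear-system / fixed-point techniques in time polynomial in its size, and thus exponential in the input. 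Nested strategy quantifiers appearing inside a path subformula are handled in the spirit of classical \PCTL labeling: before the probability computation, I would precompute, for every state $\vec s$ of the composed DTMC, the truth value of each immediate state-subformula by a recursive call of the algorithm with the state mapping updated to $\vec s$. Multiplying the exponential fan-out per quantifier by the formula depth, and the exponential-size product DTMC by the polynomial-time labeling and \PCTL solve, keeps the whole procedure inside \EXPTIME.

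For \PSPACE-hardness, I would appeal directly to \cref{prop:hyperpctl}: that proposition translates any \HyperPCTL formula $\phi$ on a DTMC $\dtmc$ (without bounded Until) into the \HyperSGL formula $Q_1\varstate_1\ldots Q_n\varstate_n.\ \forallStrat{1}{\varstate_1,\ldots,\varstate_n}\pnonquant$ on the single-agent TSG $\tbg_\dtmc$. Because every state of $\tbg_\dtmc$ has exactly one enabled action, there is a unique MD $\{1\}$-strategy, so the outer universal strategy quantifier is semantically vacuous under the MD semantics and the truth of the translated formula coincides with the \HyperPCTL truth of $\phi$. Combined with the known \PSPACE-hardness of \HyperPCTL model-checking on DTMCs~\cite{abrahamHyperPCTLTemporal2018}, this transfers \PSPACE-hardness to \HyperSGL over MD strategies. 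The translation from \cref{prop:hyperpctl} is syntactic and of linear size, so the reduction is polynomial.

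The main obstacle I expect is controlling the interaction between nested strategy quantification and probability evaluation inside a path formula. Over general memoryful strategies one has to track the history $\pi[0,j]$ to know in which mode an inner strategy should start, which invalidates a simple labeling-based approach; restricting to MD strategies, however, collapses the shift $\mapStrat^{\pi[0,j]}$ to $\mapStrat$ itself, so the nested context depends only on the current state of the composed DTMC and on the currently bound outer choices. This observation is precisely what makes the per-state labeling sound and keeps the recursion tree within the exponential bound above; formalizing it rigorously --- in particular verifying that the composed-DTMC state adequately summarizes the semantic context at every nesting depth --- is where most of the technical care will be needed.
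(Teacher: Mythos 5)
Your proposal matches the paper's proof in both halves: the upper bound is the same brute-force recursion that enumerates states and MD strategies at each quantifier and reduces probability expressions to \PCTL-style linear-system computations on the $|\states|^n$-state composed DTMC (with nested strategy quantifiers handled by per-state-tuple labeling, exactly as in the paper's \texttt{calcProb} routine), and the lower bound is the same reduction from \PSPACE-hard \HyperPCTL model-checking on DTMCs via \cref{prop:hyperpctl}. Your explicit remark that the history shift $\mapStrat^{\pi[0,j]}$ collapses for memoryless strategies is a detail the paper leaves implicit but is correct and is indeed what makes the labeling sound.
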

\begin{proof}[Sketch]	
	\emph{\EXPTIME-membership:} In \iftoggle{extended}{\cref{app:exptime-membership}}{the extended version}, we give a brute-force \EXPTIME decision procedure that resolves state and strategy quantification by testing all possible states or (memoryless deterministic) strategies, respectively. 
	The number of state and strategy assignments that have to be checked is bounded exponentially in the size of the input.
	In particular, 
	probability expressions/temporal operators do not pose a challenge, but can be handled in the usual way for \PCTL on DTMCs by solving a linear equation system with at most $|\states|^n$ equations where $n$ is the number of state variables in the formula.

	\emph{\PSPACE-hardness:} 
	Model-checking \HyperPCTL on DTMCs is \PSPACE-hard~\cite{abrahamHyperPCTLTemporal2018}. \HyperSGL subsumes \HyperPCTL on DTMCs (\cref{prop:hyperpctl}); over DTMCs the strategy class does not play any role.
	\hfill\qed
\end{proof}

It remains an open question whether tighter upper and lower bounds exist. 

\begin{openproblem}
	Is the model-checking problem for \HyperSGL over memoryless deterministic strategies in \PSPACE? Is it \EXPTIME-hard?
\end{openproblem}

For a fixed number of state quantifiers, however, we can give tight bounds. 

\begin{restatable}{theorem}{fixedNoQuant}
	\label{th:fixedNoQuant}
	For $\numstatequant \in \N$, the model-checking problem for \HyperSGL formulas with $\numstatequant$ state quantifiers is \PSPACE-complete over memoryless deterministic strategies.
\end{restatable}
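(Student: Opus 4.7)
The theorem claims both containment in \PSPACE\ (new) and \PSPACE-hardness (also new: the argument of \cref{th:pspace-hard} relies on \HyperPCTL\ hardness on DTMCs, whose standard proof uses an unboundedly growing number of state quantifiers). For containment I would use the identity $\textsf{APTIME}=\PSPACE$ via an alternating polynomial-time procedure, and for hardness I would reduce directly from \textsf{QBF}, exploiting the alternation of strategy quantifiers.

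\textbf{Upper bound.} The crucial observation is that for a fixed $\numstatequant$ the composed DTMC $\tbg^{\strat_1}\times\cdots\times\tbg^{\strat_\numstatequant}$ induced by any memoryless deterministic joint strategies has state space $\states^{\numstatequant}$, i.e., polynomially many states in $|\states|$. The alternating algorithm processes the formula top-down:
\begin{enumerate}
    \item For each outer state quantifier $\mathbb{Q}\varstate_i$, branch existentially or universally over $\states$ to commit to $\statetupMap(\varstate_i)$; since $\numstatequant$ is fixed, this uses polynomial time.
    \item For each strategy quantifier $\existsStrat{\setofagents}{\setofvariables}$ (resp.\ $\forallStrat{\setofagents}{\setofvariables}$), guess a memoryless deterministic $\setofagents$-strategy by making, for every $\state\in\states_{\setofagents}$, one existential (resp.\ universal) branch over $\Act(\state)$. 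These branches share the same polarity, so they do not inflate the alternation depth.
    \item At a probability expression $\Prob(\ppath)$, recursively evaluate the state subformulas appearing in $\ppath$ at every composed state (at most $|\states|^{\numstatequant}$ many) using further alternating branches, then obtain the probability by direct summation for $\Next$ or by solving the polynomial-size linear equation system for $\Until$. Comparisons $\pprob\sim\pprob$ follow immediately.
\end{enumerate}
Each alternation branch runs in polynomial time and the recursion depth is bounded by the formula size, hence the whole procedure runs in alternating polynomial time, which equals \PSPACE.

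\textbf{Lower bound.} Given an instance $\Phi = Q_1 x_1 \cdots Q_m x_m.\ \phi(x_1,\ldots,x_m)$ of \textsf{QBF} with $\phi$ in CNF, I would construct a turn-based non-stochastic game $\tbg$ with agents $\{1,\ldots,m\}$: from a unique initial state $\sinit$ labeled $\init$, the game visits gadget states $\state^{(1)},\ldots,\state^{(m)}$ where $\state^{(i)}\in\states_i$ has two enabled actions interpreted as $x_i\mapsto \mathit{true}$ and $x_i\mapsto \mathit{false}$; after the $m$ choices a deterministic subgraph inspects each clause and leads to a sink labeled $t$ iff $\phi$ is satisfied by the induced assignment. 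The corresponding \HyperSGL\ formula is
\[
    \forall\varstate.\ \init_\varstate \implies Q'_1\, Q'_2\, \cdots\, Q'_m\ \Prob(\Finally t_\varstate) = 1,
\]
where $Q'_i = \existsStrat{i}{\varstate}$ if $Q_i = \exists$ and $Q'_i = \forallStrat{i}{\varstate}$ otherwise. Memoryless determinism forces each agent's strategy to collapse to a single Boolean choice at its unique nontrivial state, so the formula holds iff $\Phi$ is true. Only a single state quantifier is used, so \PSPACE-hardness persists for every fixed $\numstatequant\geq 1$.

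\textbf{Main obstacle.} The delicate point in the upper bound is reconciling the syntactic nesting of strategy quantifiers inside probability operators with the semantic shift of the strategy mapping along paths: when the algorithm re-enters a quantifier beneath a $\Prob$-operator, previously fixed strategies must still be in scope. For memoryless strategies this shift is trivial (there are no modes to advance), so maintaining only the current state-variable assignment suffices, but one has to carefully verify that this does not blow up the alternation depth beyond what the formula syntactically prescribes. For the lower bound, the delicate point is ensuring that the gadget assigns each variable a unique controlling state, so that the restriction to memoryless deterministic strategies still allows exactly one truth-value choice per variable.
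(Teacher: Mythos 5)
Your proposal is correct, and both halves land on the same underlying ideas as the paper, but your membership argument is packaged differently. For the upper bound, the paper iterates over all $|\states|^{\numstatequant}$ state tuples and then assigns each strategy-quantified subformula a \emph{type} in the polynomial hierarchy ($\Delta_l$, $\Sigma_l$, $\Pi_l$), proving by induction that model-checking a subformula is possible within its type and concluding via $\mathrm{PH}\subseteq\PSPACE$; you instead give a single alternating polynomial-time procedure and invoke $\mathsf{APTIME}=\PSPACE$. These are two views of the same phenomenon (the paper's typing is essentially bookkeeping for the alternation structure), but the paper's version yields finer information --- the exact PH level as a function of the quantifier-nesting depth --- while yours is more uniform and avoids the case analysis over $\Delta/\Sigma/\Pi$. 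Both hinge on the same crucial facts you identify: the composed DTMC has only $|\states|^{\numstatequant}$ states for fixed $\numstatequant$, and the $\Until$ probabilities come from a polynomial-size linear equation system. For the lower bound, the paper also reduces from QBF with a single state quantifier, but does so by observing that the QBF gadget from the \SGL\ \PSPACE-hardness proof of Baier et al.\ translates directly into \HyperSGL; you rebuild that gadget explicitly, which is the same construction in substance.

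One step of your upper bound needs more care than you give it: an alternating machine cannot first ``recursively evaluate the state subformulas at every composed state using further alternating branches'' and \emph{then} feed those truth values into an arithmetic computation, because alternating branches do not return values to a parent computation. The standard repair is to existentially guess the satisfaction bit-vector $(b_{\vec{t}})_{\vec{t}\in\states^{\numstatequant}}$ for each state subformula of $\ppath$, branch universally into (i) one verification branch per $\vec{t}$ that recursively checks the guessed bit (checking the subformula if $b_{\vec{t}}=1$ and its negation otherwise) and (ii) one deterministic branch that uses the guessed bits to solve the equation system and test the comparison. This keeps everything in alternating polynomial time, so your conclusion stands, but the sketch as written elides the only genuinely delicate point of the argument; the shift of the strategy mapping along paths, which you flag as the main obstacle, is indeed vacuous for memoryless strategies and is the lesser concern.
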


\begin{proof}[Sketch]
	We adapt the \PSPACE-completeness proof for \SGL over memoryless deterministic strategies \cite{baierStochasticGame2012}.
	We show that \PSPACE-hardness holds already for $n=1$.
	For membership, we inductively define the type of a \HyperSGL formula as a class of the polynomial time hierarchy and prove that an upper bound for the complexity of the model-checking problem for a class of formulas of the same type is given by this type.
	For the full proof, see \iftoggle{extended}{\cref{app:mc-fixed-number-quantifiers}}{the extended version}.
	\hfill\qed
\end{proof}

\section{Conclusion}
\label{sec:conclusion}

We have proposed a hyperlogic for strategies in turn-based stochastic games that allows to express probabilistic relations between several executions of a game, where an execution is a probabilistic computation tree resulting from fixed strategies for the agents, rooted at a state of the game.
To the best of our knowledge, \HyperSGL is the first hyperlogic for stochastic games.
We illustrated that \HyperSGL can express interesting properties of stochastic games, like the existence of optimal strategies or Nash equilibria.
\HyperSGL subsumes a fragment of \rPATL over turn-based stochastic games, and a fragment of \HyperSL that in turn subsumes \HyperATLstar over turn-based non-stochastic games. 
We have established that the \HyperSGL model-checking problem is undecidable over general strategies but decidable for bounded memory.
Over memoryless deterministic strategies, the model-checking problem is in \EXPTIME and \PSPACE-hard, but \PSPACE-complete if we fix the number of state quantifiers.
In future work, it would be interesting to extend \HyperSGL with the ability to reason about (expected) rewards, and to lift the logic to \emph{concurrent} stochastic games.

\begin{credits}
	\subsubsection{\ackname} 
	Lina Gerlach is supported by the DFG RTG 2236/2 \textit{UnRAVeL}. 
	
	\subsubsection{\discintname}
	The authors have no competing interests to declare that are
	relevant to the content of this article.
\end{credits}

%
%
%
\bibliographystyle{splncs04}
\bibliography{../library.bib}

\begin{thebibliography}{10}
\providecommand{\url}[1]{\texttt{#1}}
\providecommand{\urlprefix}{URL }
\providecommand{\doi}[1]{https://doi.org/#1}

\bibitem{abrahamProbabilisticHyperproperties2020}
{\'A}brah{\'a}m, E., Bartocci, E., Bonakdarpour, B., Dobe, O.: Probabilistic
  hyperproperties with nondeterminism. In: Hung, D.V., Sokolsky, O. (eds.)
  Automated {{Technology}} for {{Verification}} and {{Analysis}} - 18th
  {{International Symposium}}, {{ATVA}} 2020. Lecture {{Notes}} in {{Computer
  Science}}, vol. 12302, pp. 518--534. {Springer} (2020)

\bibitem{abrahamHyperPCTLTemporal2018}
{\'A}brah{\'a}m, E., Bonakdarpour, B.: {{HyperPCTL}}: A temporal logic for
  probabilistic hyperproperties. In: McIver, A., Horv{\'a}th, A. (eds.)
  Quantitative {{Evaluation}} of {{Systems}} - 15th {{International
  Conference}}, {{QEST}} 2018. Lecture {{Notes}} in {{Computer Science}}, vol.
  11024, pp. 20--35. {Springer} (2018)

\bibitem{alurAlternatingtimeTemporal2002}
Alur, R., Henzinger, T.A., Kupferman, O.: Alternating-time temporal logic.
  Journal of the ACM  \textbf{49}(5),  672--713 (Sep 2002)

\bibitem{baierStochasticGame2012}
Baier, C., Br{\'a}zdil, T., Gr{\"o}{\ss}er, M., Kucera, A.: Stochastic game
  logic. Acta Informatica  \textbf{49}(4),  203--224 (2012)

\bibitem{baierPrinciplesModel2008}
Baier, C., Katoen, J.P.: Principles of {{Model Checking}}. {MIT Press} (2008)

\bibitem{beutnerHyperATLLogic2023}
Beutner, R., Finkbeiner, B.: {{HyperATL}}*: A logic for hyperproperties in
  multi-agent systems. Log. Methods Comput. Sci.  \textbf{19}(2) (2023)

\bibitem{beutnerHyperStrategy2024}
Beutner, R., Finkbeiner, B.: Hyper {{Strategy Logic}} (Mar 2024)

\bibitem{brazdilStochasticGames2006}
Br{\'a}zdil, T., Brozek, V., Forejt, V., Kucera, A.: Stochastic {{Games}} with
  {{Branching-Time Winning Objectives}}. In: 21th {{IEEE Symposium}} on
  {{Logic}} in {{Computer Science}} ({{LICS}} 2006), 12-15 {{August}} 2006,
  {{Seattle}}, {{WA}}, {{USA}}, {{Proceedings}}. pp. 349--358. IEEE Computer
  Society (2006)

\bibitem{chatterjeeStrategyLogic2007}
Chatterjee, K., Henzinger, T.A., Piterman, N.: Strategy {{Logic}}. In: Caires,
  L., Vasconcelos, V.T. (eds.) {{CONCUR}} 2007 - {{Concurrency Theory}}, 18th
  {{International Conference}}, {{CONCUR}} 2007, {{Lisbon}}, {{Portugal}},
  {{September}} 3-8, 2007, {{Proceedings}}. Lecture {{Notes}} in {{Computer
  Science}}, vol.~4703, pp. 59--73. Springer (2007)

\bibitem{chenAutomaticVerification2013}
Chen, T., Forejt, V., Kwiatkowska, M., Parker, D., Simaitis, A.: Automatic
  verification of competitive stochastic systems. Formal Methods in System
  Design  \textbf{43}(1),  61--92 (Aug 2013)

\bibitem{chenProbabilisticAlternatingtime2007}
Chen, T., Lu, J.: Probabilistic alternating-time temporal logic and model
  checking algorithm. In: Lei, J. (ed.) Fourth {{International Conference}} on
  {{Fuzzy Systems}} and {{Knowledge Discovery}}, {{FSKD}} 2007, 24-27
  {{August}} 2007, {{Haikou}}, {{Hainan}}, {{China}}, {{Proceedings}},
  {{Volume}} 2. pp. 35--39. IEEE Computer Society (2007)

\bibitem{clarksonTemporalLogics2014}
Clarkson, M.R., Finkbeiner, B., Koleini, M., Micinski, K.K., Rabe, M.N.,
  S{\'a}nchez, C.: Temporal logics for hyperproperties. In: Abadi, M., Kremer,
  S. (eds.) Principles of {{Security}} and {{Trust}} - {{Third International
  Conference}}, {{POST}} 2014. Lecture {{Notes}} in {{Computer Science}},
  vol.~8414, pp. 265--284. Springer (2014)

\bibitem{clarksonHyperproperties2010}
Clarkson, M.R., Schneider, F.B.: Hyperproperties. J. Comput. Secur.
  \textbf{18}(6),  1157--1210 (2010)

\bibitem{collinsQuantifier1975}
Collins, G.E.: {{Quantifier}} elimination for real closed fields by cylindrical
  algebraic decomposition. In: Barkhage, H. (ed.) Automata {{Theory}} and
  {{Formal Languages}}, 2nd {{GI Conference}}, {{Kaiserslautern}}, {{May}}
  20-23, 1975. Lecture {{Notes}} in {{Computer Science}}, vol.~33, pp.
  134--183. Springer (1975)

\bibitem{davenportVarietiesDoublyExponential2021}
Davenport, J.H.: Varieties of {{Doubly-Exponential}} behaviour in {{Cylindrical
  Algebraic Decomposition}}. In: Bright, C., Davenport, J.H. (eds.) Proceedings
  of the 6th {{SC-Square Workshop}} Co-Located with the {{SIAM Conference}} on
  {{Applied Algebraic Geometry}}, {{SC-Square}}@{{SIAM AG}} 2021, {{Virtual
  Event}}, {{College Station}}, {{USA}}, {{August}} 19-20, 2021. {{CEUR
  Workshop Proceedings}}, vol.~3273, pp. 31--40. CEUR-WS.org (2021)

\bibitem{dimitrovaProbabilisticHyperproperties2020}
Dimitrova, R., Finkbeiner, B., Torfah, H.: Probabilistic hyperproperties of
  {{Markov}} decision processes. In: Hung, D.V., Sokolsky, O. (eds.) Automated
  {{Technology}} for {{Verification}} and {{Analysis}} - 18th {{International
  Symposium}}, {{ATVA}} 2020. Lecture {{Notes}} in {{Computer Science}}, vol.
  12302, pp. 484--500. Springer (2020)

\bibitem{dobeModelChecking2022}
Dobe, O., {\'A}brah{\'a}m, E., Bartocci, E., Bonakdarpour, B.: Model checking
  hyperproperties for {{Markov}} decision processes. Information and
  Computation  \textbf{289},  104978 (2022)

\bibitem{emersonSometimesNot1986}
Emerson, E.A., Halpern, J.Y.: "{{Sometimes}}" and "{{Not Never}}" revisited: On
  branching versus linear time temporal logic. Journal of the ACM
  \textbf{33}(1),  151--178 (1986)

\bibitem{gerlachIntroducingAsynchronicity2023a}
Gerlach, L., Dobe, O., {\'A}brah{\'a}m, E., Bartocci, E., Bonakdarpour, B.:
  Introducing {{Asynchronicity}} to {{Probabilistic Hyperproperties}}. In:
  Jansen, N., Tribastone, M. (eds.) Quantitative {{Evaluation}} of {{Systems}}
  - 20th {{International Conference}}, {{QEST}} 2023, {{Antwerp}}, {{Belgium}},
  {{September}} 20-22, 2023, {{Proceedings}}. Lecture {{Notes}} in {{Computer
  Science}}, vol. 14287, pp. 47--64. Springer (2023)

\bibitem{extendedVersion}
Gerlach, L., L{\"o}ding, C., {\'A}brah{\'a}m, E.: A hyperlogic for strategies
  in stochastic games (extended version) (2025)

\bibitem{gradelAutomataLogics2002}
Gr{\"a}del, E., Thomas, W., Wilke, T. (eds.): Automata, {{Logics}}, and
  {{Infinite Games}}: {{A Guide}} to {{Current Research}} [Outcome of a
  {{Dagstuhl}} Seminar, {{February}} 2001], Lecture {{Notes}} in {{Computer
  Science}}, vol.~2500. Springer (2002)

\bibitem{hanssonLogicReasoning1994}
Hansson, H., Jonsson, B.: A logic for reasoning about time and reliability.
  Formal Aspects of Computing  \textbf{6}(5),  512--535 (1994)

\bibitem{kwiatkowskaAutomatedVerification2018}
Kwiatkowska, M., Norman, G., Parker, D., Santos, G.: Automated verification of
  concurrent stochastic games. In: McIver, A., Horvath, A. (eds.) Quantitative
  {{Evaluation}} of {{Systems}}. pp. 223--239. Lecture {{Notes}} in {{Computer
  Science}}, Springer International Publishing, Cham (2018)

\bibitem{kwiatkowskaEquilibriabasedProbabilistic2019}
Kwiatkowska, M., Norman, G., Parker, D., Santos, G.: Equilibria-based
  probabilistic model checking for concurrent stochastic games. In: {ter Beek},
  M.H., McIver, A., Oliveira, J.N. (eds.) Formal {{Methods}} -- {{The Next}} 30
  {{Years}}. pp. 298--315. Lecture {{Notes}} in {{Computer Science}}, Springer
  International Publishing, Cham (2019)

\bibitem{papadimitriouComputationalComplexity1994}
Papadimitriou, C.H.: Computational Complexity. Addison-Wesley (1994)

\bibitem{wangStatisticalModel2021}
Wang, Y., Nalluri, S., Bonakdarpour, B., Pajic, M.: Statistical model checking
  for hyperproperties. In: 34th {{IEEE Computer Security Foundations
  Symposium}}, {{CSF}} 2021. pp. 1--16. IEEE (2021)

\end{thebibliography}

\iftoggle{extended}{
	\appendix 
	\section{\rPATL (\cref{sec:rpatl})}
\label{app:rPATL}

\rPATL \cite{chenProbabilisticAlternatingtime2007,kwiatkowskaEquilibriabasedProbabilistic2019} has been defined for both turn-based and concurrent stochastic games. 
We focus here on the fragment \PATLNE of \rPATL with social welfare subgame perfect $\epsilon$-Nash equilibria without reward objectives and without the bounded Until operator, on turn-based stochastic games.
Let $\agents$ be a set of agents, and $\AP$ a set of atomic propositions.
\PATLNE formulas are constructed according to the following grammar:
\[ 
\begin{array}{lll}
	\text{state formula:} 
	& \phi &::= \true \mid \ap \mid \neg \phi \mid \phi \land \phi 
	\mid \existsStrat{\setofagents}{}\texttt{P}_{\sim q}[\psi] 
	\mid \existsStrat{C {:} \overline{C}}{max \sim x}(\theta) \\
	\text{non-zero sum formula:} 
	&\theta &::= \texttt{P}[\psi] + \texttt{P}[\psi] \\
	\text{path formula:}	
	&\psi &::= \Next \phi \mid \phi \Until \phi 
\end{array}
\]
where $\ap \in \AP$,
$\setofagents, C \subseteq \agents$,
$q \in [0, 1]$, and $x \in \R$.
Note that \rPATL originally also includes the bounded Until operator.
For the sake of simplicity, we decided not to include this operator in \HyperSGL but it would be straightforward to extend \HyperSGL with the bounded Until operator, this would not increase the model-checking complexity.

Let $\tbg = \tbgtup$ be a turn-based stochastic game,
and
$\state \in \states$.
In the following, we use ${\Pr}^{\strat}(s \models \psi^i)$ as a shorthand for $\Pr(\pi \in \Pathsfrom[\tbg^{\strat}][(s, \start^{}(s))] \mid \pi \modelsrPATL \psi^i)$. 
The semantics of \PATLNE is defined inductively as follows:
\[ 
\begin{array}{lll}
	\state \models \true & & \\
	\state \models \ap & \iff 
	&\ap \in \labelingfct(\state) \\
	\state \models \phi_1 \land \phi_2 & \iff
	&\state \models \phi_1 \text{ and } \state \models \phi_2 \\
	\state \models \neg\phi & \iff &
	\state \not\models \phi \\
	\state \models \existsStrat{\setofagents}{} \texttt{P}_{\sim q}[\psi] & \iff & 
	\exists \strat \in \Strats[][\setofagents] .\ 
	\forall \overline{\strat} \in \Strats[][\overline{\setofagents}] .\ 
	{\Pr}^{\strat \oplus \overline{\strat}}(\state \models \psi) \sim q
\end{array}
\]
where 
$\ap \in \AP$,
$\setofagents \subseteq \agents$,
$\sim \in \{\leq, <, =, >, \geq\}$,
and
$q \in [0, 1]$. 
The semantics of path formulas is defined as follows
for $\pi \in \Paths[\tbg^\strat]$ for some $\strat \in \Strats[][\agents]$:
\[ \begin{array}{lcl}
	\pi \models \Next \phi_1 &\iff &\projS{\pi[1] }\models \phi_1 \\
	\pi \models \phi_1 \Until \phi_2 &\iff& \exists j_2 \geq 0 .\ \projS{\pi[j_2]} \models \phi_2 
	\text{ and } \forall 0 \leq j_1 < j_2 .\ \projS{\pi[j_1]} \models \phi_1
\end{array}\]
where for $r = (\state, \mode) \in \states^{\tbg^\strat}$ we use $\projS{r} = \state$ to denote the projection to the $\states$-component.

The semantics of a \swspepsne construct is as follows for 
$\state \in \states$, 
$C \subseteq \agents$,
and
$x \in \R$:
	\label{def:SWSPepsNE}
	\[\state \modelsrPATL \existsStrat{C {:} \overline{C}}{max \sim x}(\texttt{P}[\psi^1] + \texttt{P}[\psi^2]) \iff \exists \nasheq \in \Strats[][\agents] \ \text{ s.t. } \hspace{2cm} \]
	\begin{itemize}
		\item $\Pr^{{\nasheq}}(\state \models \psi^1) + \Pr^{{\nasheq}}(\state \models \psi^2) \sim x$,
		and 
		
		\item $\nasheq$ is a social welfare subgame perfect $\epsilon$-Nash equilibrium (\swspepsne), i.e.,
		\begin{itemize}
			\item $\nasheq$ is a subgame perfect $\epsilon$-Nash equilibrium, 
			i.e., for all states $\state_1 \in \states$ we have 
			\begin{itemize}
				\item for all $\gamma \in \Strats[][C]$ we have 
				$\Pr^{{\nasheq}}(\state_1 \models \psi^1) \geq \Pr^{{\strat}}(\state_1 \models \psi^1) - \epsilon$ where $\alpha := \nasheq|_{\overline{C}} \oplus \gamma$
				
				\item for all $\gamma \in \Strats[][\overline{C}]$ we have 
				$\Pr^{{\nasheq}}(\state_1 \models \psi^2) \geq \Pr^{{\strat}}(\state_1 \models \psi^2) - \epsilon$ where $\alpha := \nasheq|_{C} \oplus \gamma$
			\end{itemize}
			
			\item for all subgame perfect $\epsilon$-Nash equilibria $\strat$ 
			we have \\
			$\Pr^{{\nasheq}}(\state \models \psi^1) + 
			\Pr^{{\nasheq}}(\state \models \psi^2) \geq 
			\Pr^{{\strat}}(\state \models \psi^1) + 
			\Pr^{{\strat}}(\state \models \psi^2)$. 
		\end{itemize} 
	\end{itemize}

For a state formula $\phi$, we let $\tbg, \state \models \phi$ iff $\state \models \phi$.

\translateSWSPepsNE*

\begin{proof}	
		We construct 
	\begin{align*}
		\phi' &:= \forall \varfixed \forall \varfixedp .\
		\forall \varopt \forall \varoptp .\
		\exists \varcomp \exists \varcompp .\ \\
		& \qquad \quad
		(\init_{\varfixed} \wedge \init_{\varfixedp} \wedge \varopt = \varoptp) 
		\implies
		(\varcomp = \varcompp \wedge \phi_{\textit{SWSP}\epsilon} )
		\\
		\phi_{\textit{SWSP}\epsilon}
		&:= 
		\existsStrat{\agents}{\varfixed, \varopt, \varoptp} 
		\Big[ \Prob(\psi^1_{\varfixed}) + \Prob(\psi^2_{\varfixed}) \sim x
		\wedge 
		\phi_{\textit{SP}\epsilon}(\varopt, \varoptp)  
		\\ & \qquad \qquad \qquad \qquad
		\wedge \forallStrat{\agents}{\varfixedp, \varcomp, \varcompp} 
		\left( \phi_{\textit{SP}\epsilon}(\varcomp, \varcompp) \implies \phi_{\textit{SW}}(\varfixed, \varfixedp) \right) \Big]
		\\
		\phi_{\textit{SP}\epsilon}(\varstate_i, \varstate'_i) &:= 
		\left(\forallStrat{C}{\varstate'_i} \Prob(\psi^1_{\varstate_i}) \geq \Prob(\psi^1_{\varstate'_i}) - \epsilon \right)
		\wedge 
		\left(\forallStrat{\overline{C}}{\varstate'_i} \Prob(\psi^2_{\varstate_i}) \geq \Prob(\psi^2_{\varstate'_i}) - \epsilon \right)
		\\
		\phi_{\textit{SW}}(\varfixed, \varfixedp) &:= 
		\sum_{i=1}^2  \Prob(\psi^i_{\varfixed})
		\geq
		\sum_{i=1}^2  \Prob(\psi^i_{\varfixedp}) 
	\end{align*}
	where $\psi^i_{\varstate}$ corresponds to $\psi^i$ with all atomic propositions indexed by $\varstate$ for $i=1,2$ and a state variable $\varstate$.
	The size of $\phi'$ is linear in the size of the \PATLNE formula $\existsStrat{C {:} \overline{C}}{max \sim x}(\texttt{P}[\psi^1] + \texttt{P}[\psi^2])$.
	
	Let $\tbg$ be a turn-based stochastic game $\tbg = \tbgtup$ and $\state \in \states$.
	By definition of the semantics of \PATLNE, we have
	\begin{align*}
		& \tbg, \state \modelsrPATL \existsStrat{C {:} \overline{C}}{max \sim x}(\texttt{P}[\psi^1] + \texttt{P}[\psi^2]) \\
		\iff \; & \exists \nasheq \in \Strats[][\agents] .\ 
		{\Pr}^{\nasheq}(\state \models \psi^1) + 
		{\Pr}^{\nasheq}(\state \models \psi^2) \sim x
		\\
		& \; \wedge \nasheq \text{ is a subgame perfect } \epsilon\text{-NE: }
		\\
		& \; \hspace{1.2em} 
		\forall \opt \in \states .\
		(\forall \gamma \in \Strats[][C] .\ {\Pr}^{\nasheq}(\opt \models \psi^1) \geq {\Pr}^{\nasheq|_{\overline{C}} \oplus \gamma}(\opt \models \psi^1) - \epsilon) \wedge
		\\
		& \; \hspace{5.2em} 
		(\forall \overline{\gamma} \in \Strats[][\overline{C}] .\ {\Pr}^{\nasheq}(\opt \models \psi^2) \geq {\Pr}^{\nasheq|_{C} \oplus \overline{\gamma}}(\opt \models \psi^2) - \epsilon)
		\\
		& \; \wedge \forall \schedcomp \in \Strats[][\agents] .\ 
		(\schedcomp \text{ is a subgame perfect } \epsilon\text{-NE: } \forall \comp \in \states .\ \ldots )
		\implies \\
		& \; \hspace{7em}
		{\Pr}^{\nasheq}(\state \models \psi^1) {+} 
		{\Pr}^{\nasheq}_{\state}(\pi \models \psi^2)
		\geq 
		{\Pr}^{\schedcomp}(\state \models \psi^1) {+} 
		{\Pr}^{\schedcomp}(\state \models \psi^2)
	\end{align*}
	By pulling out quantifiers, we can transform this to an equivalent formulation of the following form:
	\[\exists \nasheq \in \Strats[][\agents] .\
	\forall \schedcomp \in \Strats[][\agents] .\ \forall \opt \in \states .\
	\exists \comp \in \states .\ 
	\psi , \] 
	where 
	$\psi$ does not contain state quantifiers 
	and
	$\comp$ is used to verify that $\schedcomp$ is a \spepsne and is now quantified existentially since it was previously quantified universally in the premise of an implication.
	We can further transform this by pulling the state quantifiers to the front:
	\begin{align*}%
		&\exists \nasheq \in \Strats[][\agents] .\
		\forall \schedcomp \in \Strats[][\agents] .\
		\forall \opt .\
		\exists \comp .\ 
		\psi 
		\\
		\iff \; & 
		\exists \nasheq \in \Strats[][\agents] .\
		\forall \opt .\
		\forall \schedcomp \in \Strats[][\agents] .\
		\exists \comp .\ 
		\psi 
		\\
		\overset{\labelcref{item:move-opt}}{\iff} \; & 
		\forall \opt .\
		\exists \nasheq \in \Strats[][\agents] .\
		\forall \schedcomp \in \Strats[][\agents] .\
		\exists \comp .\ 
		\psi 
		\\
		\overset{\labelcref{item:move-comp}}{\iff} \; & 
		\forall \opt .\
		\exists \nasheq \in \Strats[][\agents] .\
		\exists \comp .\ 
		\forall \schedcomp \in \Strats[][\agents] .\
		\psi 
		\\
		\iff \; & 
		\forall \opt .\
		\exists \comp .\ 
		\exists \nasheq \in \Strats[][\agents] .\
		\forall \schedcomp \in \Strats[][\agents] .\
		\psi .
	\end{align*}
	
	\begin{enumerate}[(a)] 	
		\item \label{item:move-opt}
		`$\Rightarrow$': The claim holds trivially.
		
		`$\Leftarrow$': 
		Assume that for all $\opt \in \states$ there exists some $\nasheq_{\opt} = \schedtup[\superstaropt] \in \Strats[][\agents]$ such that for all $\schedcomp \in \Strats[][\agents]$ there exists some $\comp \in \states$ such that $\psi$ holds.
		Then, we can construct a joint strategy $\nasheq$ that behaves as $\nasheq_{\state'}$ if we start with a state $\state' \in \states$. 
		Formally, $\nasheq = \schedtup[\superstar]$ with
		\begin{multicols}{2}
			\begin{itemize}
				\item $\superstar[\modes] = \bigcup_{\state \in \states} \superstars[\modes] \times \{s\}$,
				\item $\superstar[\start](s) = (\superstars[\start](s), s)$,
				\item $\superstar[\modef]((\mode, s), \state') = \superstars[\modef](\mode, \state')$,
				\item $\superstar[\act]((\mode, s), \state', \action) = \superstars[\act](\mode, \state', \action)$ 
			\end{itemize}
		\end{multicols}
		for $s, \state' \in \states$, $\mode \in \superstars[\modes]$ and $\action \in \Act$.
		By assumption, for this $\nasheq$ it must hold that for all $\opt \in \states$ for all $\schedcomp \in \Strats[][\agents]$ there must exist $\comp \in \states$ such that $\psi$ holds.
		
		\item \label{item:move-comp}
		`$\Leftarrow$': The claim holds trivially.
		
		`$\Rightarrow$': We show the claim by contraposition. 
		Assume that there exists some $\opt \in \states$ such that for all $\nasheq \in \Strats[][\agents]$ for all $\comp \in \states$, there exists some $\schedcomp_{\opt, \nasheq, \comp}$ such that $\psi$ does not hold.
		Then, there exists some $\opt \in \states$ such that for all $\nasheq \in \Strats[][\agents]$ we can construct a witness $\schedcomp$ such that for all $\comp \in \states$, $\psi$ does not hold, namely by defining $\alpha$ such that it behaves as $\schedcomp_{\opt, \nasheq, s}$ if we start with state $s \in \states$. 
	\end{enumerate}
	
	It remains to show that this matches the semantics of the \HyperSGL formula.
	Let $\tbg'$ be the same as $\tbg$ but $\state \in \states$ is labeled with a fresh atomic proposition $\init$.
	By definition of the semantics of \HyperSGL, we have
	\begin{align*}
		&\tbg' \models \forall \varfixed \forall \varfixedp .\
		\forall \varopt \forall \varoptp .\
		\exists \varcomp \exists \varcompp .\ 
		(\init_{\varfixed} \wedge \init_{\varfixedp} \wedge \varopt = \varoptp) 
		\implies
		(\varcomp = \varcompp \wedge \phi_{\textit{SWSP}\epsilon} ) 
		\\ & \iff \\
		& \forall \opt .\
		\exists \comp .\ 
		\exists \nasheq \in \Strats[][\agents] .\
		\forall \schedcomp \in \Strats[][\agents] .\
		\\
		& \quad \Ctxtup[\tbg'] \models
		\textstyle\sum_{i=1}^2  \Prob(\psi^i_{\varfixed}) \sim x \wedge 
		\phi_{\textit{SP}\epsilon}(\varopt, \varoptp) 
		\\ & \qquad \qquad \qquad \qquad \quad
		\wedge
		\left( \phi_{\textit{SP}\epsilon}(\varcomp, \varcompp) \implies \phi_{\textit{SW}}(\varfixed, \varfixedp) \right)
	\end{align*}
	where 
	$\mapStrat$ and $\statetupMap$ are defined as follows for $\agent \in \agents$
	\[\begin{array}{wl{0.3\textwidth} wl{0.3\textwidth} wl{0.3\textwidth}}
		\mapStrat(\varfixed, \agent) = \nasheq|_\agent & \mapStrat(\varfixedp, \agent) = \schedcomp|_\agent &  \statetupMap(\varfixed) = \statetupMap(\varfixedp) = \fixed \\
		\mapStrat(\varopt, \agent) = \nasheq|_\agent & \mapStrat(\varoptp, \agent) = \nasheq|_\agent & \statetupMap(\varopt) = \statetupMap(\varoptp) = \opt \\
		\mapStrat(\varcomp, \agent) = \schedcomp|_\agent & \mapStrat(\varcompp, \agent) = \schedcomp|_\agent & \statetupMap(\varcomp) = \statetupMap(\varcompp) = \comp .
	\end{array}\]
	\hfill\qed
\end{proof}

\subsection{Extending to general \PATLfrag and \PATLNE}

\paragraph{\PATLfrag}
\label{app:PATL-1NE}
formulas are built according to the grammar
\[
\begin{array}{lll}
	\text{top-level formula:}&\Phi &::= \true \mid \ap \mid \neg \Phi \mid \Phi \wedge \Phi \mid \theta \mid \rho \\
	\text{Nash equilibrium:}&\theta &::= \existsStrat{C : \overline{C}}{max \sim x}(\texttt{P}[\psi] + \texttt{P}[\psi]) \\
	\text{state formula w/o NE:}&\rho &::= \true \mid \ap \mid \neg \rho \mid \rho \wedge \rho \mid \existsStrat{\setofagents}{} \texttt{P}_{\sim q}[\psi] \\
	\text{path formula:} &\psi &::= \Next \rho \mid \rho \Until \rho 
\end{array}
\]
where $\ap \in \AP$,
$\setofagents, C \subseteq \agents$,
$q \in [0, 1]$, 
and
$x \in \R$. 
Embedding top-level formulas, state formulas without Nash equilibria, and path formulas in \HyperSGL is straightforward. 
Note that, \rPATL existential strategy quantification implicitly quantifies universally over all opponent strategies, which needs to be made explicit in the translation to \HyperSGL.
Nash equilibria can be translated as in \cref{th:translateSWSPepsNE}.

\paragraph{\PATLNE.}
\label{app:general-PATLNE}
It is not obvious whether general \PATLNE can be simulated in \HyperSGL. 
More precisely, it is unclear how to express formulas with Nash equilibria nested inside strategy quantification, including Nash equilibria nested inside other Nash equilibria, in \HyperSGL. 
For example, consider the \PATLNE formula 
\[ \phinested := \llangle \agents \rrangle \texttt{P}_{\sim q}\big[\Finally \existsStrat{C {:} \overline{C}}{\textit{max} \sim x} (\texttt{P}[\Finally \ap] + \texttt{P}[\Finally \aptwo]) \big] . \]
A state $\state$ satisfies $\phinested$ if and only if 
\[\exists \strat \in \Strats[][\agents] .\ \Pr(\pi \in \Pathsfrom[{\tbg^{\strat}}][(\state, \start(\state))] \mid \exists i .\ \exists \nasheq \in \Strats[][\agents] .\ \forall \opt .\ \exists \comp .\  \ldots ) \sim q \]
where `\ldots' has to be replaced with the details of the evaluation of the Nash equilibrium.
If we translate $\phinested$ to \HyperSGL using the construction presented above for expressing the nested Nash equilibrium in \HyperSGL, we need to introduce two state variables $\varfixed, \varfixedp$ to track the current state, and four state variables $\varopt, \varoptp, \varcomp, \varcompp$ to define the \swspepsne. 
We need to quantify over these state variables at the beginning of the formula. 
Hence, the semantics of the translated \HyperSGL formula would be of the form 
\[\forall \opt .\ \exists \comp .\ \exists \strat \in \Strats[][\agents] .\ \Pr(\pi \in \Pathsfrom[{\tbg^{\strat}}][(\state, \start(\state))] \mid \exists i .\ \exists \nasheq \in \Strats[][\agents] .\ \ldots ) \sim q \] 
which does not match the semantics of the \PATLNE formula.
	\section{\SGL (\cref{sec:sgl})}
\label{app:sgl-proof}

For a DRA $\dra$, we use $\mathcal{L}(\dra)$ to denote the language accepted by $\dra$. 
We refer to~\cite{gradelAutomataLogics2002} for details on deterministic Rabin automata. 
Let $\agents$ be a set of agents, and
$\AP$ a set of atomic propositions.
\SGL~\cite{baierStochasticGame2012} formulas are formed according to the following syntax:
\[
\phi := 
\true \mid 
\ap \mid 
\neg \phi \mid 
\llangle \setofagents \rrangle \phi \mid 
\ProbSGL_{\sim x} (\dra; \phi, \ldots, \phi ) 
\]
where 
$\ap \in \AP$,
$\setofagents \subseteq \agents$,
$\sim \in \{\leq, <, =, >, \geq\}$,
$x \in [0,1]$,
and
$\dra$ is a DRA.
The semantics of \SGL is defined inductively as follows,
for a turn-based stochastic game $\tbg$, 
$\state \in \states$, 
$\setofagents \subseteq \agents$, and 
$\strat \in \Strats$:
\[ 
\begin{array}{lcl}
	\state, \setofagents, \strat \models \true & & \\
	\state, \setofagents, \strat \models \ap & \iff 
	&\ap \in \labelingfct(\state) \\
	\state, \setofagents, \strat \models \neg\phi & \iff &
	\state, \setofagents, \strat \not\models \phi \\
	\state, \setofagents, \strat \models \llangle \setofagentstwo \rrangle \phi & \iff &
	\exists \strattwo \in \Strats[][\setofagentstwo] .\
	\state, \setofagents \cup \setofagentstwo, \strat \leftarrow \strattwo \models \phi
	\\
	\state, \setofagents, \strat \models \ProbSGL_{\sim x} (\dra; \vec{\phi}) & \iff & 
	\forall \overline{\strat} \in \Strats[][\overline{\setofagents}] .\ 
	\\ & & \quad 
	{\Pr}^{\tbg^{\strat \oplus \overline{\strat}}}_{r}(\{ \pi \in \Pathsfrom[\tbg^{\strat \oplus \overline{\strat}}][r] \mid \tilde{\pi}^{\vec{\phi}}_{\setofagents, \strat} \in \mathcal{L}(\dra) \}) \sim x 
\end{array}
\]
where 
$r = (\state, \start^{\strat \oplus \overline{\strat}}(\state))$,
$\vec{\phi} = (\phi_1, \ldots, \phi_j)$ 
and
$\tilde{\pi}^{\phi_1, \ldots, \phi_j}_{\setofagents, \strat} \in (2 ^{\{1, \ldots, j\}})^\omega$ is defined as follows for $i \geq 0$:
\[
\tilde{\pi}^{\phi_1, \ldots, \phi_j}_{\setofagents, \strat}(i) :=
\{ l \in \{1, \ldots, j\} \mid \pi[i], \setofagents, \strat \models \phi_l \}
\]
We write $\Prob_{\sim x}(\Next \phi)$ and $\Prob_{\sim x}(\phi_1 \Until \phi_2)$ for $\Prob_{\sim x}(\dra_{\Next}; \phi)$ and $\Prob_{\sim x}(\dra_{\Until}; \phi_1, \phi_2)$.
Further, we write $\phi_1 \wedge \phi_2$ for $\Prob_{> 0}(\dra_{\wedge}; \phi_1, \phi_2)$ and analogously for other Boolean operators.

There is a crucial difference between \SGL and \HyperSGL in the evaluation of temporal operators.
In \HyperSGL, when we evaluate a temporal operator on a path under some strategy mapping, we evaluate the subformula(s) from a position on the path under the \emph{shifted} strategy mapping that remembers the history of the play until that point.
In \SGL, however, the subformula(s) are evaluated under the \emph{same} strategy again, i.e., we do not remember the play so far but act as if we start a new game.
The following example illustrates the difference between shifting strategies in the evaluation of subformulas, and using the original strategy.%
\begin{example}
	{\makeatletter\let\par\@@par 
		\begin{wrapfigure}[3]{r}{0.3\textwidth}%
			\vspace{-1.5\baselineskip}%
			\begin{tikzpicture}[node distance=2cm, inner sep=1pt]
				\node[state, label=35:$\ap$, minimum size=20pt] (s0) {$s_0$}; 
				\node[state, right of=s0, label=145:$\aptwo$, minimum size=20pt] (s1) {$s_1$}; 
				
				\path[->]
				(s0) edge [loop left, looseness=6] node[pos=0.75, above] {$\action$} (s0)
				(s1) edge [loop right, looseness=6] node[pos=0.25, above] {$\actionthree$} (s1)
				(s0) edge node[pos=0.5, above] {$\actiontwo$} (s1);
			\end{tikzpicture}%
		\end{wrapfigure}%
		Consider 
		the single-player stochastic game $\tbgpq$ 
		depicted on the right
		and
		the \SGL formula $\existsStrat{1}{} \ProbSGL_{=1}(\Next (\ap \land \ProbSGL_{=1}(\Next \aptwo)))$.
		For $\strat \in \Strats[\tbgpq]$, we use $\Pr^{\tbgpq^\strat}_{\state}$ and $\Pathsfrom[\tbgpq^\strat]$ to denote $\Pr^{\tbgpq^\strat}_{(\state, \start^\strat(\state))}$ and $\Pathsfrom[\tbgpq^\strat][(\state, \start^\strat(\state))]$, under abuse of notation.
		Further, we use $\{\}$ to denote the unique $\emptyset$-strategy.
		It holds that 
		\begin{align*}
			&s_0, \emptyset, \emptymap \modelsSGL \existsStrat{1}{} \ProbSGL_{=1}(\Next (\ap \land \ProbSGL_{=1}(\Next \aptwo))) \\
			\iff& \exists \strat \in \Strats[\tbgpq][\{1\}] .\ {\Pr}^{\tbgpq^\strat}_{s_0}(\{\pi \in \Pathsfrom[\tbgpq^\strat][s_0] \mid \projS{\pi[1]}, \{1\}, \strat \modelsSGL (\ap \land \ProbSGL_{=1}(\Next \aptwo))  \}) = 1 
		\end{align*}
		In order to satisfy the first conjunct ($\ap$) at the next state, we must choose a strategy $\strat$ that deterministically selects $\action$ at the current state, $s_0$.
		Hence, $\projS{\pi[1]} = s_0$ for all $\pi \in \Pathsfrom[\tbgpq^\strat][s_0]$. 
		The subformula $\ProbSGL_{=1}(\Next \aptwo)$ is evaluated over $\strat$ again (instead of over $\strat^{\pi[0,1]}$). 
		Hence, we know that $\projS{\pi'[1]} = s_0$ for all $\pi' \in \Pathsfrom[\tbgpq^\strat][\projS{\pi[1]}]$ since $\projS{\pi[1]} = s_0$. This directly implies that the second conjunct ($\ProbSGL_{=1}(\Next \aptwo)$) cannot be satisfied while also satisfying the first conjunct and thus $s_0, \emptyset, \emptymap \not\models_{\SGL} \existsStrat{1}{} \ProbSGL_{=1}(\Next (\ap \land \ProbSGL_{=1}(\Next \aptwo)))$.
		\par}%
	
	In contrast, in \HyperSGL, it holds that
	\begin{align*}
		&\Ctxtup[\tbgpq][\emptymap][\{\varstate \mapsto s_0\}] \modelsHyperSGL \existsStrat{1}{\varstate} \Prob(\Next \underbrace{(\ap_{\varstate} \wedge \Prob(\Next \aptwo_{\varstate})=1)}_{\psi})=1 \\[-3ex]
		\iff& \exists \strat \in \Strats[\tbgpq][\{1\}] .\ \\ &{\Pr}^{\tbgpq^\strat}_{s_0}(\{\pi \in \Pathsfrom[\tbgpq^\strat][s_0] \mid \Ctxtup[\tbgpq][\{(\varstate,1) \mapsto \strat^{\pi[0,1]}\}][\{\varstate \mapsto \projS{\pi[1]}\}] \modelsHyperSGL \psi \}) = 1 .
	\end{align*}
	Since
	the subformula $(\ap_{\varstate} \land \Prob(\Next \aptwo_{\varstate})=1)$ is evaluated under the shifted strategy $\strat^{\pi[0,1]}$,
	and
	there does exist a strategy $\strat$ that chooses to first stay in $s_0$ once and then visit $s_1$,
	it holds that $\Ctxtup[\tbgpq][\emptymap][\{\varstate \mapsto s_0\}] \modelsHyperSGL \existsStrat{1}{\varstate} \Prob(\Next (\ap_{\varstate} \wedge \Prob(\Next \aptwo_{\varstate})=1))=1$.
\end{example}

	\section{\HyperSL (\cref{sec:hypersl})}
\label{app:hypersl-proof}

\HyperSL \cite{beutnerHyperStrategy2024} is defined for concurrent stochastic games.
Since \HyperSGL is defined for turn-based games, we only define \HyperSL with respect to turn-based stochastic games here. 

\HyperSL is defined over memoryful deterministic strategies. 
Let $\tbg = \tbgtup$ be a turn-based stochastic game and let $\finPaths$ (and $\finPathsfrom$) be the set of all finite paths in $\tbg$ (from $\state \in \states$, respectively).
We define the application of a memoryful deterministic strategy $\strat = \strattup \in \Strats[][\agents]$ to a finite path $\pi \in \finPaths$ recursively as follows:
\begin{itemize}
	\item If $\pi = \state$, then we let $\strat(\pi)$ be the unique $\action \in \Act$ s.t.\ $\act(\start(\state), \state, \action) = 1$, and
	we let $\modef(\pi) = \modef(\start(\state), \state)$, 
	and
	\item If $\pi = \pi' \state$, then we let $\strat(\pi)$ be the unique $\action \in \Act$ s.t.\ $\act(\modef(\pi'), \state, \action) = 1$,
	and we let $\modef(\pi) = \modef(\modef(\pi'), \state)$.
\end{itemize}

Let $\agents$ be some set of agents,
$\AP$ a set of atomic propositions,
$\VarsPath$ an infinite set of path variables, 
and
$\VarsStrat$ an infinite set of strategy variables.
\HyperSL formulas are built according to the following grammar:
\[\begin{array}{lll}
	\text{state formula:} 
	&\phi &::= \forall \varstrat .\ \phi \mid \exists \varstrat.\ \phi \mid \psi[\varpath_1 : \vec{\varstrat}_1, \ldots, \varpath_m : \vec{\varstrat}_m] \\
	\text{path formula:}
	&\psi & ::= \ap_{\varpath} \mid \phi_{\varpath} \mid \psi \land \psi \mid \neg \psi \mid \Next \psi \mid \psi \Until \psi 
\end{array}\]
where 
$\ap \in \AP$,
$\varpath, \varpath_1, \ldots, \varpath_m \in \VarsPath$,
$\varstrat \in \VarsStrat$,
and
$\vec{\varstrat}_1, \ldots, \vec{\varstrat}_m \colon \agents \to \VarsStrat$ are strategy profiles that assign a strategy variable to each agent.
It is assumed 
for each formula $\psi[\varpath_p : \vec{\varstrat}_p]_{p=1}^{m}$
that all path variables that are free in $\psi$ belong to $\{\varpath_1, \ldots, \varpath_m \}$, i.e., all used path variables are bound to some strategy profile. All nested state formulas are assumed to be closed.

Let 
$\tbg = \tbgtup$ be a turn-based stochastic game.
State formulas are evaluated over a state $\state \in \states$ and a \emph{strategy assignment}
$\Delta \colon \VarsStrat \rightharpoonup \bigcup_{\agent \in \agents} \Strats[][\{\agent\}]$, and path formulas are evaluated over a \emph{path assignment}
$\pathmap \colon \VarsPath \rightharpoonup \Paths$. 
The semantics of \HyperSL path formulas is defined inductively as follows: 
\[ 
\begin{array}{lll}
	\pathmap \models \ap_{\varpath} & \iff 
	&\ap \in \labelingfct(\pathmap(\varpath)[0]) \\
	\pathmap \models \phi_{\varpath} & \iff 
	&\pathmap(\varpath)[0], \emptymap \models \phi \\
	\pathmap \models \psi_1 \land \psi_2 & \iff
	&\pathmap \models \psi_1 \text{ and } \pathmap \models \psi_2 \\
	\pathmap \models \neg\psi & \iff &
	\pathmap \not\models \psi \\
	\pathmap \models \Next \psi & \iff & 
	\pathmap[1, \infty] \models \psi \\
	\pathmap \models \psi_1 \Until \psi_2 & \iff &
	\exists j_2 \in \N .\ \pathmap[j_2, \infty] \models \psi_2 \text{ and } \forall 0\leq j_1 < j_2 .\ \pathmap[j_1, \infty] \models \psi_1 
\end{array}
\]
where $\pathmap[j, \infty]$ denotes the shifted path assignment defined by $\pathmap[j, \infty](\varpath) := \pathmap(\varpath)[j, \infty]$ for $\varpath \in \VarsPath$, where for $\pi = \state_0 \state_1 \state_2 \ldots \in \Paths$ we let $\pi[j, \infty] = \state_j \state_{j+1} \ldots$.
The semantics of \HyperSL state formulas is defined as follows:
\[
\begin{array}{lll}
	\state, \Delta \models \forall \varstrat.\ \phi & \iff 
	&\forall \strat \in \bigcup_{\agent \in \agents} \Strats[][\{\agent\}] .\ \state, \Delta[\varstrat \mapsto \strat] \models \phi \\
	\state, \Delta \models \exists \varstrat.\ \phi & \iff 
	&\exists \strat \in \bigcup_{\agent \in \agents} \Strats[][\{\agent\}] .\ \state, \Delta[\varstrat \mapsto \strat] \models \phi \\
	\state, \Delta \models \psi[\varpath_p : \vec{\varstrat}_p]_{p=1}^{m} & \iff 
	&\left[\varpath_p \mapsto Play_{\tbg}(\state, \bigoplus_{\agent \in \agents} \Delta(\vec{\varstrat}_{p}(\agent)))\right]_{p=1}^{m} \models \psi 
\end{array}\]
where for a state $\state \in \states$ and a joint strategy $\strat \in \Strats[][\agents]$, we define $Play_{\tbg}(\state, \strat)$ as the unique path $\pi \in \Paths$ such that 
$\pi[0] = \state$
and
for all $j \in \N$ it holds that
$\pi[j+1]$ is the unique state with $\Trans(\pi[j], \strat(\pi[0, j]), \pi[j+1]) = 1$.
We use $\tbg, \state \models \phi$ to denote $\state, \emptymap \models \phi$.

\hypersl*

\begin{proof}
	W.l.o.g.\ we can assume that all path and strategy variable names are unique, i.e., no path variable is bound to a strategy profile twice and no strategy variable is quantified twice.
	We further assume that all strategy variables are used in a strategy profile at some point, and that all path variables are used to index an atomic proposition or nested state formula at some point.
	
	Let $\phi$ be a \HyperSLfrag formula and let $\varpath_1$, \ldots, $\varpath_l$ be the path variables in $\phi$ for some $l \in \N$.
	We will define a \HyperSGL formula $\phi'$ using state variables $\varstate_1$, \ldots, $\varstate_l$ where $\varstate_r$ captures $\varpath_r$ for $r=1, \ldots, l$.
	All state variables are quantified at the beginning of $\phi'$ and we require that all of them should initially be set to $\sinit$, 
	so $\phi'$ is of the form $\phi' := \forall \varstate_1 \ldots \forall \varstate_l .\ \big( \bigwedge_{\istate=1}^{l} \textit{init}_{\varstate_\istate} \implies \ldots \big)$. 
	Note that, since we consider non-stochastic games here, fixing a joint strategy determines a unique path from every state.
	Before we define a translation from \HyperSLfrag to \HyperSGL, we introduce some preliminary concepts.
	
	By assumption, each strategy variable is associated with exactly one agent. 
	For each strategy variable $\varstratt$ in $\phi$, we define $\Agt(\varstratt)$ to be the unique agent $\agent \in \agents$ such that there exists some path variable $\varpath_k$ in $\phi$ with $\varstratt = \vec{\varstrat}_{k}(\agent)$. 
	For example, in the context of $\phiex$ defined in \cref{sec:hypersl}, we have $\Agt(\varstratt_1) = 1$, $\Agt(\varstratt_2) = 2$, and $\Agt(\varstratt_3) = 2$.

	In \HyperSL, it is possible to nest state formulas inside path formulas via the construct $(Q_1 \varstratt_1 \ldots Q_n \varstratt_n .\ \psi[\varpath_{i_p} : \vec{\varstrat}_{i_p}]_{p=1}^{m})_{\varpath'}$, where the new path variables $\varpath_{i_1}, \ldots, \varpath_{i_m}$ start at the current state of $\varpath'$.	
	In order to be able to mimic this behavior in \HyperSGL, we define the \emph{reference variable} $\RefVar(\phi_1, \varpath)$ for some path variable $\varpath$ in the context of a \HyperSL formula $\phi_1$ inductively as follows: 
	\[\begin{array}{rcl}
		\RefVar(\cdot, \varpath) \colon 
		& \ap_{\varpath'} &\mapsto 
		\begin{cases}
			\varpath & \text{if }\varpath = \varpath' \\
			\valundefined & \text{otherwise}
		\end{cases} \\
		& \begin{rcases}
			\neg \psi_1 \\
			\Next \psi_1 
		\end{rcases}&\mapsto \RefVar(\psi_1, \varpath) \\
		& \begin{rcases}
			\psi_1 \wedge \psi_2 \\
			\psi_1 \Until \psi_2 \end{rcases}&\mapsto 
		\begin{cases}
			\valundefined & \begin{aligned} 
				\text{if } &\RefVar(\psi_1, \varpath) = \valundefined \land {} \\
				 & \RefVar(\varpath, \psi_2) = \valundefined
				\end{aligned} \\
			\varpath & 
			\begin{aligned}
				\text{if } &\RefVar(\psi_1, \varpath) = \varpath \vee {} \\ 
				 & \RefVar(\psi_2, \varpath) = \varpath 
			\end{aligned} \\
			\varpath' & 
			\begin{aligned} 
				\text{if } \exists \varpath' \neq \varpath .\ &\RefVar(\psi_1, \varpath) = \varpath'
				\vee {} \\
				 & \RefVar(\psi_2, \varpath) = \varpath' 
			\end{aligned}
		\end{cases} \\
		& (\phi_1)_{\varpath'} & \mapsto \begin{cases}
			\valundefined & \text{if } \varpath \neq \varpath' \land \RefVar(\phi_1, \varpath) = \valundefined \\
			\varpath' & \text{otherwise}
		\end{cases} \\
		& Q \varstratt .\ \phi_1 &\mapsto \RefVar(\phi_1, \varpath) \\
		& \psi_1 [\varpath_k : \vec{\varstrat}_k]_{k=1}^{m} & \mapsto  
		\RefVar(\psi_1, \varpath)
	\end{array}\]
	
	Observe that $\RefVar(\phi_1, \varpath)$ is undefined if and only if $\varpath$ does not occur in $\phi_1$.
	If a path variable $\varpath$ occurs in a formula of the form $\psi_1 \land \psi_2$, then by assumption $\varpath$ is either bound in exactly one of the subformulas or in neither. 
	In the latter case it follows that $\varpath$ must have already been bound outside of $\psi_1 \wedge \psi_2$ and $\RefVar(\psi_1 \land \psi_2, \varpath) = \varpath$. 
	Otherwise, w.l.o.g.\ we can assume that $\varpath$ is bound in $\psi_1$. Then, it must hold that $\RefVar(\psi_2, \varpath) = \valundefined$ and there must exist some path variable $\varpath'$ with $\RefVar(\psi_1, \varpath) = \varpath'$. Thus, $\RefVar(\psi_1 \land \psi_2, \varpath) = \varpath'$.
	
	Consider a formula of the form $(\phi_1)_{\varpath'}$.
	We know that $\varpath'$ cannot occur in $\phi_1$ since path variables may not be used in nested state formulas and we assume that path variables are not bound twice.
	Hence, $\RefVar(\phi_1, \varpath') = \valundefined$ and $\RefVar((\phi_1)_{\varpath'}, \varpath') = \varpath'$.
	For $\varpath \neq \varpath'$, if $\varpath$ does not occur in $\phi_1$, then also $\RefVar((\phi_1)_{\varpath'}, \varpath) = \valundefined$. 
	Otherwise, if $\varpath$ does occur in $\phi_1$, then $\varpath$ should follow the evolution of $\varpath'$ outside of $(\phi_1)_{\varpath'}$, so $\RefVar((\phi_1)_{\varpath'}, \varpath) = \varpath'$. 
	
	For example, consider again $\phiex$. 
	The reference variable of $\varpath_4$ with respect to $\psi'' = a_{\varpath_4}$ is
	$\RefVar(a_{\varpath_4}, \varpath_4) = \varpath_4$.
	Thus, $\RefVar(\psi', \varpath) = \RefVar(\big[ \exists \varstratt'' (\ap_{\varpath_4})[\varpath_4 : \vec{\varstrat}_4]\big]_{\varpath_3}, \varpath_4) = \varpath_3$.
	It follows that $\RefVar(\phiex, \varpath_4) = \varpath_2$.
	
	For some state variable $\varstate_r$, we also call $\RefVar(\phi_1, \varpath_r)$ the reference variable of $\varstate_r$ in $\phi_1$.
	We mimic the nested quantification over path variables in \HyperSGL by enforcing that each state variable always evolves in the same way as its current reference variable, i.e., follows the same strategy.
	Once we reach the subformula where the corresponding path variable is quantified, the reference variable is that path variable itself. 
	For every strategy variable quantifier $Q \varstratt$ in $\phi$, we thus need to collect all state variables whose reference variable is bound to a strategy profile containing $\varstratt$.
	Let $\psi_{\varstratt}[\varpath_{i_p} : \vec{\varstrat}_{i_p}]_{p=1}^{m}$ be the unique subformula of $\varphi$ that uses $\varstratt$ in assignments of paths to strategy profiles.
	We define
	$\Vars(\varstratt) := $ 
	\[\begin{array}{l}
		\{ \varstate_\istate \mid  
		\exists \varpath_{i_p} \in \{\varpath_{i_1}, \ldots, \varpath_{i_m}\} .\ 
		\RefVar(\psi_{\varstratt}, \varpath_\istate) = \varpath_{i_p} 
		\wedge 		
		\exists \agent \in \agents .\ \varstratt = \vec{\varstrat}_{i_p}(\agent) 
		\} 
	\end{array} . \]
	Intuitively, $\Vars(\varstratt)$ is the set of all state variables for which $\Agt(\varstratt)$ should follow the strategy assigned to $\varstratt$ in $\psi_{\varstratt}$ until a nested state formula is reached.
	This in particular includes all state variables corresponding to a path variable whose strategy profile assigns $\varstratt$ to some agent itself.
	For example, for the formula $\phiex$ from \cref{sec:hypersl}, we have $\Vars(\varstratt_1) = \{\varstate_1, \varstate_2, \varstate_3, \varstate_4\}$, $\Vars(\varstratt_2) = \{\varstate_1\}$, $\Vars(\varstratt_3) = \{\varstate_2, \varstate_3, \varstate_4\}$, and $\Vars(\varstratt'_1) = \{\varstate_3, \varstate_4\} = \Vars(\varstratt'_2)$.

	Using these definitions, we can now define the translation of $\phi$ to \HyperSGL as the formula 
	$\phi' := \forall \varstate_1 \ldots \forall \varstate_l .\ \big( \bigwedge_{\istate=1}^{l} \textit{init}_{\varstate_\istate} \implies \translation(\phi) \big)$
	where $\translation$ maps \HyperSLfrag formulas to \HyperSGL formulas as follows:
	\[\begin{array}{ll >{\centering\arraybackslash$} p{1.0cm} <{$} l}
		\translation \colon &\text{\HyperSLfrag} & \to &\HyperSGL \\
		&\ap_{\varpath_r} & \mapsto & 
		\ap_{\varstate_r} \\
		&\neg \psi_1 & \mapsto & 
		\neg \translation(\psi_1) \\
		&\psi_1 \wedge \psi_2 & \mapsto & 
		\translation(\psi_1) \wedge \translation(\psi_2) \\
		&\Next \psi_1 & \mapsto & 
		\Prob(\Next \translation(\psi_1)) = 1 \\
		&\psi_1 \Until \psi_2 & \mapsto & 
		\Prob(\translation(\psi_1) \Until \translation(\psi_2)) = 1\\
		&(\phi_{1})_{\varpath} & \mapsto & \translation(\phi_{1}) \\
		&\psi_1[\varpath_{i_p} : \vec{\varstrat}_{i_p}]_{p=1}^{m} & \mapsto & 
		\translation(\psi_1) 
		\\
		&\exists \varstratt .\ \phi_1 & \mapsto & 
		\llangle 
		\Agt(\varstratt) 
		\rrangle_{\Vars(\varstratt)} \translation(\phi_1) 
		\\
		&\forall \varstratt .\ \phi_1 & \mapsto & 
		\| 
		\Agt(\varstratt) 
		\|_{\Vars(\varstratt)} \translation(\phi_1) 
	\end{array}\]
	
	Note that $\phi'$ is well-formed and that the size of $\phi'$ is polynomial in the size of $\phi$ since the number of state variables $l$ is bounded by the size of $\phi$ and for all strategy variables $\varstratt$ we know that $|\Agt(\varstratt)|=1$ and $|\Vars(\varstratt)| \leq l$.
	
	Recall that \HyperSL is defined over memoryful deterministic strategies.
	For \HyperSGL, we use $\modelsHyperSGLHD$ to denote that we restrict the quantification over strategies to memoryful (i.e., \textbf{h}istory-dependent) \textbf{d}eterministic strategies.
	Let $\tbg = \tbgtup$ be a turn-based non-stochastic game with $\sinit \in \states$, and let $\tbg'$ be like $\tbg$ interpreted as a stochastic game with transition probabilities 0 and 1, where $\sinit$ is additionally labeled with a fresh atomic proposition $\init$.
	Proving $\sinit, \emptymap \modelsHyperSL \phi$ iff $\tbg' \modelsHyperSGLHD \phi'$ corresponds to showing 
	\[\sinit, \emptymap \modelsHyperSL \phi 
	\iff 
	\tbg', \emptymap, [\varstate_\istate \mapsto \sinit \text{ for } \istate=1, \ldots, l] \modelsHyperSGLHD \translation(\phi). \]
	We show the following stronger claims: 
	\begin{enumerate}
		\item \label{enum:path}
		Let $\psi$ be a \HyperSL path formula, 
		$\pathmap$ a path assignment,
		$\mapStrat$ a strategy mapping, 
		and $\statetupMap$ a state variable mapping.
		If the following conditions hold
		\begin{enumerate}
			\item \label{enum:path:pathmap}
			$\pathmap$ is compatible with $\psi$, i.e.,
			\begin{enumerate}
				\item 
				If $\varpath$ is free in $\psi$,
				then $\varpath \in \Dom(\pathmap)$, and
				
				\item 
				If $\varpath$ is bound in $\psi$,
				then $\varpath \not\in \Dom(\pathmap)$, 
			\end{enumerate}
			
			\item \label{enum:path:mapsched} 
			$\mapStrat$ and $\pathmap$ are compatible, i.e.,
			
			$\mapStrat(\varstate_\istate, \agent)(\pathmap(\varpath_\istate)[0, z]) = \action$ such that $\Trans(\pathmap(\varpath_\istate)[z], \action, \pathmap(\varpath_\istate)[z{+}1]) = 1$ for all $z \in \N$, $\varpath_\istate \in \Dom(\pathmap)$ and $\agent \in \agents$,
			and
			
			\item \label{enum:path:statetup} 
			$\statetupMap$ and $\pathmap$ are compatible with respect to $\psi$, i.e.,
			
			if $\varpath_\istate$ occurs in $\psi$, then $\statetupMap(\varstate_\istate) = \pathmap(\RefVar(\psi, \varpath_\istate))[0]$,
		\end{enumerate}
		then it holds that
		\[\pathmap \modelsHyperSL \psi \iff \tbg', \mapStrat, \statetupMap \modelsHyperSGLHD \translation(\psi) . \] 
		
		\item \label{enum:state}
		Let $\phi_1$ be a \HyperSL state formula of the form
		$Q_{1} \varstratt_{1} \ldots Q_{n} \varstratt_{n} \psi_1[\varpath_{i_p} : \vec{\varstrat}_{i_p}]_{p=1}^{m}$,
		$\mapStrat$ a strategy mapping,
		$s$ a state in $\tbg$,
		and
		$\statetupMap$ a state variable mapping (for $\tbg'$).
		If the following condition holds
		\begin{enumerate}
			\item \label{enum:state:statetup}
			$\statetupMap$ and $\state$ are compatible with respect to $\phi_1$, i.e.,
			
			if $\varpath_\istate$ occurs in $\psi_1[\varpath_{i_p} : \vec{\varstrat}_{i_p}]_{p=1}^{m}$, then $\statetupMap(\varstate_\istate) = s$
		\end{enumerate}
		then it holds that
		\[ s, \emptymap \modelsHyperSL \phi_1 \iff \tbg', \mapStrat, \statetupMap \modelsHyperSGLHD T(\phi_1) . \]
	\end{enumerate}
	We proceed by induction on the structure of $\psi$.
	
	\paragraph{If $\psi := \ap_{\varpath_\istate}$:} 
	Fix $\pathmap$, $\mapStrat$, $\statetupMap$ satisfying the conditions of claim \ref{enum:path}. 
	Then, 
	\[\begin{array}{rl}
	\pathmap \modelsHyperSL \psi 
	&\xLeftrightarrow{\text{Def.}} \ap \in \labelingfct(\pathmap(\varpath_\istate)[0]) \\
	&\xLeftrightarrow{\text{Ass.}} \ap \in \labelingfct(\statetupMap(\varstate_\istate)) \\
	&\xLeftrightarrow{\text{Def.}} \tbg', \mapStrat, \statetupMap \modelsHyperSGLHD \translation(\psi) . 
	\end{array}\]

	\paragraph{For Boolean operators,} the claim can be easily verified.
	
	\paragraph{If $\psi := \Next \psi_1$:}
	Fix $\pathmap$, $\mapStrat$, $\statetupMap$ satisfying the conditions of claim \ref{enum:path} w.r.t.\ $\psi$.
	Let $\dtmc = \bigtimes_{\varstate_\istate \in Dom(\statetupMap)} (\tbg')^{\strat_\istate}$ for $\strat_\istate = \bigoplus_{\agent \in \agents} \mapStrat(\varstate_\istate, \agent)$. 
	Since $\tbg'$ is non-stochastic, there is a unique path $\pif \in \Paths[\dtmc]$ from the unique state corresponding to $\statetupMap$ in $\dtmc$.
	Let $\statetupMap'$ be the state variable assignment corresponding to the composed state $\pif[1]$ in $\dtmc$. 
	Then, $\psi_1$, $\pathmap[1, \infty]$, $\mapStrat^{\pif[0, 1]}$, $\statetupMap'$ satisfy the conditions of the induction hypothesis:
	\begin{itemize}
		\item Condition \ref{enum:path:pathmap}: Since $\pathmap$ is compatible with $\psi$, it follows directly that $\pathmap[1, \infty]$ is compatible with $\psi_1$. 
		
		\item Condition \ref{enum:path:mapsched}: 
		Let $z \in \N$ with $z>0$, $\varpath_\istate \in \Dom(\pathmap)$ and $\agent \in \agents$. 
		By definition and induction hypothesis, it holds that
		\[\mapStrat^{\pif[0, 1]}(\varstate_\istate, \agent)(\pathmap[1, \infty](\varpath_\istate)[0, z{-}1]) = \mapStrat(\varstate_\istate, \agent)(\pathmap(\varpath_\istate)[0, z]) = \action\] for a unique $\action \in \Act$ with $\Trans(\pathmap(\varpath_\istate)[z], \action, \pathmap(\varpath_\istate)[z{+}1]) = 1$.
		Hence, it follows that also $\Trans(\pathmap[1, \infty](\varpath_\istate)[z{}-1], \action, \pathmap[1, \infty](\varpath_\istate)[z]) = 1$ and thus $\mapStrat^{\pif[0, 1]}$ and $\pathmap[1, \infty]$ are compatible.
		
		\item Condition \ref{enum:path:statetup}: Assume $\varpath_\istate$ occurs in $\psi_1$. Then $\varpath_\istate$ also occurs in $\psi$ and $\statetupMap'(\varstate_\istate)$ is the unique successor state of $\statetupMap(\varstate_\istate)$ in $\tbg'$ under $\mapStrat$. Since $\statetupMap$ and $\pathmap$ are compatible with respect to $\psi$, it thus follows that $\statetupMap'$ and $\pathmap[1, \infty]$ are compatible with respect to $\psi_1$.
	\end{itemize}

	Based on these observations, we can conclude that
	\[\begin{array}{rl}
		\pathmap \modelsHyperSL \psi
		\xLeftrightarrow{\text{Def.\ Sem.}} & \pathmap[1, \infty] \modelsHyperSL \psi_1 \\
		\xLeftrightarrow{\text{IH}} & \tbg', \mapStrat^{\pif[0, 1]}, \statetupMap' \modelsHyperSGLHD \translation(\psi_1) \\
		\xLeftrightarrow{\text{Def.\ Sem.}} & \tbg', \mapStrat, \statetupMap, \pif \modelsHyperSGLHD \Next \translation(\psi_1) \\
		\xLeftrightarrow{\tbg' \text{ non-stoch.}} & \tbg', \mapStrat, \statetupMap \modelsHyperSGLHD \Prob(\Next T(\psi_1)) = 1 \\
		\xLeftrightarrow{\text{Def.\ }\translation} & \tbg', \mapStrat, \statetupMap \modelsHyperSGLHD \translation(\psi) .
	\end{array}\]

	\paragraph{If $\psi := \psi_1 \Until \psi_2$:}
	The reasoning follows analogously.

	\paragraph{If $\psi := (\phi_1)_{\varpath}$:} 
	Fix some $\pathmap$, $\mapStrat$, $\statetupMap$ satisfying the conditions of claim \ref{enum:path} w.r.t.\ $\psi$. 
	By definition of the semantics we have 
	$\pathmap \modelsHyperSL (\phi_1)_{\varpath} 
	\iff \pathmap(\varpath)[0], \emptymap \modelsHyperSL \phi_1$.
	Assume $\varpath_\istate$ occurs in $\phi_1$, then $\RefVar(\psi, \varpath_\istate) = \varpath$ and by condition \ref{enum:path:statetup} we know that $\statetupMap(\varstate_\istate) = \pathmap(\varpath)[0]$. 
	Thus, $\statetupMap$ and $\pathmap(\varpath)[0]$ are compatible w.r.t.\ $\phi_1$ (condition \ref{enum:state:statetup}) and we can apply the induction hypothesis, yielding
	\[\begin{array}{lrl}
		\pathmap \modelsHyperSL (\phi_1)_{\varpath} 
		& \xLeftrightarrow{\text{Def.\ Sem.}} & \pathmap(\varpath)[0], \emptymap \modelsHyperSL \phi_1 \\
		& \xLeftrightarrow{\text{IH}} & \tbg', \mapStrat, \statetupMap \modelsHyperSGLHD \translation(\phi_1) \\
		& \xLeftrightarrow{\text{Def.\ T}} & \tbg', \mapStrat, \statetupMap \modelsHyperSGLHD \translation(\psi) .
	\end{array}\]

	\paragraph{If $\psi := \phi_1 = Q_{1} \varstratt_{1} \ldots Q_{n} \varstratt_{n} .\ \psi_1[\varpath_{i_p} : \vec{\varstrat}_{i_p}]_{p=1}^{m}$:}
	Fix $\mapStrat$, $\state$, $\statetupMap$ satisfying the conditions of claim \ref{enum:state} w.r.t.\ $\phi_1$.
	In the following, we let $\setofagents_{t} := \{\Agt(\varstratt_{t})\}$ for $t=1, \ldots, n$.
	By definition of the semantics we have   
	\[\begin{array}{lcl}
		&  & \state, \emptymap \modelsHyperSL \phi_1 \\[-12pt]
		& \iff & Q_{1} \strat_{1} \in \Strats[\tbg][\setofagents_{1}] \ldots Q_{n} \strat_{n} \in \Strats[\tbg][\setofagents_{n}] .\ 
		\state, \overbrace{[\varstratt_{t} \mapsto \strat_{t}]_{t=1}^{n}}^{\Deltafixed}
		\modelsHyperSL 
		\overbrace{\psi_1[\varpath_{i_p} : \vec{\varstrat}_{i_p}]_{p=1}^{m}}^{\phipath} \\
		& \iff & Q_{1} \strat_{1} \in \Strats[\tbg][\setofagents_{1}] \ldots Q_{n} \strat_{n} \in \Strats[\tbg][\setofagents_{n}] .\ 
		\pathmapfixed
		\modelsHyperSL 
		\psi_1 
	\end{array}\]
	where 
	$\pathmapfixed(\varpath_{i_p}) = Play_{\tbg}(\state, \bigoplus_{\agent \in \agents} \Deltafixed(\vec{\varstrat}_{i_p}(\agent)))$ for $p=1, \ldots, m$.
	
	In order to be able to apply the induction hypothesis, we need to construct $\mapStratfixed$ and $\statetupMapfixed$ satisfying conditions \ref{enum:path:pathmap}-\ref{enum:path:statetup} with respect to $\psi_1$ and $\pathmapfixed$.
	For a given $\vec{\strat} \in \Strats[\tbg][\setofagents_{1}] \times \ldots \times \Strats[\tbg][\setofagents_{n}]$, we 
	let $\statetupMapfixed = \statetupMap$ and 
	define the update $\mapStratfixed$ of $\mapStrat$ by the strategies in $\vec{\strat}$ as 
	\[\begin{array}{lll}
		\mapStratfixed(\varstate_{\istate}, \agent) &:= \begin{cases}
			\Deltafixed(\vec{\varstrat}_{i_p}(\agent)) & 
			\text{if there ex.\ } p \text{ with } \RefVar(\phipath, \varpath_\istate) = \varpath_{i_p}
			\\
			\mapStrat(\varstate_\istate, \agent) & \text{otherwise}
		\end{cases} 
	\end{array}\]
	for $\istate=1, \ldots, l$ and $\agent\in \agents$.
	
	Let us now show that these choice satisfy conditions \ref{enum:path:pathmap}-\ref{enum:path:statetup}.
	We show the two parts of condition \ref{enum:path:pathmap} separately:
	\begin{itemize}
		\item If $\varpath_\istate$ is free in $\psi_1$, then 
		$\istate \in \{i_1, \ldots, i_m\}$ and thus $\varpath_\istate \in \Dom(\pathmapfixed)$.
		
		\item If $\varpath_\istate$ is bound in $\psi_1$, 
		then
		$\istate \not\in \{i_1, \ldots, i_m\}$ and thus $\varpath_\istate \not\in \Dom(\pathmapfixed)$.
	\end{itemize}
	Condition \ref{enum:path:mapsched} holds since for
	$p=1, \ldots, m$, $z \in \N$, and the unique $\agent \in \agents$ with $\pathmapfixed(\varpath_{i_p})[z] \in \states_\agent$ we have
	\[\mapStratfixed(\varstate_{i_p}, \agent)(\pathmapfixed(\varpath_{i_p})[0, z]) = \Deltafixed(\vec{\varstrat}_{i_p}(\agent))(\varpath_{i_p})[0, z] . \]  
	Lastly, condition \ref{enum:path:statetup} holds with the following reasoning:	
	If $\varpath_\istate$ occurs in $\psi_1$, then $\RefVar(\psi_1, \varpath_\istate) \in \{\varpath_{i_1}, \ldots, \varpath_{i_m}\}$ and by construction $\pathmapfixed(\RefVar(\psi_1, \varpath_\istate))[0] = \state$.
	By assumption (condition \ref{enum:state:statetup}) on $\statetupMap$ and $\state$, we have $\statetupMap(\varstate_\istate) = \state$.
	Putting everything together, we get $\statetupMapfixed(\varstate_\istate) = \statetupMap(\varstate_\istate) = \state = \pathmapfixed(\RefVar(\psi_1, \varpath_\istate))[0]$.
	
	Hence, by induction hypothesis and the definition of the semantics of \HyperSGL, it follows that
	\[\begin{array}{rl}
		&Q_{1} \strat_{1} \in \Strats[\tbg][\setofagents_{1}] \ldots Q_{n} \strat_{n} \in \Strats[\tbg][\setofagents_{n}] .\ 
		\pathmapfixed
		\modelsHyperSL 
		\psi_1 \\
		\iff &Q_{1} \strat_{1} \in \Strats[\tbg][\setofagents_{1}] \ldots Q_{n} \strat_{n} \in \Strats[\tbg][\setofagents_{n}] .\ 
		\tbg', \mapStratfixed, \statetupMapfixed
		\modelsHyperSGLHD 
		\translation(\psi_1) 
		\\
		\iff &
		\tbg', \mapStrat, \statetupMap \modelsHyperSGLHD 
		\llcombined 
		\setofagents_{1}
		\rrcombined^{Q_{1}}_{\Vars(\varstratt_{1})} 
		\ldots 
		\llcombined 
		\setofagents_{n} 
		\rrcombined^{Q_{n}}_{\Vars(\varstratt_{n})} 
		\translation(\psi)
	\end{array}	\]
	where $\llcombined \ldots \rrcombined^{\exists} = \existsStrat{\ldots}{}$ and $\llcombined \ldots \rrcombined^{\forall} = \forallStrat{\ldots}{}$.
	
	Thus, $\pathmap \modelsHyperSL (\phi_1)_{\varpath} \iff \tbg', \mapStrat, \statetupMap \modelsHyperSGLHD \translation((\phi_1)_{\varpath})$.
	\hfill\qed
\end{proof}
	\section{Bounded Memory (\cref{th:bounded-mem-decidable} from \cref{sec:mc})}
\label{app:bounded-mem}

\boundedMemDecidable*

\begin{figure}[b]
	\scalebox{0.95}{
		\begin{minipage}{1.04\linewidth}
			\begin{algorithm}[H]
				\caption{Main SMT encoding algorithm for memoryless strategies}
				\label{alg:main}
				\label{alg:truth}
				
				\KwIn{$\tbg=\tbgtup$: TSG, \newline
					$\phi = \phifull$: 
					\HyperSGL formula. 
					}
				\KwOut{
					Whether $\tbg$ satisfies $\phi$ over memoryless probabilistic strategies.
				}
				\Fn{\FMain{$\tbg, \phifull$}}{
					\ForEach(\tcp*[f]{encode state quantifiers}){$i=1, \ldots, \numstatequant$}{
						\label{line:truthstart}
						\label{line:truthstatestart}
						\lIfElse{$Q_{i} {=} \forall$}
						{$E_i := \bigwedge_{s_i \in S}$}
						{$E_i := \bigvee_{s_i \in S}$}\label{line:truthstateend}
					}
					$ E_{\textit{tru}} := E_1 \ldots E_{\numstatequant} \
					\Bigl(
					\holdsWith[(s_1, \ldots, s_\numstatequant)][\pschedq]
					\Bigr)$\tcp*{truth of input formula}
					\label{line:truthend}
					$ E_{\textit{sem}}, \_ := \FSem(\tbg, \pschedq, \numstatequant, 0, [])$ 
					\tcp*{semantics of $\pschedq$}\label{line:sem}
					\lIfElse{$check(E_{\textit{tru}}\wedge E_{\textit{sem}} )=\mathit{SAT}$}{\Return \textit{TRUE}}{\Return \textit{FALSE}}
				}
			\end{algorithm}
		\end{minipage}
	}
\end{figure}

\begin{figure}[p]
	\scalebox{0.95}{
		\begin{minipage}{1.04\linewidth}
			\begin{algorithm}[H]
				\caption{SMT encoding for the meaning of strategy- or non-quantified \HyperSGL formulas}
				\label{alg:semEnc}
				
				\KwIn{%
					$\tbg = \tbgtup$: TSG, \newline
					$\phi$: Strategy- or non-quantified \HyperSGL formula, \newline
					$\numstatequant$: Number of state quantifiers, \newline
					$\countQuant$: Counter for indexing strategy quantifications, \newline
					$\currMap$: Mapping pairs of state variable indices and agents to strategy quantification indices.
				}
				\KwOut{SMT encoding of the meaning of $\phi$ for $\tbg$.}
				
				\Fn{\FSem{$\tbg, \phi, \numstatequant, \countQuant, \currMap$}}{
					\If{$\phi$ is $\existsStrat{\setofagents}{\setofvariables} \phi'$}{
						\label{line:stratEnc}
						{$\countQuant := \countQuant + 1$\;}
						\tcp{In `experiment' $i$, agent $\agent$ should now use the strategy 
							encoded by the variables $\stratencWith[\countQuant][\cdot][\cdot]$
						}
						\lForEach{$(\varstate_i, \agent) \in \setofvariables \times \setofagents$}{
							$\currMap[i, \agent] := \countQuant$
						}
						\label{line:stratEncMap}
						\tcp{Translate strategy quantif.\ to SMT variable quantif.\ }
						{$E_{q} := \ $\enquote{ }\;}
						\label{line:stratEncQuantStart}
						\lForEach{$(\state, \action) \in \states_{\setofagents} \times \Act$ with $\action \in \Act(\state)$}{
							$E_q := E_q\ \exists \stratencWith[\countQuant][][]$
						}
						\label{line:stratEncQuantEnd}
						{$E_{\textit{str}}:= \bigwedge_{\state \in \states_\setofagents} 
							\left(\bigwedge_{\action \in \Act(\state)} 0\, {\leq}\, \stratencWith[\countQuant][][]\, {\leq}\, 1\right) \wedge 
							\sum_{\action \in \Act(\state)} \stratencWith[\countQuant][][] {=} 1 $\;}
						\label{line:stratEncValid}
						{$E', \countQuant' := \FSem(\tbg, \phi', \numstatequant, \countQuant, \currMap)$\;}
						\label{line:stratEncE'}
						{$E := E_q\ (E_{\textit{str}} \wedge E') \wedge 
							\bigwedge_{\statetup \in \states^{\numstatequant}} \holdsWith \iff \holdsWith[][\phi']
						$\;}
						\label{line:stratEncE}
						{\Return $E, \countQuant'$\;}
						\label{line:stratEncEnd}
					}
					\lElseIf(\tcp*[f]{Analogously to the existential case}){$\phi$ is $\forallStrat{\setofagents}{\setofvariables} \phi'$}{
						\ldots
						\label{line:stratEncUnivEnd}
					}
					\ElseIf{$\phi$ is $\true$}{
						{$E := \bigwedge_{\statetup \in \states^{\numstatequant}} \holdsWith$\;}
						{\Return $E, \countQuant$\;}
					}
					\ElseIf{$\phi$ is $a_{\varstate_i}$}{
						$E := (\bigwedge_{\statetup \in \states^{\numstatequant}, a \in L(s_{i})} \holdsWith) \wedge (\bigwedge_{\statetup \in \states^{\numstatequant}, a \not\in L(s_{i})} \neg\holdsWith) $\;
						\Return $E, \countQuant$\;
					}
					\ElseIf{$\phi$ is $\phi_1 \wedge \phi_2$}{
						{$E_1, \countQuant_1 : = \FSem(\tbg, \phi_1, \numstatequant, \countQuant, \currMap)$\;}
						{$E_2, \countQuant_2 : = \FSem(\tbg, \phi_2, \numstatequant, \countQuant_1, \currMap)$\;}
						{$E := E_1 \wedge E_2 \wedge 
							\bigwedge_{\statetup \in \states^{\numstatequant}} (\holdsWith \iff (\holdsWith[][\phi_1] \wedge \holdsWith[][\phi_2]))$\;}
						{\Return $E, \countQuant_2$\;}
					}
					\ElseIf{$\phi$ is $\neg\phi'$}{
						$E', \countQuant' : = \FSem(\tbg, \phi', \numstatequant, \countQuant, \currMap)$\;
						$E := E' \wedge \bigwedge_{\statetup \in \states^\numstatequant} (\holdsWith \oplus \holdsWith[][\phi'])$\;
						\Return $E, \countQuant'$\;
					}
					\ElseIf(\tcp*[f]{$\sim \in \{\leq,<,=,>,\geq\}$}){$\phi$ is $\phi_1 \sim \phi_2$}{
						{$E_1, \countQuant_1 := \FSemProb(\tbg, \phi_1, \numstatequant, \countQuant, \currMap)$\;}
						{$E_2, \countQuant_2 := \FSemProb(\tbg, \phi_2, \numstatequant, \countQuant_1, \currMap)$ \;}
						{$E := E_1 \wedge E_2 \wedge \bigwedge_{\statetup \in \states^{\numstatequant}} 
							(\holdsWith \iff (\probWith[][\phi_1] \sim \probWith[][\phi_2]))$\;}
						{\Return $E, \countQuant_2$\;}
					}
					\ElseIf{$\phi$ is $\varstate_i = \varstate_j$}{
						\label{line:s=s'start}
						{$E := 
							(\bigwedge_{\statetup \in \states^{\numstatequant}, s_i = s_j} \holdsWith) \wedge 
							(\bigwedge_{\statetup \in \states^{\numstatequant}, s_i \neq s_j} \neg\holdsWith)$\;}
						{\Return $E, \countQuant$\;}
						\label{line:s=s'end}
					}
				}
			\end{algorithm}
		\end{minipage}
	}
\end{figure}

\begin{figure}[t]
	\scalebox{0.95}{
		\begin{minipage}{1.04\linewidth}
			\begin{algorithm}[H]
				\caption{SMT encoding for the meaning of \HyperSGL probability expressions}
				\label{alg:semEncProb}
				
				\KwIn{%
					$\tbg = \tbgtup$: TSG, \newline
					$\phi$: \HyperSGL probability epxression, \newline
					$\numstatequant$: Number of state quantifiers, \newline
					$\countQuant$: Counter for indexing strategy quantifications, \newline
					$\currMap$: Mapping pairs of state variable indices and agents to strategy quantification indices.
				}
				\KwOut{SMT encoding of the meaning of $\phi$ for $\tbg$.}
				
				\Fn{\FSemProb{$\tbg, \phi, \numstatequant, \countQuant, \currMap$}}{
					\If{$\phi$ is $c$}{
						\label{line:nextstart}
						{$E := \bigwedge_{\statetup \in \states^{\numstatequant}} (\probWith = c)$\;}
						{\Return $E, \countQuant$\;} 
					}
					\ElseIf{$\phi$ is $f(\phi_1, \ldots, \phi_k)$}{
						{$\countQuant_0 := \countQuant$\;}
						\ForEach{$i = 1, \ldots, k$}{
							{$E_i, \countQuant_i := \FSemProb(\tbg, \phi_i, \numstatequant, \countQuant_{i-1}, \currMap)$\;}
						}
						{$E := (\bigwedge_{i=1}^{k} E_i) \wedge 
							(\bigwedge_{\statetup \in \states^{\numstatequant}} (\probWith\ {=}\ f(\probWith[][\phi_1], \ldots, \probWith[][\phi_k])))$\;}
						{\Return $E, \countQuant_k$\;}
					}
					\ElseIf{$\phi$ is $\Prob(\Next \phi')$}{
						{$E', \countQuant' := \FSem(\tbg, \phi', \numstatequant, \countQuant, \currMap)$\; }
						\label{line:nextE'}
						{$E := E' \wedge \bigwedge_{\statetup \in \states^\numstatequant} \Bigl((\holdsToIntWith[][\phi'] {=} 1 \wedge \holdsWith[][\phi']) \vee (\holdsToIntWith[][\phi'] {=} 0 \wedge \neg\holdsWith[][\phi'])\Bigr)$\;}
						\label{line:nextToInt}
						\ForEach{$\statetup = (s_1, \ldots, s_\numstatequant) \in \states^{\numstatequant}$}{
							\tcp{Retrieve index of currently relevant strategy for each $s_i$}
							{$\currInd := [\currMap[i, \agents(s_i)] \textit{ for } i=1, \ldots, \numstatequant]$\;}
							\label{line:nextRetrieve}
							{$E := E \wedge \displaystyle 
								\probWith = \displaystyle
								\sum_{\actiontup \in \Act(\statetup)^\numstatequant}
								\sum_{\statetup' \in \fsucc(\statetup, \actiontup)}
								\left(\prod_{i=1}^\numstatequant
								\stratencWith[{\currInd[i]}][\state_i][\action_i]
								\cdot
								\Trans(\state_i, \action_i, \state'_i)\right)
								\cdot
								\holdsToIntWith[\statetup'][\phi']
								$\;}
							\label{line:nextE}
						}
						{\Return $E, \countQuant'$\;}
						\label{line:nextend}
					}
					\ElseIf{$\phi$ is $\Prob(\phi_1 \Until \phi_2)$}{
						{\Return $\FSemProbUntil(\tbg, \phi, \numstatequant, \countQuant, \currMap)$\;}
					}	
				}
		\end{algorithm}
	\end{minipage}
	}
\end{figure}

\begin{figure}[t]
	\scalebox{0.95}{
	\begin{minipage}{1.04\linewidth}
		\begin{algorithm}[H]
			\caption{SMT encoding for the meaning of \HyperSGL probability expressions involving Until operators}
			\label{alg:semEncProbUntil}
			
			\KwIn{%
				$\tbg = \tbgtup$: TSG, \newline
				$\phi := \Prob(\phi_1 \Until \phi_2)$: \HyperSGL probability epxression with Until, \newline
				$\numstatequant$: Number of state quantifiers, \newline
				$\countQuant$: Counter for indexing strategy quantifications, \newline
				$\currMap$: Mapping pairs of state variable indices and agents to strategy quantification indices.
			}
			\KwOut{SMT encoding of the meaning of $\phi$ for $\tbg$.}
			
			\Fn{\FSemProbUntil{$\tbg, \phi, \numstatequant, \countQuant, \currMap$}}{
				\label{line:untilStart}
				{$E_1, \countQuant_1 := \FSem(\tbg, \phi_1, \numstatequant, \countQuant, \currMap)$\;}
				{$E_2, \countQuant_2 := \FSem(\tbg, \phi_2, \numstatequant, \countQuant_1, \currMap)$\;}
				{$E := E_1 \wedge E_2$\;}
				\ForEach{$\statetup=(s_1, \ldots, s_\numstatequant)\in \states^\numstatequant$}{
					\tcp{Base cases where the probability of $\phi_1 \Until \phi_2$ is 0 or 1}
					{$E := E \wedge (\holdsWith[][\phi_2] \implies \probWith{=}1)\wedge \bigl((\neg\holdsWith[][\phi_1] \wedge \neg\holdsWith[][\phi_2]) \implies \probWith{=}0 \bigr) $\;}
					\label{line:untilBase}
					\tcp{Otherwise, the probability depends on the successors}
					{$\currInd := [\currMap[\state_i, \agents(s_i)] \textit{ for } i=1, \ldots, \numstatequant]$\;} 
					{$E_{pr} := \probWith = \displaystyle
						\sum_{\actiontup \in \Act(\statetup)^\numstatequant}\;
						\sum_{\statetup' \in \fsucc(\statetup, \actiontup)}\;
						\Big(\prod_{i=1}^\numstatequant
						\stratencWith[{\currInd[i]}][\state_i][\action_i]
						\cdot
						\Trans(\state_i, \action_i, \state'_i)\Big)
						\cdot \probWith[\statetup'][\phi]$\;}
					\label{line:untilEpr}
					\tcp{Loop condition: Ensure that a $\phi_2$-state will be reached with positive probability if $\probWith{>}0$}
					{$E_{loop} := 
						\probWith{>}0 \implies 
						\displaystyle
						\bigvee_{\actiontup \in \Act(\statetup)^\numstatequant} 
						\bigvee_{\statetup' \in \fsucc(\statetup, \actiontup)}
						\Bigl(\displaystyle 
						\prod_{i=1}^\numstatequant
						\stratencWith[{\currInd[i]}][\state_i][\action_i]
						{>}0
						\wedge 
						\Big(\holdsWith[\statetup'][\phi_2] 
						\vee 
						d_{\statetup,\phi_2} {>} d_{\statetup',\phi_2} \Big)\Bigr)
						$\;}
					\label{line:untilEloop}
					{$E := E \wedge
						\Biggl[\Bigl[
						\holdsWith[][\phi_1] \wedge 
						\neg\holdsWith[][\phi_2] 
						\Bigr] 
						\implies 
						\biggl[
						E_{pr}
						\;\wedge$ $
						E_{loop}
						\biggr]\Biggr]$\;}
					\label{line:untilE}
				}
				{\Return $E, \countQuant_2$\;}
				\label{line:untilEnd}
			}
		\end{algorithm}
	\end{minipage}
	}
\end{figure}				

\begin{proof}
	We can encode the bounded-memory model-checking problem in non-linear real arithmetic (NRA), analogously to \HyperPCTL~\cite{abrahamProbabilisticHyperproperties2020} and \AHyperPCTL \cite{gerlachIntroducingAsynchronicity2023a}. 
	NRA is decidable in time doubly exponential in the number of variables~\cite{collinsQuantifier1975,davenportVarietiesDoublyExponential2021}.
	We first present how to create the encoding for memoryless (probabilistic) strategies and then explain how this approach can be extended to bounded memory. 
	
	\paragraph{Memoryless Strategies.}
	\cref{alg:truth,alg:semEnc,alg:semEncProb,alg:semEncProbUntil} create an SMT encoding $E$ for a given TSG $\tbg$ and \HyperSGL formula $\phifull$ such that $E$ is satisfiable iff $\tbg \models \phifull$ over memoryless probabilistic strategies.
	We use 
	\begin{itemize}
		\item Boolean variables $\holdsWith$ to encode the evaluation of a strategy- or non-quantified subformula $\phi$ of $\pschedq$ from state tuple $\statetup \in \states^\numstatequant$,
		
		\item real variables $\holdsToIntWith$ to encode the integer representation of $\holdsWith$, and
		
		\item real variables $\probWith$ encoding the evaluation of a probability expression subformula $\phi$ from state tuple $\statetup \in \states^\numstatequant$.
	\end{itemize}
	
	\textbf{\cref{alg:truth}} translates each existential (or universal) state quantification to a conjunction (or disjunction) over all states and recursively encodes the semantics of $\pschedq$ in $\tbg$ using~\cref{alg:semEnc,alg:semEncProb,alg:semEncProbUntil}. 
	
	\textbf{\cref{alg:semEnc}} encodes the meaning of non-state-quantified \HyperSGL formulas.
	For strategy- or non-quantified formulas, this works the same as for \HyperPCTL, with the addition of strategy quantification and comparison of state variables.
	The encoding of subformulas $\varstate_i = \varstate_j$ is straightforward (Lines \ref{line:s=s'start}-\ref{line:s=s'end}).
	For the encoding of strategy quantification (Lines \ref{line:stratEnc}-\ref{line:stratEncUnivEnd}), we keep a counter $\countQuant$ in order to give a unique index to each strategy quantification. 
	We mimic the $j^{th}$ strategy quantification $\existsStrat{\setofagents_j}{\setofvariables}$ (or $\forallStrat{\setofagents_j}{\setofvariables}$) by quantifying over variables $\stratencWith$ for $\state \in \states_{\setofagents_j}, \action \in \Act(\state)$ (Lines \ref{line:stratEncQuantStart}-\ref{line:stratEncQuantEnd}) and specifying that they should encode a valid strategy (\cref{line:stratEncValid}).
	We further keep an array $\currMap$ to store that all $\varstate \in \setofvariables$ should use  the strategy encoded by the variables $\stratencWith$ 
	for $\agent \in \setofagents_j$ from now on, until this assignment is possibly overwritten by a nested strategy quantification (\cref{line:stratEncMap}). 
	Note that $\countQuant$ and $\currMap$ are part of the book-keeping for the creation of the SMT encoding, not encoded as SMT variables themselves. 
	In particular, we can decide `statically' which strategy variables have to be used in the encoding of the semantics of a subformula.
	We show only the existential case explicitly here; the universal case is analogous with corresponding changes in \cref{line:stratEncQuantEnd,line:stratEncE}.

	\textbf{\cref{alg:semEncProb,alg:semEncProbUntil}} encode the meaning of \HyperSGL probability expressions.
	The encoding of constants and arithmetic functions is straightforward.
	Recall that we only allow standard arithmetic operations like addition and multiplication for the arithmetic operator $f$. Those operations are also allowed in non-linear real arithmetic.
	
	For the encoding of the Next and the Until operator, we simplify notation by defining 
	$\Act(\statetup) = \bigtimes_i \Act(\state_i)$ and
	$\fsucc(\statetup, \actiontup)$ as the set of all $\statetup' = (s'_1, \ldots, s'_\numstatequant) \in \states^{\numstatequant}$ such that $\prod_{i=1}^{\numstatequant} \Trans(\state_i, \action_i, \state'_i) > 0$, for $\statetup = (s_1, \ldots, s_\numstatequant) \in \states^{\numstatequant}$ and $\actiontup = (\action_1, \ldots, \action_\numstatequant) \in \Act^\numstatequant$.
	Further, for $\state \in \states$, we use $\agents(s)$ to denote the unique agent $\agent \in \agents$ such that $\state \in \states_\agent$.
	The differences of our encoding to the encodings for \HyperPCTL or \AHyperPCTL~\cite{abrahamProbabilisticHyperproperties2020,gerlachIntroducingAsynchronicity2023a} stem from 
	(1)~the extension to stochastic games, meaning we have to track strategies for several agents,
	(2)~allowing nested strategy quantification, and
	(3)~considering probabilistic schedulers.
	
	For $\phi := \Prob(\Next \phi')$, we first encode the semantics of $\phi'$ (Alg.\ \ref{alg:semEncProb}, \cref{line:nextE'}) and enforce that $\holdsToIntWith[][\phi']$ equals the integer representation of $\holdsWith[][\phi']$ (Alg.\ \ref{alg:semEncProb}, \cref{line:nextToInt}).
	For each state tuple $\statetup \in \states^\numstatequant$, we encode the probability of $\Next \phi'$ as the sum over all action tuples $\actiontup \in \Act^\numstatequant$ and successor state tuples $\statetup'$ of the product of
	(1) the probability of choosing $\actiontup$ in $\statetup$ ($\prod_{i=1}^\numstatequant
	\stratencWith[{\currInd[i]}][\state_i][\action_i]$),
	(2) the probability of going from $\statetup$ to $\statetup'$ via $\actiontup$ ($\prod_{i=1}^{\numstatequant} \Trans(\state_i,\action_i,\state'_i)$), 
	and
	(3) the integer representation of whether $\phi'$ holds in $\statetup'$ ($\holdsToIntWith[\statetup'][\phi']$).
	For (1), we need to retrieve $\currInd[i] = \currMap[\state_i, \agents(s_i)]$, the index of the currently relevant strategy variable for state variable $\varstate_i$ and agent $\agent = \agents(\state_i)$, for $i=1,\ldots,\numstatequant$ (Alg.\ \ref{alg:semEncProb}, \cref{line:nextRetrieve}).

	For $\phi := \Prob(\phi_1 \Until \phi_2)$, the encoding works similarly.
	We first encode the semantics of $\phi_1$ and $\phi_2$ and then encode the evaluation of $\phi$ for every state tuple.
	In \cref{alg:semEncProbUntil}, \cref{line:untilBase}, we encode the base cases where it is known that $\phi_1 \Until \phi_2$ holds with probability 0 or 1.
	If the base cases do not hold, we define the evaluation of $\phi$ at $\statetup \in \states^\numstatequant$ 
	by encoding the linear equation system characterizing $\Prob(\phi_1 \Until \phi_2)$ (Alg.\ \ref{alg:semEncProbUntil}, \cref{line:untilEpr})
	and adding a loop condition to ensure that we only accept the least solution of the equation system, employing real variables $d_{\statetup,\phi_2}$ (Alg.\ \ref{alg:semEncProbUntil}, \cref{line:untilEloop}). This approach was introduced in the encoding for \HyperPCTL~\cite{abrahamProbabilisticHyperproperties2020}, for details on the least fixed point characterization we refer to~\cite{baierPrinciplesModel2008}.

	Note that only the variables encoding the strategies are explicitly quantified, while all remaining variables are implicitly existentially quantified.
	The constructed SMT encoding is hence not necessarily in prenex normal form, but can easily be transformed into it. 
	Note further that the non-linearity of the encoding stems from the fact that we consider probabilistic strategies.

	\paragraph{Bounded Memory.}
	We can extend the SMT encoding from the memoryless case to bounded memory, similarly to how the SMT encoding for \AHyperPCTL treats a specific kind of bounded-memory `stutter-schedulers'~\cite{gerlachIntroducingAsynchronicity2023a}. 
	Let $\tbg = \tbgtup$ be a turn-based game and $\phi := \phifull$ be a \HyperSGL formula.
	For memory bound $k \in \N$, we assume $\modes_j = \modes = \{0, \ldots, k-1\}$ for every strategy quantifier index $j$. 
	The state space of the induced product DTMC over which $\pschedq$ is evaluated is $(\states \times \modes)^\numstatequant$ (in contrast to $\states^\numstatequant$ for memoryless schedulers).
	For every strategy quantifier, 
	we additionally introduce pseudo-Boolean variables $\modefencWith$ encoding the mode transition choice, where $\modefencWith = 1$ represents $\modef_j(\mode,\state) = \mode'$, for $\state \in \states, \mode, \mode' \in \modes_j$, and enforce that for every $\state \in \states, \mode \in \modes_j$ it holds that $\sum_{\mode' \in \modes_j} \modefencWith = 1$.
	We also need to 
	adjust the strategy encoding to variables $\stratencWith[][\state,\mode][]$ encoding the probability of choosing $\alpha \in \Act$ in $\state \in \states$ when the mode is $\mode \in \modes_j$. 
	Further, the variables $\holdsWith, \holdsToIntWith, \probWith, d_{\statetup, \phi}$ need to be replaced by variables $\holdsWith[\statetup, \modetup], \holdsToIntWith[\statetup, \modetup], \probWith[\statetup, \modetup], d_{\statetup, \modetup, \phi}$ for $\statetup \in \states^\numstatequant, \modetup \in \modes^\numstatequant$ and $\phi$ a subformula of $\pschedq$.
	Lastly, we need to update the calculation of probabilities in the encoding of probability expressions in \cref{alg:semEncProbUntil}, \cref{line:untilEpr,line:untilEloop} and \cref{alg:semEncProb}, \cref{line:nextE} as shown exemplarily for the Next operator in \cref{alg:semEncProbBounded}.
	\hfill\qed
\end{proof}

\begin{figure}[t]
\scalebox{0.95}{
	\begin{minipage}{1.04\linewidth}
		\begin{algorithm}[H]
			\caption{SMT encoding for the meaning of \HyperSGL probability expressions for bounded memory}
			\label{alg:semEncProbBounded}
			
			\KwIn{
				$\tbg = \tbgtup$: TSG, \newline
				$\phi$: \HyperSGL probability epxression, \newline
				$k$: Memory bound, \newline
				$\numstatequant$: Number of state quantifiers, \newline
				$\countQuant$: Counter for indexing strategy quantifications, \newline
				$\currMap$: Mapping pairs of state variables and agents to strategy quantification indices.
			}
			\KwOut{SMT encoding of the meaning of $\phi$ for $\tbg$.}
			
			\Fn{\FSemProb{$\tbg, \phi, k, \numstatequant, \countQuant, \currMap$}}{
				{\ldots\;}
				\ElseIf{$\phi$ is $\Prob(\Next \phi')$}{
					{$E', \countQuant' := \FSem(\tbg, \phi', k, \numstatequant, \countQuant, \currMap)$\; }
					{$E := E' \wedge \bigwedge_{\statetup \in \states^\numstatequant}
						\bigwedge_{\modetup \in \modes^\numstatequant} \Bigl((\holdsToIntWith[\statetup, \modetup][\phi'] {=} 1 \wedge \holdsWith[\statetup, \modetup][\phi']) \vee (\holdsToIntWith[\statetup, \modetup][\phi'] {=} 0 \wedge \neg\holdsWith[\statetup, \modetup][\phi'])\Bigr)$\;}
					\ForEach{$\statetup = (s_1, \ldots, s_\numstatequant) \in \states^{\numstatequant}$}{
						\tcp{Retrieve index of currently relevant strategy for each $s_i$}
						{$\currInd := [\currMap[i, \agents(s_i)] \textit{ for } i=1, \ldots, \numstatequant]$\;}
						\ForEach{$\modetup = (\mode_1, \ldots, \mode_n) \in \modes^\numstatequant$}{
							{$E := E \wedge \displaystyle 
								\probWith = \displaystyle
								\sum_{\actiontup \in \Act(\statetupMap)^\numstatequant}
								\sum_{\statetup' \in \fsucc(\statetup, \actiontup)}
								\sum_{\modetup' \in \modes^\numstatequant}$ 
								\qquad 
								$\left(\prod_{i=1}^\numstatequant
								\stratencWith[{\currInd[i]}][\state_i, \mode_i][\action_i]
								\cdot 
								\modefencWith[{\currInd[i]}][\state_i][\mode_i][\mode'_i]
								\cdot
								\Trans(\state_i, \action_i, \state'_i)\right)
								\cdot
								\holdsToIntWith[\statetup', \modetup'][\phi']
								$\;}
						}
					}
					{\Return $E, \countQuant'$\;}
				}
				{\ldots\;}
			}
		\end{algorithm}
	\end{minipage}
}
\end{figure}
	\section{\EXPTIME-Membership over Memoryless Deterministic Strategies (\cref{th:exptime} from \cref{sec:mc})}
\label{app:exptime-membership}

\exptime*

The lower bound follows from the fact that model-checking \HyperPCTL on DTMCs is \PSPACE-hard \cite{abrahamHyperPCTLTemporal2018} and \HyperSGL subsumes \HyperPCTL on DTMCs 
over any scheduler class by the construction given in \cref{sec:hyperpctl}.

We show membership in \EXPTIME by presenting a model-checking algorithm and showing that its worst-case running time is exponential in the size of the inputs.
\cref{alg:check,alg:calc} show a naive, brute-force decision procedure which resolves state and strategy quantification by testing all possible states or (memoryless deterministic) strategies, respectively. 
Both algorithms take as input 
a turn-based game $\tbg$, 
a dictionary $\mapStrat$ mapping tuples of states variables and agents to strategies,
a dictionary $\statedict$ mapping state variable indices to states,
and a \HyperSGL formula or probability expression, respectively.
The helper function \FSucc{$\tbg, \mapStrat, \statetupMap$} returns the set of all tuples $(\statedict[t], p)$ of state variable mappings $\statedict[t]$ and probabilities $p \in [0,1]$ such that $\statedict[t]$ corresponds to a successor state of $\statedict$ in the DTMC $\dtmc$ induced by $\mapStrat$ on $\tbg$, and $p$ is the probability of transitioning from $\statedict$ to $\statedict[t]$ in $\dtmc$. 

\begin{figure}[p]
	\begin{algorithm}[H]
		\caption{Simple model-checking algorithm for memoryless deterministic strategies}
		\label{alg:check}
		
		\KwIn{$\tbg=\tbgtup$: TSG, \newline
			$\mapStrat$: Mapping tuples of states variables and agents to strategies, \newline
			$\statedict$: Mapping state variable indices to states, \newline
			$\phi$: \HyperSGL formula.
		}
		\KwOut{Whether $\tbg$ satisfies $\phi$ over memoryless deterministic strategies.}
		\Fn{\FCheck{$\tbg, \mapStrat, \statedict, \phi$}}{
			\If{$\phi$ is $\exists \varstate_i \ \psi $}{
				\ForEach(\tcp*[f]{finitely many states}){$\state \in \states$}{
					\lIf{\FCheck{$\Ctxtup[][][{\statedict{[i \mapsto \state]}}], \psi$}}{\Return{true}}
				}
				\Return{false}
			}
			\ElseIf{$\phi$ is $\forall \varstate_i \ \psi $}{
				\ForEach(\tcp*[f]{finitely many states}){$\state \in \states$}{
					\lIf{\algorithmicnot \FCheck{$\Ctxtup[][][{\statedict{[i \mapsto \state]}}], \psi$}}{\Return{false}}
				}
				\Return{true}
			}
			\ElseIf{$\phi$ is $\existsStrat{\setofagents}{\{\varstate_{i_1}, \ldots, \varstate_{i_m}\}} \psi $}{
				\ForEach(\tcp*[f]{finitely many strategies}){$\strat \in \Strats[]$}{
					{$\mapStrat^{\strat} := \mapStrat[(\varstate_{i_l}, j) \mapsto \strat_j \text{ for } j\in \setofagents, l{=}1, \ldots, m]$ \;}
					\lIf{\FCheck{$\Ctxtup[][{\mapStrat^{\strat}}], \psi$}}{\Return{true}}
				}
				\Return{false}
			}
			\tcp{$\forallStrat{A}{\varstate} \psi$ can be covered by $\neg \existsStrat{A}{\varstate} \neg \psi$ or analogous to $\existsStrat{}{}$}
			\lElseIf{$\phi$ is $\true$}{
				\Return{true}
			}
			\lElseIf{$\phi$ is $\ap_{\varstate_i}$}{
				\Return{$(\ap \in \labelingfct(\statedict{[i]}))$}
			}
			\lElseIf{$\phi$ is $\neg \psi_1$}{
				\Return{\algorithmicnot \FCheck{$\Ctxtup, \psi_1$}}
			}
			\ElseIf{$\phi$ is $\psi_1 \wedge \psi_2$}{
				\Return{\FCheck{$\Ctxtup, \psi_1$} \algorithmicand \FCheck{$\Ctxtup, \psi_2$}}
			}
			\ElseIf{$\phi$ is $\pprob_1 \sim \pprob_2$}{
				\Return{$($\FCalc{$\Ctxtup, \pprob_1$}$) \sim ($\FCalc{$\Ctxtup, \pprob_2$}$)$}
			}
			\lElseIf{$\phi$ is $\varstate_i = \varstate_j$}{
				\Return{$\statedict{[i]} == \statedict{[j]}$}}
		}
	\end{algorithm}
\end{figure}

\begin{figure}[t]
	\begin{algorithm}[H]
		\caption{Evaluating probability expressions}
		\label{alg:calc}
		
		\KwIn{$\tbg=\tbgtup$: TSG, \newline
			$\mapStrat$: Mapping tuples of states variables and agents to strategies, \newline
			$\statedict$: Mapping state variable indices to states, \newline
			$\pprob$: \HyperSGL probability expression.
		}
		\KwOut{$\llbracket \pprob \rrbracket_{\tbg, \mapStrat, \statedict} \in [0,1]$}
		
		\Fn{\FCalc{$\tbg, \mapStrat, \statedict, \pprob$}}{
			\If{$\pprob$ is $\Prob(\Next \psi)$}{
				{$succ := $ \FSucc{$\Ctxtup$} \;} \label{line:calc-succ}
				{$\textit{sum} := 0$ \;}
				\ForEach(\tcp*[f]{$p$: transition probability}){$(\statedict[t], p) \in succ$}{
					\lIf{\FCheck{$\Ctxtup[][][{\statedict[t]}], \psi$}}{
						$\textit{sum} := \textit{sum} + p$
					}
				}
				\Return{$\textit{sum}$}
			}
			\ElseIf{$\pprob$ is $\Prob(\psi_1 \Until \psi_2)$}{ 
				{$\Sat(\psi_1) := \{\vec{t} \in \states^n \mid $ \FCheck{$\Ctxtup[][][{\statedict[t]}], \psi_1$}$\}$ \tcp*{$n = |\statedict|$}} \label{line:psi1} 
				{$\Sat(\psi_2) := \{\vec{t} \in \states^n \mid $ \FCheck{$\Ctxtup[][][{\statedict[t]}], \psi_2$}$\}$ \;} \label{line:psi2}
				{Calculate transition matrix $\Trans^{\mapStrat}$ of $\tbg^{\mapStrat} := \dtmc := \bigtimes_{\varstate_i \in \Dom(\statetupMap)} \tbg^{\indexi[\strat]}$ where $\indexi[\strat] := \bigoplus_{\agent \in \agents} \mapStrat(\indexi[\varstate], g)$ for $i \in \Dom(\statetupMap)$ \;} \label{line:matrix}
				{$\states_{=0} := \{ \vec{t} \in \states^n \mid \Pr(\vec{t} \models \Sat(\psi_1) \Until \Sat(\psi_2)) = 0\}$ \;} \label{line:S0}
				{$\states_{=1} := \{ \vec{t} \in \states^n \mid \Pr(\vec{t} \models \Sat(\psi_1) \Until \Sat(\psi_2)) = 1\}$ \;} \label{line:S1}
				{$\states_{?} := \states^n \setminus (\states_{=0} \cup \states_{=1})$ \;} \label{line:S?}
				{Solve $x = (\Trans^{\mapStrat}(\vec{t}, \vec{t'}))_{\vec{t}, \vec{t'} \in \states_{?}} \cdot x + (\Trans^{\mapStrat}(\vec{t}, \states_{=1}))_{\vec{t} \in \states_{?}}$\;} \label{line:lgs}
				\Return{$x$}
			}
			\ElseIf{$\pprob$ is $f(\psi_1, \ldots, \psi_j)$}{
				\Return{$f($\FCalc{$\Ctxtup, \psi_1$}, \ldots, \FCalc{$\Ctxtup, \psi_j$}$)$}
			}
		}
	\end{algorithm}
\end{figure}

\begin{lemma}[Time Complexity]
	For a turn-based game $\tbg$ with $|\states|>1$ and $|\Act|>1$, and a well-formed \HyperSGL formula $\phi$, calling $\FCheck(\tbg, (), (), \phi)$ results in a running time complexity of 
	\[O \left( |\phi| \cdot |\states|^{n |\phi|} \cdot |\Act|^{(|\states| + n) |\phi|} \cdot |\states|^n \right), \] 
	where 
	$n$ is the number of state quantifiers in $\phi$.
\end{lemma}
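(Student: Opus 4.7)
The plan is to prove the bound by structural induction on the subformulas processed by \FCheck and \FCalc, showing that the running time scales multiplicatively with each kind of operation encountered along the recursion. Define $T(\psi)$ as the worst-case running time of a call to \FCheck (or \FCalc) on subformula $\psi$ under any fixed context $\Ctxtup$. The overall bound then follows by combining these per-subformula bounds with the cost of enumerating the state-quantifier prefix.

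For the inductive step I would handle each syntactic case separately. Atomic formulas ($\true$, $\ap_{\varstate_i}$, $\varstate_i = \varstate_j$) take constant time, and Boolean combinations add the costs of their direct subformulas. Each strategy quantifier $\existsStrat{\setofagents}{\setofvariables}$ or $\forallStrat{\setofagents}{\setofvariables}$ contributes a multiplicative factor of at most $|\Act|^{|\states|}$, since a memoryless deterministic $\setofagents$-strategy is determined by a choice of action at each of the at most $|\states|$ states. For probability expressions, \FCalc enumerates the product state tuples in $\states^n$: in the Next case, line~\ref{line:calc-succ} computes successors via \FSucc, which iterates over action tuples in $\Act^n$ for each of the $|\states|^n$ tuples and then performs a recursive \FCheck on $\psi$ for every successor; in the Until case, lines~\ref{line:psi1}--\ref{line:psi2} trigger $|\states|^n$ \FCheck calls on $\psi_1$ and $\psi_2$, line~\ref{line:matrix} builds the transition matrix of the composed DTMC (of size polynomial in $|\states|^n$, again requiring an iteration over action tuples), and lines~\ref{line:S0}--\ref{line:lgs} compute the $\Sat$-based sets and solve a linear equation system of size $O(|\states|^n)$, all in time polynomial in $|\states|^n$.

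With these per-level estimates in hand, the recursion tree for \FCheck on $\phi$ has depth at most $|\phi|$. At every level we can incur a factor of at most $|\states|^n \cdot |\Act|^{|\states|+n}$: the $|\states|^n$ accounts for the enumeration of composed state tuples inside a probability expression, $|\Act|^{|\states|}$ for a nested strategy quantifier, and $|\Act|^n$ for the enumeration of action tuples in \FSucc and in the computation of the composed transition matrix. Multiplying these contributions along the $|\phi|$ levels of recursion, adding the global factor $|\states|^n$ from the at most $n$ leading state quantifiers handled at the top of \FCheck, and absorbing the polynomial matrix/linear-system overhead into the $|\states|^{n|\phi|}$ term yields the claimed bound $O\bigl(|\phi| \cdot |\states|^{n|\phi|} \cdot |\Act|^{(|\states|+n)|\phi|} \cdot |\states|^n\bigr)$.

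The main obstacle will be the Until case, which combines three sources of blowup in a single step: recursive \FCheck invocations over all $|\states|^n$ state tuples, construction of the composed transition matrix via enumeration over $\Act^n$, and the cost of solving the resulting linear equation system. A secondary subtlety is handling nested strategy quantifiers, since each such quantifier updates the $\mapStrat$ dictionary and thereby silently changes the induced DTMC used in subsequent probability computations; the argument must show that the strategy-enumeration factor $|\Act|^{|\states|}$ can be charged once per quantifier regardless of these reassignments, so that the depth-based multiplicative counting remains valid.
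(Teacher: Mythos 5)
Your plan is correct and follows essentially the same route as the paper's proof: a structural induction on subformulas bounding $time(\psi) \le \kappa \cdot |\psi| \cdot |\states|^{n|\psi|} \cdot |\Act|^{|\states| m_{\psi} + n|\psi|}$ (the paper tracks the strategy-quantifier nesting depth $m_{\psi} \le |\psi|$ before relaxing it), multiplied by the outer $|\states|^n$ for the state-quantifier prefix, with exactly your per-case factors --- $|\Act|^{|\states|}$ per strategy quantifier, $|\states|^n$ recursive calls and $|\Act|^n$ action-tuple enumeration for Next/Until, and the $O(|\states|^{3n})$ linear system absorbed into the dominant term. The one point to phrase carefully is that the recursion tree branches (conjunction, Until), so one cannot literally multiply a per-level factor along the depth; the paper handles this inside the induction by using $|\psi_1| + |\psi_2| < |\phi|$ so that sibling costs add up under the common bound, which your $T(\psi)$-based induction would do in the same way.
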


\begin{proof}
	Let $\tbg$ be a turn-based game, $\phi' = Q_1 \varstate_{1} \ldots Q_n \varstate_n .\ \pschedq$ a well-formed \HyperSGL formula. 
	In the following, we use $time(\psi)$ to denote the running time of a recursive call to $\FCheck$ or $\FCalc$, respectively, for a subformula $\psi$ of $\phi'$ in the context of $\FCheck(\tbg, (), (), \phi')$. 
	
	In the worst case, calling $\FCheck(\tbg, (), (), \phi')$ results in a recursive call to $\FCheck(\tbg, (), \statedict, \pschedq)$ for each possible mapping $\statedict$ of the $n$ state variables. 
	The runtime of $\FCheck(\tbg, (), (), \phi')$ can hence be bounded by $time(\pschedq) \cdot |\states|^n$. 
	
	Let us now analyze the running time of a call to $\FCheck(\tbg, \mapStrat, \statedict, \phi)$ or $\FCalc(\tbg, \mapStrat, \statedict, \phi)$ for a subformula $\phi$ of $\phi'$ that does not contain state quantifiers, 
	a mapping of the strategies $\mapStrat$, 
	and a mapping of the state variables $\statedict$. 
	We show that there exists a constant $\kappa$ such that $time(\phi) \leq \kappa \cdot |\phi| \cdot |\states|^{n |\phi|} \cdot  |\Act|^{|\states| m_{\phi} + n |\phi|}$, where $n$ is the number of state quantifiers in $\phi'$ and $m_{\phi}$ is the maximal number of strategy quantifications along any path in the syntax tree of $\phi$.
	Clearly, $m_{\phi} \leq |\phi|$.
	We proceed by structural induction on $\phi$.	
	
	If $\phi = \true$ or $\phi = \ap_{\varstate}$, then the claim follows directly.
	
	If $\phi = \existsStrat{\setofagents}{\varstate_1, \ldots, \varstate_n} \psi$, then 
	$m_{\phi} = m_\psi + 1$ and $|\phi| > |\psi|$.
	In the worst case, calling $\FCheck(\tbg, \mapStrat, \statedict, \phi)$ results in a call to $\FCheck(\tbg, \mapStrat^\strat, \statedict, \psi)$ for all memoryless deterministic $\strat \in \Strats$. 
	The number of memoryless deterministic strategies for $\setofagents$ can be bounded by $|\Act|^{|\states|}$.
	Hence, we have
	\begin{align*}time(\phi) 
		&\leq time(\psi) \cdot |\Act|^{|\states|}
		\\&\leq \kappa \cdot |\psi| \cdot |\states|^{n |\psi|} \cdot |\Act|^{|\states| m_{\psi} + n |\psi|} \cdot |\Act|^{|\states|}
		\\&= \kappa \cdot |\psi| \cdot |\states|^{n |\psi|} \cdot |\Act|^{|\states| m_{\phi} + n |\psi|}
		\\&\leq \kappa \cdot |\phi| \cdot |\states|^{n |\phi|} \cdot |\Act|^{|\states| m_{\phi} + n |\phi|} .
	\end{align*}
	The reasoning for universal strategy quantification follows analogously.

	If $\phi = \psi_1 \wedge \psi_2$, then $|\phi| > |\psi_1| + |\psi_2|$
	and
	$m_{\phi} = \max_i{m_{\psi_i}}$, and we have
	\begin{align*}time(\phi) 
		&= time(\psi_1) + time(\psi_2) + 1
		\\&\leq \kappa \cdot |\psi_1| \cdot |\states|^{n |\psi_1|} \cdot |\Act|^{|\states| m_{\psi_1} + n |\psi_1|}  
		\\&\quad 		
		+ \kappa \cdot |\psi_2| \cdot |\states|^{n |\psi_2|} \cdot |\Act|^{|\states| m_{\psi_2} + n |\psi_2|}  
		+ 1
		\\&\leq \kappa \cdot \big(|\psi_1| + |\psi_2|\big) \cdot |\states|^{n |\phi|} \cdot |\Act|^{|\states| m_{\phi} + n |\phi|} + 1
		\\&\leq \kappa \cdot |\phi| \cdot |\states|^{n |\phi|} \cdot  |\Act|^{|\states| m_{\phi} + n |\phi|} .
	\end{align*}
	The reasoning for $\phi = \neg \psi$, $\phi = \pprob_1 \sim \pprob_2$ and $\phi = f(\pprob_1, \ldots, \pprob_j)$ follows analogously.

	If $\phi = \Prob(\psi_1 \Until \psi_2)$, then $|\phi| > |\psi_1| + |\psi_2|$
	and $m_{\phi} = \max_i{m_{\psi_i}}$. 
	In \cref{alg:calc}, \cref{line:psi1,line:psi2}, we need to call \FCheck{$\tbg, \mapStrat, \statedict[t], \psi_1$} and \FCheck{$\tbg, \mapStrat, \statedict[t], \psi_2$}, respectively, for all possible state variable mappings $\statedict[t]$ of size $n$.
	Calculating the transition matrix of the induced DTMC (Alg.\ \ref{alg:calc}, \cref{line:matrix}) takes $O( n |\states|^n |\Act|^n)$. 
	The qualitative sets (Alg.\ \ref{alg:calc}, \cref{line:S0,line:S1,line:S?}) can be computed in $O(|\states|^{2 n})$. 
	The linear equation system created in \cref{alg:calc}, \cref{line:lgs} has at most $|\states|^n$ equations and can be solved in at most $O(|\states|^{3 n})$.
	In summary, the running time of \cref{alg:calc}, \cref{line:matrix,line:S0,line:S1,line:S?,line:lgs} can be bounded by $O(|\states|^{3 n} |\Act|^n)$ since $n \leq |\states|^n$ for $|\states|>1$.
	Hence, there exists a constant $c$ such that
	\begin{align*}time(\phi) 
		&= |\states|^{n} \cdot time(\psi_1) + |\states|^{n} \cdot time(\psi_2) + c \cdot |\states|^{3 n} \cdot |\Act|^n
		\\&\leq |\states|^n \cdot \kappa \cdot |\psi_1| \cdot |\states|^{n |\psi_1|} \cdot |\Act|^{|\states| m_{\psi_1} + n |\psi_1|} 
		\\&\quad + |\states|^n \cdot \kappa \cdot |\psi_2| \cdot |\states|^{n |\psi_2|} \cdot |\Act|^{|\states| m_{\psi_2} + n |\psi_2|}  
		+ c \cdot |\states|^{3 n}\cdot |\Act|^n 
		\\&\leq \kappa \cdot |\psi_1| \cdot |\states|^{n |\phi|} \cdot |\Act|^{|\states| m_{\phi} + n |\phi|} 
		\\&\quad 
		+ \kappa \cdot |\psi_2| \cdot |\states|^{n |\phi|} \cdot |\Act|^{|\states| m_{\phi} + n |\phi|}  
		+ c \cdot |\states|^{n |\phi|} \cdot |\Act|^{|\states| m_{\phi} + n |\phi|} .
	\end{align*}
	If we choose $\kappa \geq c$, we can conclude that
	\begin{align*}time(\phi) 
		&\leq \kappa \cdot (|\psi_1| + |\psi_2| + 1) \cdot |\states|^{n |\phi|} \cdot |\Act|^{|\states| m_{\phi} + n |\phi|}  
		\\&\leq \kappa \cdot |\phi| \cdot |\states|^{n |\phi|} \cdot |\Act|^{|\states| m_{\phi} + n |\phi|} .
	\end{align*}
	
	If $\phi = \Prob(\Next \psi)$, then $|\phi| > |\psi|$ and thus $|\phi|\geq 2$
	and $m_{\phi} = m_{\psi}$.
	The successor state tuples of $\statedict$ in the induced composed DTMC (Alg.\ \ref{alg:calc}, \cref{line:calc-succ}) can be computed in $O( n |\states|^n |\Act|^n)$.
	Hence, there exists a constant $c'$ such that 
	\begin{align*}time(\phi) 
		&\leq |\states|^{n} \cdot (time(\psi_1) + 1) + c' \cdot n \cdot |\states|^{n} \cdot |\Act|^n 
		\\&\leq |\states|^n \cdot \kappa \cdot |\psi_1| \cdot |\states|^{n |\psi_1|} \cdot |\Act|^{|\states| m_{\psi_1} + n |\psi_1|} 
		+ |\states|^n + c' \cdot |\states|^{2 n} \cdot |\Act|^n
		\\&\leq |\states|^n \cdot \kappa \cdot |\psi_1| \cdot |\states|^{n |\psi_1|} \cdot |\Act|^{|\states| m_{\psi_1} + n |\psi_1|}  
		+ c' \cdot |\states|^{2 n} \cdot (|\Act|^{n} +1)
		\\&\leq \kappa \cdot |\psi_1| \cdot |\states|^{n (|\psi_1| + 1)} \cdot |\Act|^{|\states| m_{\psi_1} + n |\psi_1|} 
		+ c' \cdot |\states|^{2 n} \cdot |\Act|^{|\states| m_{\phi} + n |\phi|} 
		\\&\leq \kappa \cdot |\psi_1| \cdot |\states|^{n |\phi|} \cdot |\Act|^{|\states| m_{\phi} + n |\phi|} 
		+ c' \cdot |\states|^{n |\phi|} \cdot |\Act|^{|\states| m_{\phi} + n |\phi|} .
	\end{align*}
	If we choose $\kappa \geq c'$, we can conclude that
	\begin{align*}time(\phi) 
		&\leq \kappa \cdot |\psi_1| \cdot |\states|^{n |\phi|} \cdot |\Act|^{|\states| m_{\phi} + n |\phi|} 
		+ \kappa \cdot |\states|^{n |\phi|} \cdot |\Act|^{|\states| m_{\phi} + n |\phi|} 
		\\&\leq \kappa \cdot |\phi| \cdot |\states|^{n |\phi|} \cdot |\Act|^{|\states| m_{\phi} + n |\phi|} .
	\end{align*}
	\hfill\qed	
\end{proof}
	\section{Fixed Number of State Quantifiers over Memoryless Deterministic Strategies (\cref{th:fixedNoQuant} from \cref{sec:mc})}
\label{app:mc-fixed-number-quantifiers}

It remains an open questions whether model-checking \HyperSGL over memoryless deterministic schedulers is possible in \PSPACE. 
Naive approaches fail because, in the worst case, we have to check all possible combinations for state variable assignments, and the number of these combinations is exponential in the number of state quantifiers.
For a fixed number of state quantifiers, however, we can show that the model-checking problem for \HyperSGL over memoryless deterministic strategies is in \PSPACE and \PSPACE-hard.

We use $\modelsMD$ to denote that we restrict the quantification over strategies to \textbf{m}emoryless \textbf{d}eterministic strategies.

\fixedNoQuant*

\begin{proof} 
	\emph{Hardness:}
	We can show that the \HyperSGL model-checking problem over memoryless deterministic strategies is \PSPACE-hard even for a single state quantifier by reduction from the \emph{quantified Boolean formula} problem (\emph{QBF}).
	We can use the construction used for showing \PSPACE-hardness of \SGL model-checking over memoryless deterministic strategies presented in \cite{baierStochasticGame2012}, because the formula constructed there can be directly translated to \HyperSGL with a single state quantifier.
	The semantical differences between \SGL and \HyperSGL detailed in \cref{sec:sgl} and \cref{app:sgl-proof} do not matter here because 
	(1) there is no implicit strategy quantification and 
	(2) we only quantify over memoryless deterministic strategies, so it does not matter whether strategies are \enquote{shifted} by temporal operators.
	
	\paragraph{Membership:}
	Let $\tbg$ be a turn-based game, and $\phi' = Q_1 \varstate_{1} \ldots Q_\numstatequant \varstate_\numstatequant .\ \pschedq$ a well-formed \HyperSGL formula for some $\numstatequant \in \N$. 
	We can decide whether $\tbg \models \phi'$ by iterating over all $(\state_1, \ldots, \state_\numstatequant) \in \states^n$ and 
	checking
	whether $\Ctxtup[][\emptymap][\statetupMap_{\state_1, \ldots, \state_\numstatequant}] \modelsMD \pschedq$, where $\statetupMap_{\state_1, \ldots, \state_\numstatequant}(i) = \state_i$ for $i=1, \ldots, \numstatequant$. 
	If we store the model-checking result for state tuple $(\state_1, \ldots, \state_\numstatequant) \in \states^n$ in a Boolean variable $\textit{holds}_{\pschedq, (\state_{1}, ..., \state_{\numstatequant})}$, we can resolve the quantification by computing 	
	\[ \textit{holds}_{\phi'} := \circ^{Q_1}_{\state_{1} \in \states} \ldots \circ^{Q_\numstatequant}_{\state_{\numstatequant} \in \states} \textit{holds}_{\pschedq, (\state_{1}, ..., \state_{\numstatequant})}\] where 
	$\circ^{Q_i} = \sum_{}$ if $Q_i = \exists$ and $\circ^{Q_i} = \prod_{}$ otherwise.
	Then, $\textit{holds}_{\phi'} >0$ if and only if $\tbg \modelsMD \phi'$.
	The number of recursive model-checking queries and the number of performed arithmetic operations are both polynomial in the size of the input (since $\numstatequant$ is fixed).
	
	Let us now consider the model-checking complexity for subformulas of $\pschedq$. 
	We adapt the proof for showing that model-checking \SGL over memoryless deterministic strategies is in \PSPACE~\cite{baierStochasticGame2012}.
	We will recursively define the type of a non-state-quantified \HyperSGL formula as a class of the polynomial time hierarchy, and then show that this type yields an upper bound for the model-checking complexity of the class of formulas of this type. 
	Since the polynomial hierarchy is contained in \PSPACE, this shows that the model-checking problem for \HyperSGL formulas with a fixed number of state quantifiers over memoryless deterministic strategies is in \PSPACE.
	
	Recall the polynomial hierarchy with $\Delta_0 = \Sigma_0 = \Pi_0 = \P$ and $\Delta_{i+1} = \P^{\Sigma_i}$, $\Sigma_{i+1} = \NP^{\Sigma_i}$, $\Pi_{i+1} = \coNP^{\Sigma_i}$
	and that the polynomial hierarchy is contained in \PSPACE
	\cite{papadimitriouComputationalComplexity1994}.
	We define the maximum of a set of classes $\{\Lambda_{i_1}, \ldots, \Lambda_{i_j}\}$ with $\Lambda \in \{\Delta, \Sigma, \Pi \}$ and $m := \Max{i_1, \ldots, i_j}$ as follows: 
	\[\Max{\Lambda_{i_1}, \ldots, \Lambda_{i_j}} = \begin{cases}
		\Delta_m & \text{if } \Lambda_{i_l} = \Delta_{i_l} \text{ for all } l \text{ with } i_l = m \\
		\Sigma_m & 
		\text{if } \Lambda_{i_l} \in \{\Pi_{i_l}, \Sigma_{i_l}\} \text{ for some } l \text{ with } i_l = m \footnotemark.
	\end{cases}\] 
	\footnotetext{We do not need to distinguish $\Pi_l$ and $\Sigma_l$ because $\Max{}$ will only be used in $\P^{\Max{}}$ and $\P^{\Sigma_l} = \P^{\Pi_l}$.} 
	
	\noindent The type of a non-state-quantified \HyperSGL formula is defined inductively:	
	\[\begin{array}{lcl}
		\Type(\true) & = & \Delta_0 \\
		\Type(\ap_{\varstate}) & = & \Delta_0 \\
		& & \\
		\Type(\pnonquant_1 \wedge \pnonquant_2) & = & 
		\P^{\Max{\Type(\pnonquant_1), \Type(\pnonquant_2)}} \\
		& = & 
		\begin{cases}
			\Delta_l & \text{if } \Max{\Type(\pnonquant_1), \Type(\pnonquant_2)} = \Delta_l \\
			\Delta_{l+1} & \text{if } \Max{\Type(\pnonquant_1), \Type(\pnonquant_2)} = \Sigma_l
		\end{cases}\\
		\Type(\pprob_1 \sim \pprob_2) & = & \P^{\Max{\Type(\pprob_1), \Type(\pprob_2)}} \\
		\Type(\Prob(\Next \pnonquant)) & = & \P^{\Type(\pnonquant)} \\
		\Type(\Prob(\pnonquant_1 \Until \pnonquant_2)) & = & \P^{\Max{\Type(\pnonquant_1), \Type(\pnonquant_2)}} \\
		\Type(f(\pprob_1, \ldots, \pprob_j)) & = & \P^{\Max{\Type(\pnonquant_1), \ldots, \Type(\pnonquant_j)}} \\
		& & \\
		\Type(\existsStrat{\setofagents}{\varstate_{i_1}, \ldots, \varstate_{i_m}} \pschedq) & = & \NP^{\Type(\pschedq)} \\
		& = & 
		\begin{cases}
			\Sigma_l & \text{if } \Type(\pschedq) = \Delta_l \\
			\Sigma_{l+1} & \text{if } \Type(\pschedq) = \Sigma_l \text{ or } \Type(\pschedq) = \Pi_l
		\end{cases} \\
		& & \\
		\Type(\neg\pnonquant) & = & \begin{cases}
			\Delta_l & \text{if } \Type(\pnonquant) = \Delta_l \\
			\Pi_l & \text{if } \Type(\pnonquant) = \Sigma_l  \\
			\Sigma_l & \text{if } \Type(\pnonquant) = \Pi_l			
		\end{cases}\\
		\Type(\forallStrat{\setofagents}{\varstate_{i_1}, \ldots, \varstate_{i_m}} \pschedq) & = & \coNP^{\Type(\pschedq)} \\
		& = & \begin{cases}
			\Pi_l & \text{if } \Type(\pschedq) = \Delta_l \\
			\Pi_{l+1} & \text{if } \Type(\pschedq) = \Sigma_l \text{ or } \Type(\pschedq) = \Pi_l 
		\end{cases}  \\
	\end{array}\]
	
	The set of all \HyperSGL formulas can thus be divided into sets of classes of the same type.
	We analyze the model-checking complexity for each such class separately, and show that it corresponds exactly to the type of this class. 
	More specifically, we show the following claim: 
	For a subformula $\phi$ of $\pschedq$, a turn-based game $\tbg = \tbgtup$, a strategy mapping $\mapStrat \colon \VarsState \times \agents \rightharpoonup \bigcup_{\agent \in \agents} \Strats[][\{\agent\}]$, and a state variable mapping $\statetupMap \colon \N \rightharpoonup \states$ with $|\Dom(\statetupMap)| = n$, 
	\begin{enumerate}
		\item \label{enum:fixed:quant}
		If $\phi$ is a 
		strategy-quantified or non-quantified formula, then deciding whether $\Ctxtup \modelsMD \phi$ is possible in $\Type(\phi)$, and 
		
		\item \label{enum:fixed:prob}
		If $\phi$ is a probability expression, then calculating the value $\llbracket \phi \rrbracket_{\Ctxtup}$ is possible in $\Type(\phi)$. 
	\end{enumerate}
	We proceed by induction on the structure of $\phi$.
	
	\begin{itemize}
	\item \emph{Cases $\phi = \true$, $\phi = \ap_{\varstate}$.} 
	The claim follows directly.

	\item \emph{Case $\phi = \pnonquant_1 \wedge \pnonquant_2$.} 
	By induction hypothesis, the problem of deciding whether $\Ctxtup \modelsMD \pnonquant_i$ is in $\Type(\pnonquant_i)$ for $i=1, 2$. 
	
	If $\Max{\Type(\pnonquant_1), \Type(\pnonquant_2)} = \Delta_l$, then also $\Ctxtup \modelsMD \pnonquant_1 \wedge \pnonquant_2$ can be solved in $\Delta_l$.
	
	Otherwise, $\Max{\Type(\pnonquant_1), \Type(\pnonquant_2)} = \Sigma_l$.
	Then, $\Ctxtup \modelsMD \pnonquant_1 \wedge \pnonquant_2$ can be solved by a deterministic polynomial-time Turing machine with an oracle for $\Sigma_l$, i.e., $\Type(\phi) = \P^{\Sigma_l} = \Delta_{l+1}$.

	\item \emph{Cases $\phi = \pprob_1 \sim \pprob_2$, $\phi = f(\pprob_1, \ldots, \pprob_j)$.} Analogously.

	\item \emph{Case $\phi = \Prob(\Next \pnonquant_1)$.}
	We can construct the set $\Sat(\pnonquant_1)$ of all $\statetupMap[t] \colon \N \to \states$ with $\Dom(\statetupMap[t]) = \Dom(\statetupMap)$ such that $\Ctxtup[][][{\statetupMap[t]}] \modelsMD \pnonquant_1$ using $|\states|^n$ calls to an oracle for $\Type(\pnonquant_1)$.
	Then, we can evaluate $\phi$ in $\tbg^{\mapStrat}_{\statetupMap}$ by summing over the probability of transitioning from $\statetupMap$ to $\statetupMap[t]$ for all $\statetupMap[t] \in \Sat(\pnonquant_1)$.
	Hence, $\Type(\phi) = \P^{\Type(\pnonquant_1)}$ since $n$ is fixed.

	\item \emph{Case $\phi = \Prob(\pnonquant_1 \Until \pnonquant_2)$.}
	For all $\statetupMap[t] \colon \N \to \states$ with $\Dom(\statetupMap[t]) = \Dom(\statetupMap)$ and $i=1,2$, we can check whether $\Ctxtup[][][{\statetupMap[t]}] \modelsMD \pnonquant_i$ using an oracle for $\Type(\pnonquant_i)$.
	Using these results, we can evaluate $\phi$ in $\tbg^{\mapStrat}_{\statetupMap}$ by solving a linear equation system with $|\states|^n$ equations\footnote{Here the reasoning would break if $n$ is not fixed or we do not bound the memory of the strategies.}, which is possible in time $O(|\states|^{3n})$.
	Hence, we can calculate $\llbracket \phi \rrbracket_{\Ctxtup}$ using $2 \cdot |\states|^n$ calls to an oracle for $\Max{\Type(\pnonquant_1), \Type(\pnonquant_2)}$ and thus $\Type(\phi) = \P^{\Max{\Type(\pnonquant_1), \Type(\pnonquant_2)}}$.

	\item \emph{Case $\phi = \existsStrat{\setofagents}{\varstate_{i_1}, \ldots, \varstate_{i_m}} \ \pschedq$.} 
	We can guess a memoryless deterministic $\setofagents$-strategy $\alpha$ and check whether $\Ctxtup[][\mapStrat'] \modelsMD \pschedq$, 
	where $\mapStrat'$ corresponds to $\mapStrat$ updated by $\alpha$ for $\varstate_{i_1}, \ldots, \varstate_{i_m}$. 
	By induction hypothesis, $\pschedq$ can be model-checked in $\Type(\pschedq)$ and hence $\Type(\phi) = \NP^{\Type(\pschedq)}$.

	\item \emph{Case $\phi = \neg \pnonquant$.}
	If $\Type(\pnonquant) = \Delta_l$ for some $l \in \N$, then there exists a deterministic polynomial-time Turing machine with an oracle for $\Sigma_{l-1}$ solving $\Ctxtup \modelsMD \pnonquant$. By negating this output, we can construct a deterministic polynomial-time Turing machine with an oracle for $\Sigma_{l-1}$ for solving $\Ctxtup \modelsMD \phi$ and $\Type(\phi) = \Delta_l$ in this case.
	
	If $\Type(\pnonquant) = \Sigma_l$, then we can verify yes-instances of $\Ctxtup \modelsMD \pnonquant$ using a nondeterministic polynomial-time Turing machine with an oracle for $\Sigma_{l-1}$. 
	Hence, we can use this Turing machine to verify no-instances of $\Ctxtup \modelsMD \phi$ and $\Type(\phi) = \Pi_l$.
	
	If $\Type(\pnonquant) = \Pi_l$, the reasoning follows analogously.

	\item \emph{Case $\phi = \forallStrat{\setofagents}{\varstate_{i_1}, \ldots, \varstate_{i_m}} \pschedq$.} 
	We can guess a memoryless deterministic $\setofagents$-strategy $\strat$ and check whether $\Ctxtup[][\mapStrat'] \not\modelsMD \pschedq$, 
	where $\mapStrat'$ corresponds to $\mapStrat$ updated by $\alpha$ for $\varstate_{i_1}, \ldots, \varstate_{i_m}$. 
	By induction hypothesis, we can solve $\Ctxtup[][\mapStrat'] \modelsMD \pschedq$ using an oracle for $\Type(\pschedq)$.
	If $\Type(\pschedq) = \Delta_l$, then solving $\Ctxtup[][\mapStrat'] \not\modelsMD \pschedq$ is also possible in $\Delta_l$ and $\Type(\phi) = \coNP^{\Delta_l}$.
	If $\Type(\pschedq) = \Sigma_l$, then solving $\Ctxtup[][\mapStrat'] \not\modelsMD \pschedq$ is possible in $\Pi_l$, and $\Type(\phi) = \coNP^{\Pi_l} = \coNP^{\Sigma_l}$.
	Analogously for $\Type(\pschedq) = \Pi_l$.
	In summary, $\Type(\phi) = \coNP^{\Type(\pschedq)}$.
	\end{itemize}
	
	In conclusion, deciding whether $\tbg \modelsMD Q_1 \varstate_{1} \ldots Q_\numstatequant \varstate_\numstatequant \ \pschedq$ is possible in $\P^{\Type(\pschedq)}$.	
	\hfill\qed
\end{proof}
}{}

\end{document}